\newcommand{\defeq}{\mathrel{\mathop:}=}
\newcommand{\eqdef}{=\mathrel{\mathop:}}
\newcommand{\G}{\mathcal{G}}
\newcommand{\X}{\mathsf{X}}  
\newcommand{\Sp}{\mathsf{S}} 
\newcommand{\uarg}{\,\cdot\,}
\newcommand{\ud}{\mathrm{d}}
\newcommand{\R}{\mathbb{R}}
\newcommand{\N}{\mathbb{N}}
\renewcommand{\P}{\mathbb{P}}
\newcommand{\E}{\mathbb{E}}
\newcommand{\charfun}[1]{\mathbbm{I}\left\{#1\right\}}
\newcommand{\given}{\,:\,}
\newcommand{\var}{\mathrm{var}}
\newcommand{\tv}{\mathrm{tv}}
\newcommand{\lest}{\le_{\mathrm{st}}}
\newcommand{\gest}{\ge_{\mathrm{st}}}
\newcommand{\bigmid}{\;\big|\;}
\newcommand{\eqd}{\overset{d}{=}}
\newcommand{\theepsilon}{\boldsymbol{\epsilon}}
\newcommand{\thedelta}{\boldsymbol{\delta}}
\newcommand{\Categorical}{\mathrm{Categ}}
\newtheorem{theorem}{Theorem}
\newtheorem{lemma}[theorem]{Lemma}
\theoremstyle{definition}
\newtheorem{definition}[theorem]{Definition}
\newtheorem{assumption}[theorem]{Assumption}
\theoremstyle{remark}
\newtheorem{remark}[theorem]{Remark}
\title{Coupled conditional backward sampling particle filter}
\author{Anthony Lee \and Sumeetpal S.~Singh \and Matti Vihola}
\address{Anthony Lee, School of Mathematics, University of Bristol,
  University Walk, Bristol BS8 1TW, United Kingdom}
\address{Sumeetpal S.~Singh,
  Department of Engineering, University of Cambridge, Trumpington
  Street, Cambridge CB2 1PZ, United Kingdom}
\address{Matti Vihola, Department of Mathematics and Statistics, University of
Jyväskylä P.O.Box 35, FI-40014 University of Jyväskylä, Finland}
\keywords{backward sampling,
  convergence rate,
  coupling,
  conditional particle filter,
  unbiased}
\subjclass[2010]{Primary 65C05; secondary 60J05, 65C35, 65C40}
\begin{document}

\maketitle

\begin{abstract} 
The conditional particle filter (CPF) is a promising algorithm for general
hidden Markov model smoothing. Empirical evidence suggests that the
variant of CPF with backward sampling (CBPF) performs well even with
long time series. Previous theoretical results have not been able to
demonstrate the improvement brought by backward sampling, whereas we
provide rates showing that CBPF can remain effective with a fixed
number of particles independent of the time horizon. Our result is
based on analysis of a new coupling of two CBPFs, the coupled
conditional backward sampling particle filter (CCBPF). We
show that CCBPF has good stability properties in the sense that with
fixed number of particles, the coupling time in terms of iterations
increases only linearly with respect to the time horizon under a
general (strong mixing) condition. The CCBPF is useful not only as a
theoretical tool, but also as a practical method that allows for
unbiased estimation of smoothing expectations, following the
recent developments by~\citet{jacob-lindsten-schon}. 
Unbiased estimation has many advantages, such as enabling the construction of
asymptotically exact confidence intervals and straightforward
parallelisation.
\end{abstract} 

\section{Introduction} 

The conditional particle filter (CPF) introduced by
\citet{andrieu-doucet-holenstein} is a Markov Chain Monte Carlo method
that produces asymptotically unbiased samples from the posterior
distribution of the states of a hidden Markov model. The CPF can be
made significantly more efficient by the inclusion of backward
sampling \citep{whiteley-backwards-note}, or equivalently ancestor
sampling \citep{lindsten-jordan-schon}, steps: we refer to the resulting algorithm
as the conditional backward sampling particle filter (CBPF). While there are
many empirical studies reporting on the effectiveness of the CBPF for
Bayesian inference and on its superiority over the CPF \citep[see,
e.g.,][Section~7.2.2]{fearnhead2018particle}, quantitative theoretical
guarantees for the CBPF are still missing. In contrast, the
theoretical properties of the CPF are much better understood
\citep{chopin-singh, lindsten-douc-moulines, andrieu-lee-vihola}.

\cite{chopin-singh} introduced a coupling construction, called the
coupled CPF (CCPF), to prove the uniform ergodicity of the CPF.
Recently,~\cite{jacob-lindsten-schon} identified the potential use of
the CCPF to produce unbiased estimators by exploiting a de-biasing
technique due to~\cite{glynn-rhee} \citep[see
also][]{jacob2017unbiased}. This is an important algorithmic
advancement to particle filtering methodology since unbiased
estimation is useful for estimating confidence intervals, allows
straightforward parallelisation, and when used within a stochastic
approximation context, such as the stochastic approximation expectation
maximisation (SAEM) scheme \citep{delyon-lavielle-moulines}, unbiased estimators ensure martingale
noise, which has good supporting theory.

The methodological contribution of this paper is a relatively simple yet
important algorithmic modification to the CCPF by extending the CCPF
to include backward sampling steps through an index-coupled version of
\citeauthor{whiteley-backwards-note}'s
(\citeyear{whiteley-backwards-note}) backward sampling CPF.  This
approach, which we call the coupled conditional backward sampling
particle filter (CCBPF), gives theoretical insight to the behaviour of
the CBPF. It can also be used practically to facilitate
unbiased estimation. The CCBPF appears to be far more stable than the
CCPF \citep{chopin-singh} (and \citet{jacob-lindsten-schon}'s variant
that uses coupled ancestor sampling within the CCPF). Under a general
(but strong) mixing condition, we prove (Theorem~\ref{thm:ccbpf-coupling-time})
that the coupling time of CCBPF grows
at most linearly with length of the data record when a fixed number of
particles are used, provided this fixed number is sufficiently large.
From a computational perspective, this makes the CCBPF algorithm more appropriate
than alternative coupled CPFs when the length of the data record is very large,
as one only needs to have memory linear in the length of the data record.

As an important corollary of our analysis of the CCBPF, we obtain new
convergence guarantees for the CBPF (Theorem~\ref{thm:main-cbpf}) that
verifies its superiority over the CPF. More specifically, this result
differs from existing time-uniform guarantees for the CPF
\citep{andrieu-lee-vihola,lindsten-douc-moulines} which require
(super)linear growth of the number of particles. Our result confirms
the long held view, stemming from numerous empirical studies, that the
CBPF remains an effective sampler with a fixed number of particles
even as the data record length increases. An important consequence of
a fixed number of particles is that the space complexity (the amount of memory
required) of the
algorithm is linear, as opposed to quadratic, in the length of the
data record, making it feasible to run on long data records without
exhausting the memory available on a computer. We remark that a variant of the
the CPF that is stable with a fixed number of particles is
the blocked version of the CPF introduced by~\cite{singh-lindsten-moulines},
but that algorithm and its analysis are substantially different.

We also complement the empirical findings of~\cite{jacob-lindsten-schon} by
showing quantitative bounds on the
`one-shot' coupling probability of CCPF, that is, probability of
coupling after a single iteration of the algorithm.
These results are noteworthy as CCPF is applicable in some scenarios where CCBPF
is not, for instance when the transition density is intractable.
With the minimal assumption of bounded potentials,
we prove (Theorem~\ref{thm:oneshot-bounds}) that the coupling
probability of CCPF is at least $1-O(N^{-1})$, similar to what is shown
for the CPF \citep{andrieu-lee-vihola,lindsten-douc-moulines}. However, the
constants involved grow very rapidly with $T$.
Under strong mixing conditions, we are able to give a more precise rate of
convergence as $T$ increases (Theorem~\ref{thm:oneshot-rate-result}), which
still requires an exponentially
increasing number of particles $N$ with $T$. (See \citet{andrieu-lee-vihola,lindsten-douc-moulines}
for the CPF's rate of convergence under similar strong mixing assumptions.)
Our rates for the CCPF may be very conservative as empirical evidence
\citep{jacob-lindsten-schon} suggests increasing $N$ linearly with $T$ may be
sufficient for some models, although there is also evidence that $N$ should grow
superlinearly with $T$ for other models.


\section{Notation and preliminaries} 

Throughout the paper, we assume a general state space $\X$, which is
typically $\R^d$ equipped with the Lebesgue measure.  However, our
results hold for any measure space $\X$ equipped with a $\sigma$-finite dominating
measure, which is denoted as `$\ud x$'. Product spaces
are equipped with the related product measures. We use the notation
$a{:}b=a,\ldots,b$ for any integers $a\le b$, and use similar
notation in indexing $x_{a:b} = (x_a,\ldots,x_b)$ and $x^{(a:b)} =
(x^{(a)},\ldots, x^{(b)})$. We also use combined indexing, such that
for instance $x_{1:T}^{(i_{1:T})} = (x_1^{(i_1)},\ldots,
x_T^{(i_T)})$. We adopt the usual conventions concerning empty products and sums, namely  $\prod_a^b(\uarg) = 1$ and
$\sum_a^b(\uarg) = 0$ when  $a>b$. We denote $x\wedge y \defeq
\min\{x,y\}$, $x\vee y \defeq \max\{x,y\}$ and $(x)_+\defeq x \vee 0$.

We use standard notation for the $k$-step transition probability of a
Markov kernel $P$ by $P^k(x,A) \defeq \int P(x,\ud y) P^{k-1}(y,A)$
and $P^0(x,A) \defeq \charfun{x\in A}$. If $\nu$ is a probability
measure and $f$ is a real-valued function, then $(\nu P)(A) \defeq
\int \nu(\ud x) P(x,A)$, $(P f)(x) = \int P(x,\ud y) f(y)$ and $\nu(f)
\defeq \int \nu(\ud x) f(x)$, whenever well-defined. The total
variation metric between two probability measures $\mu,\nu$ is defined
as $\| \mu - \nu \|_\tv \defeq \sup_{|f|\le 1} |\mu(f) - \nu(f)|$, and
$\| f\|_\infty \defeq \sup_x |f(x)|$. If two random variables $X$ and
$Y$ share a common law, we write $X\eqd Y$. We denote the
categorical distribution by $\Categorical(\omega^{(1:N)})$ for
unnormalised probabilities $\omega^{(i)}$, that is,
$I \sim \Categorical(\omega^{(1:N)})$ if
$\P(I=i) = \omega^{(i)}/\sum_{j=1}^N \omega^{(j)}$.

We are interested in computing expectations of smoothing functionals,
with respect to the probability density
$\pi_T(x_{1:T}) \defeq \gamma_T(x_{1:T})/c_T$
on a space $\X^T$ with the following unnormalised density
\citep[cf.][]{del-moral}:
\begin{equation}
    \gamma_T(x_{1:T}) \defeq
    M_1(x_1) G_1(x_1) \prod_{t=2}^T M_t(x_{t-1}, x_t)
    G_t(x_{t-1},x_t),
    \label{eq:pi}
\end{equation}
where
$M_1$ is a probability density, $M_t$ are Markov transition
densities, $G_1:\X\to[0,\infty)$ and
$G_t:\X^2\to[0,\infty)$ for $t\in\{2{:}T\}$ are `potential functions',
and $c_T\defeq \int \gamma_T(x_{1:T})\ud x_{1:T}\in (0,\infty)$ is an
unknown normalising constant. The probability density in~\eqref{eq:pi} also encompasses the posterior density of a
hidden Markov model (HMM) when the pair $(G_t,M_t)$ is defined appropriately. For example, let the HMM be defined by a
Markov state process having initial density $f_1(x_1)$ and
transition densities 
$f_{t}(x_t \mid x_{t-1})$, and an observed process having densities $g_{t}(y_t \mid x_t)$. Given the sequence of observations $y_{1:T}$,
the potentials can be taken to be of the form
\[
    G_1(x_1) = \frac{g_1(y_1\mid x_1) f_1(x_1)}{M_1(x_1)}\quad
    \text{and}\quad
    G_t(x_{t-1},x_t) = \frac{g_t(y_t\mid x_t) f_t(x_{t}\mid
      x_{t-1})}{M_t(x_{t-1},x_t)},
\]
in which case $\pi_T(x_{1:T})$ corresponds to the smoothing distribution
of the HMM, that is, the conditional density of the latent Markov
states given the observations.
In the fairly common case where $M_1=f_1$ and $M_t(x_{t-1},x_t)=f(x_t\mid x_{t-1})$
for $t > 1$, we write $G_t(x_t) := G_t(x_{t-1},x_t) = g_t(y_t \mid x_t)$ to
emphasise that $G_t$ is a function only of $x_t$.

We will consider {two} different conditions for the
model. Assumption~\ref{a:g-minimal} is generally regarded as non-restrictive in the particle filtering literature and
essentially equivalent with the
uniform ergodicity of CPF \citep{andrieu-lee-vihola}.

\begin{assumption}
    \label{a:g-minimal} (Bounded potentials) 
\\          There exists $G^*<\infty$ such that $G_t(\uarg)\le G^*$ for all
$t=1{:}T$.
\end{assumption} 

Assumption~\ref{a:mixing}, which subsumes Assumption~\ref{a:g-minimal}, is a much stronger assumption introduced to prove time-uniform error bounds
of particle filtering estimates \citep{del2001stability}.
It is typically verified for models where $\mathsf{X}$ is compact with
(i) transitions that are likely to move between any two regions of the space and
(ii) potentials that do not vary too much.
Theoretical results using Assumption~\ref{a:mixing} are often indicative of performance in models where it does not quite hold.
The assumption has been replaced by weaker but fairly involved assumptions by~\cite{whiteley2013stability} in the context of
time-uniform error bounds, who also briefly surveys the use of Assumption~\ref{a:mixing}.
Another brief survey can be found in~\cite{douc2011sequential}.
At present, many theoretical papers on particle filters use Assumption~\ref{a:mixing} as weaker alternatives
are very difficult to manipulate theoretically and also to verify.

\begin{assumption}
    \label{a:mixing} (Strong mixing) 
\\ $G_*(1) \defeq
      \inf_{x\in\X} G_1(x)>0$ and $G^*(1) \defeq
      \sup_{x\in\X} G_1(x)<\infty$, and for all $t=2{:}T$,
\begin{enumerate}[(i)]
    \item $M_*(t)
      \defeq \inf_{x,y\in\X} M_t(x,y) > 0$ and
      $M^*(t)\defeq \sup_{x,y\in\X} M_t(x,y)<\infty$,
    \item $G_*(t) \defeq
      \inf_{x,y\in\X} G_t(x,y)>0$ and $G^*(t) \defeq
      \sup_{x,y\in\X} G_t(x,y)<\infty$.
\end{enumerate}
Denote $\thedelta \defeq \min_{t=1:T}
\frac{G_*(t)}{G^*(t)}$ and
$\theepsilon \defeq
\min_{t=1:T-1}
\frac{G_*(t) M_*(t) G_*(t+1) }{G^*(t) M^*(t) G^*(t+1)}$.
\end{assumption} 

\begin{remark} 
Note that $\thedelta \ge \theepsilon > 0$. The expression of constant $\theepsilon$ may be simplified (and
improved) in two special cases, as follows:
\begin{enumerate}[(i)]
    \item If $M_t(x,y) =
      M_t(y)$ for all $t=2{:}T$, then $M_*(t)/M^*(t)$
      may be omitted.
    \item If $G_t(x,y) = G_t(y)$ for $t=2{:}T$, then $G_*(t+1)/G^*(t+1)$
      may be omitted.
\end{enumerate}
In particular, if both hold, then $\theepsilon=\thedelta$.
When results are stated in the asymptotic regime where $T \to \infty$,
Assumption~\ref{a:mixing} should be interpreted as holding for all $T > 1$
with a uniform lower bound $\theepsilon > 0$.

\end{remark} 


\section{Convergence of the conditional backward sampling particle filter}

Before going to the construction of the coupled conditional particle
filters, we consider the conditional particle filter (CPF)
\citep{andrieu-doucet-holenstein} and the conditional backward
sampling particle filter (CBPF) \citep{whiteley-backwards-note} (in
short C\textsc{x}PF where \textsc{x} is a place holder), given in
Algorithm~\ref{alg:cxpf}. Both CPF and CBPF define a reversible Markov
transition with respect to $\pi_T$ \citep{chopin-singh}, with any
choice of the parameter $N\ge 2$ (number of particles).
The important distinction is that in the CPF, $X^{(J_{1:T})}_{1:T}$ is obtained
by tracing the ancestral line of $X_T^{(J_T)}$, whereas in the CBPF ancestors
are selected randomly according to the Markov transition densities and potential
functions.
The CPF and the CBPF have the same time complexity and space complexity, $O(TN)$.
The CBPF takes a constant factor more time due to the additional computations required
to sample random ancestors: if the transition densities are not expensive, this factor will
typically be less than $2$.

We focus first on the important implication of our result for the convergence time of
the CBPF, which applies also to the ancestor sampling implementation
of~\cite{lindsten-jordan-schon} as it is probabilistically equivalent to the
CBPF.

\begin{algorithm}[t]
    \caption{C\textsc{x}PF($X_{1:T}^*,N$).}
    \label{alg:cxpf} 
    \begin{algorithmic}[1]
        \State $X_{1:T}^{(1)} \gets X_{1:T}^*$.
        \State $X_{1}^{(i)} \sim M_1(\uarg)$
        for $i\in \{2{:}N\}$.
        \State $\omega_1^{(i)} \gets G_1(X_1^{(i)})$
        for $i\in \{1{:}N\}$.
        \For{$t=2{:}T$}
        \State $I_t^{(i)} \sim
        \Categorical(\omega_{t-1}^{(1:N)})$ for $i\in \{2{:}N\}$.
        \State $X_t^{(i)} \sim
        M_t(X_{t-1}^{(I_t^{(i)})}, \uarg)$
        for $i\in \{2{:}N\}$.
        \State $\omega_t^{(i)} \gets G_t(X_{t-1}^{(I_t^{(i)})},
        X_t^{(i)})$  for $i\in \{1{:}N\}$.
        \EndFor
        \State $J_T \sim
        \Categorical\big(\omega_T^{(1:N)}\big)$
        \For{$t=(T-1){:}1$}
        \State \textbf{if} \text{CBPF} \textbf{do} \label{line:cxpf-start}
        \State \hskip\algorithmicindent
        $b_t^{(i)} \gets \omega_t^{(i)} M_{t+1}(X_t^{(i)},
        X_{t+1}^{(J_{t+1})}) G_{t+1}(X_t^{(i)}, X_{t+1}^{(J_{t+1})})$
        \State \hskip\algorithmicindent
        $J_t \sim
        \Categorical\big( b_t^{(1:N)}\big)$
        \State \textbf{if} \text{CPF} \textbf{do}
       \State \hskip\algorithmicindent$J_t \gets
       I_{t+1}^{(J_{t+1})}$
       where $I_t^{(1)} = 1$.
       \label{line:cxpf-stop}
        \EndFor
        \State \textbf{output}
        $X_{1:T}^{(J_{1:T})}$
    \end{algorithmic}
\end{algorithm}

\begin{theorem}
    \label{thm:main-cbpf} 
Suppose Assumption~\ref{a:mixing} (strong mixing) holds, and denote by
$P_{T,N}$ the Markov transition probability of CBPF with $N$ particles
(Algorithm~\ref{alg:cxpf}). For any $\rho > 0$, there exists
$N_0=N_0(\theepsilon,\rho)<\infty$
such that for all $N \geq N_0$,
\begin{equation}
\label{eq:linear_convergence}
\lim_{T\to\infty} \sup_{x\in\X} \| P_{T,N}^{\lceil\rho T\rceil}(x, \uarg)  - \pi_T \|_\tv = 0.
\end{equation}
More precisely, for any $\alpha,\beta\in(1,\infty)$
there exists $N_0=N_0(\theepsilon,\alpha,\beta)\in\N$ such that for all $N\ge N_0$:
\begin{enumerate}[(i)]
\item
  \label{item:cbpf-convergence-bound} $\sup_{x\in\X} \| P_{T,N}^k(x, \uarg)  - \pi_T \|_\tv
    \le \alpha^T \beta^{-k}$ for all $k\ge 1$
    and all $T\ge 1$.
\item \label{item:cbpf-convergence-time} For any $\rho
  > \log \alpha/\log \beta$,~\eqref{eq:linear_convergence} holds.
\end{enumerate}
\end{theorem}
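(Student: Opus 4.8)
The plan is to deduce the whole statement from the coupling-time bound for the CCBPF (Theorem~\ref{thm:ccbpf-coupling-time}) via the elementary coupling inequality, so that the only genuine work here is to turn a tail bound on the coupling time into a total-variation bound and then optimise the exponents. First I would recall that $P_{T,N}$ is $\pi_T$-reversible, hence $\pi_T$-invariant, so a chain started from $\pi_T$ stays distributed as $\pi_T$ at every iteration. The CCBPF supplies a faithful, coalescing Markovian coupling of two CBPF chains: each marginal evolves according to $P_{T,N}$, and once the two trajectories coincide they remain identical thereafter. Writing $\tau$ for the meeting time and initialising the first chain at $x$ and the second from $\pi_T$, the coupling inequality gives
\[
\|P_{T,N}^k(x,\uarg) - \pi_T\|_\tv \le \P_{x,\pi_T}(\tau > k) \le \sup_{x'\in\X^T}\P_{x,x'}(\tau > k),
\]
where the last step integrates out the stationary starting point. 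Taking the supremum over $x$ reduces both parts of the theorem to a uniform tail bound on $\tau$.

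For part~\ref{item:cbpf-convergence-bound} I would invoke Theorem~\ref{thm:ccbpf-coupling-time}, which for $N\ge N_0(\theepsilon,\alpha,\beta)$ should furnish a bound of the form $\sup_{x,x'}\P_{x,x'}(\tau>k)\le \alpha^T\beta^{-k}$; combined with the display above this is exactly the asserted estimate. Part~\ref{item:cbpf-convergence-time} is then a short computation: for $\rho > \log\alpha/\log\beta$ and $k=\lceil\rho T\rceil\ge\rho T$, using $\beta>1$,
\[
\sup_{x\in\X}\|P_{T,N}^{\lceil\rho T\rceil}(x,\uarg)-\pi_T\|_\tv \le \alpha^T\beta^{-\lceil\rho T\rceil}\le \bigl(\alpha\beta^{-\rho}\bigr)^T,
\]
and $\rho\log\beta>\log\alpha$ forces the base $\alpha\beta^{-\rho}<1$, so the right-hand side vanishes as $T\to\infty$, establishing~\eqref{eq:linear_convergence}. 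Finally, for the opening statement with a prescribed $\rho>0$, I would fix any $\alpha,\beta\in(1,\infty)$ with $\log\alpha/\log\beta<\rho$ (for instance $\alpha=e$ and $\beta>e^{1/\rho}$) and set $N_0(\theepsilon,\rho):=N_0(\theepsilon,\alpha,\beta)$, after which part~\ref{item:cbpf-convergence-time} applies verbatim.

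The substantive difficulty is entirely absorbed into Theorem~\ref{thm:ccbpf-coupling-time}, which I am treating as given; conditional on it, the only obstacle in the present argument is checking that the coupling inequality is legitimately applicable. Concretely, I must verify that the CCBPF is faithful (each of its two marginals is a bona fide $P_{T,N}$ transition) and coalescing (trajectories that have met stay met), so that $\|P_{T,N}^k(x,\uarg)-\pi_T\|_\tv\le\P(\tau>k)$ holds, and that the coupling-time tail bound is uniform over the non-stationary configuration $x$ so that the supremum over $x$ can be taken after the inequality and the stationary start integrated out. Provided the CCBPF construction carries these two properties, the reduction is clean and what remains is the elementary exponent bookkeeping above.
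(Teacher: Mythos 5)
Your proposal is correct and follows essentially the same route as the paper: the paper likewise deduces part~(\ref{item:cbpf-convergence-bound}) by combining Theorem~\ref{thm:ccbpf-coupling-time} with the coupling inequality integrated over a stationary start (its Lemma~\ref{lem:marginal-corollary}; the faithfulness and coalescence you flag are verified in Section~\ref{sec:algorithm}), and obtains part~(\ref{item:cbpf-convergence-time}) and the opening statement by the same exponent bookkeeping and choice of $\alpha,\beta$ with $\log\alpha/\log\beta<\rho$. The only nit is that under the paper's convention $\|\mu-\nu\|_\tv=\sup_{|f|\le 1}|\mu(f)-\nu(f)|$ the coupling inequality carries a factor $2$ (as in Lemma~\ref{lem:marginal-corollary}) which you drop; this is harmless, since it can be absorbed by invoking Theorem~\ref{thm:ccbpf-coupling-time} with slightly adjusted exponents (e.g.\ $\alpha^{1/2}$ and $\beta^{b}$ with $b\ge 1+\log 2/\log\beta$) and does not affect the limit statements.
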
 
\begin{proof} 
The upper bound~\eqref{item:cbpf-convergence-bound}
follows from Theorem~\ref{thm:ccbpf-coupling-time} and Lemma~\ref{lem:marginal-corollary}, and~\eqref{item:cbpf-convergence-time}
follows directly from~\eqref{item:cbpf-convergence-bound}. The first statement follows because
$\log \alpha / \log \beta$ can be taken to be arbitrarily small.
\end{proof} 
Theorem~\ref{thm:main-cbpf}, indicates that under the strong mixing
assumption, the mixing time of CBPF increases at most linearly in the
number of observations $T$. We remark that unlike existing results for
the CPF, we do not derive a one-shot coupling bound
\citep{chopin-singh}, or a one-step minorisation measure
\citep{andrieu-lee-vihola, lindsten-douc-moulines}, to prove the
uniform ergodicity of the CBPF transition probability $P_{T,N}$.  This
is because the enhanced stability of CBPF's Markov kernel over the
Markov kernel of CPF can only be established by considering the
behaviour of the iterated kernel $P_{T,N}^k$ of Theorem~\ref{thm:main-cbpf}, which has thus far proven elusive to study. Thus,
in addition to the result, the proof technique is itself novel and of
interest. For this reason we dedicate Section~\ref{sec:ccbpf-proof} to
its exposition.

\begin{remark}
\label{rem:quant_alpha_beta_rho_N0}
Intuitively, the arguments used to prove Theorem~\ref{thm:main-cbpf} demonstrate
that by increasing $N_0$, one can take $\rho$ to be smaller.
Hence, the qualitative relationship that increasing the number of particles
gives faster convergence of the CBPF is captured.
However, if one was to pursue quantitative bounds on the dependence
between $\rho$ and $N_0$, our bounds are likely to be too conservative to be
useful.
\end{remark}


\section{Coupled conditional particle filters}
\label{sec:algorithm} 

This section is devoted to the CCPF and CCBPF algorithms
(in short CC\textsc{x}PF where \textsc{x} is a place holder).
We start with Algorithm~\ref{alg:ccxpf}, where the
CC\textsc{x}PF algorithms are given in pseudo-code. The algorithms
differ only in lines~\ref{line:ccxpf-start}--\ref{line:ccxpf-stop},
highlighting the small, but important, difference:
the CCBPF incorporates index coupled backward sampling, which is
central to our results.

Algorithm~\ref{alg:cres} details the index
coupled resampling \citep{chopin-singh}, implementing maximal coupling
of $\Categorical(\omega^{(1:N)})$ and
$\Categorical(\tilde{\omega}^{(1:N)})$.
Line~\ref{line:uncoupled-mutation} of Algorithm~\ref{alg:ccxpf}
accomodates any sampling strategy which satisfies
$X_t^{(i)}\sim M_t(X_{t-1}^{(I_t^{(i)})}, \uarg)$
and $\tilde{X}_t^{(i)}\sim M_t(\tilde{X}_{t-1}^{(\tilde{I}_t^{(i)})},
\uarg)$ marginally, but may involve dependence, such as
implementation using common
random number generators \citep{jacob-lindsten-schon}.

\begin{algorithm}[t]
    \caption{CC\textsc{x}PF($X_{1:T}^*,\tilde{X}_{1:T}^*,N$).}
    \label{alg:ccxpf} 
    \begin{algorithmic}[1]
        \State $\big(X_{1:T}^{(1)}, \tilde{X}_{1:T}^{(1)}\big)
        \gets \big(X_{1:T}^*, \tilde{X}_{1:T}^*\big)$.
        \State $X_{1}^{(i)}\gets \tilde{X}_1^{(i)} \sim M_1(\uarg)$
        for $i\in \{2{:}N\}$.
        \State $\omega_1^{(i)} \gets G_1(X_1^{(i)})$;
        $\tilde{\omega}_1^{(i)} \gets G_1(\tilde{X}_1^{(i)})$.
        \For{$t=2{:}T$}
        \State $(I_t^{(2:N)}, \tilde{I}_t^{(2:N)}) \gets
        \textsc{CRes}\big(\omega_{t-1}^{(1:N)},
        \tilde{\omega}_{t-1}^{(1:N)}, N-1\big)$.
        \State $X_t^{(i)} \gets \tilde{X}_t^{(i)} \sim
        M_t(X_{t-1}^{(I_t^{(i)})}, \uarg)$
        for $i\in \{2{:}N\}$ with
          $X_{t-1}^{(I_t^{(i)})} = \tilde{X}_{t-1}^{(\tilde{I}_t^{(i)})}$
        \State
        \label{line:uncoupled-mutation}
        $(X_t^{(i)},\tilde{X}_t^{(i)})  \sim \big(
        M_t(X_{t-1}^{(I_t^{(i)})}, \uarg),
        M_t(\tilde{X}_{t-1}^{(\tilde{I}_t^{(i)})},
        \uarg)\big)$
        for $i\in \{2{:}N\}$ with
          $X_{t-1}^{(I_t^{(i)})} \neq \tilde{X}_{t-1}^{(\tilde{I}_t^{(i)})}$
        \State $\omega_t^{(i)} \gets G_t(X_{t-1}^{(I_t^{(i)})},
        X_t^{(i)})$; $\tilde{\omega}_t^{(i)} \gets
        G_t(\tilde{X}_{t-1}^{(\tilde{I}_t^{(i)})},
        \tilde{X}_t^{(i)})$.
        \EndFor
        \State $(J_T,\tilde{J}_T) \gets
        \textsc{CRes}\big(\omega_T^{(1:N)},
                     \tilde{\omega}_T^{(1:N)} , 1\big)$
        \For{$t=(T-1){:}1$}
        \State \textbf{if} \text{CCBPF} \textbf{do} \label{line:ccxpf-start}
        \State \hskip\algorithmicindent $b_t^{(i)} \gets \omega_t^{(i)} M_{t+1}(X_t^{(i)},
        X_{t+1}^{(J_{t+1})}) G_{t+1}(X_t^{(i)}, X_{t+1}^{(J_{t+1})})$
        \State \hskip\algorithmicindent $\tilde{b}_t^{(i)} \gets \tilde{\omega}_t^{(i)}
        M_{t+1}(\tilde{X}_t^{(i)},
        \tilde{X}_{t+1}^{(\tilde{J}_{t+1})}) G_{t+1}(\tilde{X}_t^{(i)},
        \tilde{X}_{t+1}^{(\tilde{J}_{t+1})})$
        \State \hskip\algorithmicindent $(J_t, \tilde{J}_t) \gets
        \textsc{CRes}\big( b_t^{(1:N)}, \tilde{b}_t^{(1:N)}, 1\big)$
        \State \textbf{if} \text{CCPF} \textbf{do}
       \State \hskip\algorithmicindent$(J_t,\tilde{J}_t) \gets
       (I_{t+1}^{(J_{t+1})}, \tilde{I}_{t+1}^{(\tilde{J}_{t+1})})$
       where $I_t^{(1)} =
        \tilde{I}_t^{(1)} = 1$.
       \label{line:ccxpf-stop}
        \EndFor
        \State \textbf{output}
        $(X_{1:T}^{(J_{1:T})},\tilde{X}_{1:T}^{(\tilde{J}_{1:T})})$
    \end{algorithmic}
\end{algorithm}

\begin{algorithm}[t]
    \caption{\textsc{CRes}$(\omega^{(1:N)},\tilde{\omega}^{(1:N)},n)$.}
    \label{alg:cres} 
    \begin{algorithmic}[1]
    \State $w^{(1:N)} \gets \frac{\omega^{(1:N)}}{\sum_{j=1}^N
      \omega^{(j)}}$;
    $\tilde{w}^{(1:N)} \gets \frac{\tilde{\omega}^{(1:N)}}{\sum_{j=1}^N
    \tilde{\omega}^{(j)}}$; $p_c \gets \sum_{j=1}^N
    w^{(j)}\wedge
      \tilde{w}^{(j)}$
    \State $w_c^{(1:N)} \gets \frac{w^{(1:N)}\wedge
      \tilde{w}^{(1:N)}}{p_c}$;
    $w_r^{(1:N)} \gets \frac{w^{(1:N)} - p_c w_c^{(1:N)}}{1-p_c}$;
    $\tilde{w}_r^{(1:N)} \gets \frac{\tilde{w}^{(1:N)} - p_cw_c^{(1:N)}}{1-p_c}$
    \For{$i=1{:}n$}
    \State \textbf{with probability} $p_c$ \textbf{do}
    \State \hskip\algorithmicindent $\tilde{I}^{(i)}\gets I^{(i)} \sim
    w_c^{(1:N)}$
    \State \textbf{otherwise}
    \State \hskip\algorithmicindent $I^{(i)} \sim
    w_r^{(1:N)}$; $\tilde{I}^{(i)} \sim \tilde{w}_r^{(1:N)}$.
    \EndFor
    \State \textbf{output} ($I^{(1:n)},\tilde{I}^{(1:n)}$)
    \end{algorithmic}
\end{algorithm} 

The CC\textsc{x}PF algorithms define Markov transition
probabilities on $\X^T\times\X^T$. The CC\textsc{x}PF algorithm is a Markovian
coupling of the corresponding C\textsc{x}PF algorithm, with the same structure: it is direct to check that
CC\textsc{x}PF coincides marginally with
C\textsc{x}PF in Algorithm~\ref{alg:cxpf}, that is,
if
$(S,\tilde{S}) \gets \text{CC\textsc{x}PF}(s_\mathrm{ref},
\tilde{s}_\mathrm{ref}, N)$ for some $N\ge 2$ and
$s_\mathrm{ref},\tilde{s}_\mathrm{ref}\in\X^T$, then $S \eqd
\text{C\textsc{x}PF}(s_\mathrm{ref},N)$ and $\tilde{S} \eqd
\text{C\textsc{x}PF}(\tilde{s}_\mathrm{ref},N)$. It is also clear that if
$s_\mathrm{ref}=\tilde{s}_\mathrm{ref}$, then $S=\tilde{S}$. Because
CPF and CBPF are both $\pi_T$-reversible \citep{chopin-singh}, it is
easy to see that CC\textsc{x}PF are $\boldsymbol{\pi}_T$-reversible,
where $\boldsymbol{\pi}_T(\ud s,\ud \tilde{s}) = \pi_T(s) \delta_s(\ud
\tilde{s}) \ud s$. Just as the CPF and CBPF algorithms have time and space complexity
$O(TN)$, so do the CCPF and CCBPF algorithms.

\subsection{Convergence of the CCBPF} 

In our experiments, the CCBPF had stable behaviour with a fixed
and small number of particles, even for large $T$. Our main result for the CCBPF
consolidates our empirical findings. In contrast
to most results for the CPF and CCPF, the statement of the coupling
behaviour for CCBPF is not one-shot in nature: instead we show
that the pair of trajectories output by the repeated application of
the CCBPF kernel couple themselves \emph{progressively}, starting from their
time $1$ components until eventually coupling all their components
until time $T$.

\begin{theorem}
    \label{thm:ccbpf-coupling-time} 
Suppose that Assumption~\ref{a:mixing} holds.
Let
$s_\mathrm{ref},\tilde{s}_\mathrm{ref}\in \X^T$
and let $(S_0,\tilde{S}_0) \gets (s_\mathrm{ref},\tilde{s}_\mathrm{ref})$
and $(S_k,\tilde{S}_k) \gets
\mathrm{CCBPF}(S_{k-1},\tilde{S}_{k-1},N)$ for $k\ge 1$.
Denote the coupling time $\tau \defeq \inf\{k\ge 1\given S_k = \tilde{S}_k\}$.
For any $\rho > 0$, there exists $N_0=N_0(\theepsilon,\rho)<\infty$
such that for all $N \geq N_0$,
\begin{equation}
\label{eq:linear-convergence-ccbpf}
\lim_{T\to\infty} \P(\tau \ge \lceil \rho T \rceil) = 0.
\end{equation}
More precisely, for any $\alpha,\beta\in(1,\infty)$
there exists
$N_0=N_0(\theepsilon,\alpha,\beta) \in\N$ such that for all $N\ge N_0$,
\begin{equation}
    \P(\tau \ge n ) \le \alpha^T \beta^{-n}, \qquad \text{for all
      $n,T\in\N$.}
    \label{eq:ccbpf-coupling-time}
\end{equation}
In particular, for any
$\rho > \log(\alpha)/\log(\beta)$,~\eqref{eq:linear-convergence-ccbpf} holds.
\end{theorem}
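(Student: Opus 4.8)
The plan is to track a single integer-valued \emph{coupling frontier}, defined after $k$ iterations as the length $D_k$ of the longest common prefix of the output pair, that is, the largest $m$ with $S_k = \tilde{S}_k$ on coordinates $1{:}m$ (and $D_k = T$ exactly when $S_k = \tilde{S}_k$). Since $\tau = \inf\{k \geq 1 : D_k = T\}$, the theorem reduces to a hitting-time estimate for $D_k$. I would aim to show that, for $N$ large, $D_k$ behaves like a random walk on $\{0,1,\dots,T\}$ with a strong upward drift and increments whose law can be bounded \emph{uniformly in $T$ and in the current level}; the bound~\eqref{eq:ccbpf-coupling-time} then follows from an exponential supermartingale argument, with the factor $\alpha^T$ accounting for the number of levels that must be traversed and $\beta^{-n}$ for the geometric tail of the number of iterations.

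First I would record the forward-pass structure of Algorithm~\ref{alg:ccxpf}. A short induction shows that if the reference pair agrees on $1{:}m$ then, at every time $t \leq m$, all $N$ particles coincide across the two systems and the coupled resampling returns identical ancestor indices, while at time $m+1$ only the reference particle (index $1$) can differ. Beyond $m+1$ the systems are no longer identical, but under Assumption~\ref{a:mixing} the set of \emph{decoupled} particles is seeded only by index $1$ and is thinned at each step both by index-coupled resampling (index $1$ is chosen $O(1)$ times, since the normalised weights have a bounded ratio) and by the mutation coupling on line~\ref{line:uncoupled-mutation} (which succeeds with probability at least $M_*(t)/M^*(t)$). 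I would prove a stability lemma stating that the number of decoupled particles at any time stays $O(1)$ in expectation, with exponential tails, \emph{uniformly in $T$}. This is what ultimately makes every per-level failure probability below of order $O(1/N)$ rather than growing with $T$.

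Next I would analyse the backward sweep and the resulting increment of $D_k$. The crucial observation is that inside the prefix region $t \leq m$, where all particles agree, the index coupling $\textsc{CRes}(b_t^{(1:N)},\tilde{b}_t^{(1:N)},1)$ is \emph{absorbing} as $t$ decreases: if the output is coupled at time $t+1$ then $b_t^{(i)} = \tilde{b}_t^{(i)}$ for every $i$ and the output couples at $t$ with probability one, whereas if it is not yet coupled the two weight vectors still overlap by at least $\theepsilon$ (their ratio is controlled solely through the single differing future state), so the output couples at $t$ with probability at least $\theepsilon$ and then stays coupled down to time $1$. Consequently the output always possesses a coupled prefix, and I would show that (i) when the backward index first reaches level $m+1$ on a coupled particle---an event of probability $1 - O(1/N)$ by the stability lemma, the decoupled particles carrying total weight $O(1/N)$, with this probability reaching a $T$-independent quasi-equilibrium---the frontier advances, typically by the whole coupled block above $m$; and (ii) on the complementary, rare event the frontier can retreat, but only by an amount stochastically dominated by a $\mathrm{Geometric}(\theepsilon)$ variable coming from the absorbing recursion above. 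Combining (i) and (ii) yields a lower bound on the conditional law of $D_k - D_{k-1}$ that is uniform in $D_{k-1} < T$ and in $T$, and whose drift can be made arbitrarily large by increasing $N$.

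Finally I would convert this drift into~\eqref{eq:ccbpf-coupling-time}. Taking $\lambda = \log\alpha$, the uniform increment bound lets me choose $N_0 = N_0(\theepsilon,\alpha,\beta)$ so large that $\E[\alpha^{-(D_k - D_{k-1})} \mid \mathcal{F}_{k-1}] \leq \beta^{-1}$ on $\{D_{k-1} < T\}$; then $M_k \defeq \beta^{k}\alpha^{-D_k}$ is a non-negative supermartingale up to time $\tau$, and optional stopping applied to $M_{n\wedge\tau}$, together with $\alpha^{-D_n} > \alpha^{-T}$ on $\{\tau > n\}$, gives $\P(\tau > n) \leq \alpha^T\beta^{-n}$, which is~\eqref{eq:ccbpf-coupling-time} after a harmless index shift. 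The statement~\eqref{eq:linear-convergence-ccbpf} is then immediate on $\lceil\rho T\rceil$ iterations whenever $\rho > \log\alpha/\log\beta$, since $\alpha^T\beta^{-\rho T} = (\alpha\beta^{-\rho})^T \to 0$, and the first, qualitative claim follows because $\log\alpha/\log\beta$ can be made arbitrarily small. The main obstacle, and the step I expect to require the most care, is the \emph{$T$-uniformity} in the two middle paragraphs: controlling the decoupled-particle population and the backward coupling recursion so that the advance probability and the increment moments do not degrade as $T\to\infty$, while simultaneously taming the frontier's downward excursions---whose $\mathrm{Geometric}(\theepsilon)$ tail ties the admissible exponential-moment argument to the mixing constant $\theepsilon$.
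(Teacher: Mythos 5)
Your overall architecture is the same as the paper's: you track the longest coupled prefix (the paper's coupling boundary $\kappa_n$ of Section~\ref{sec:ccbpf-proof}), you bound its increments stochastically, uniformly in the current level and in $T$, and you convert a uniform exponential-moment bound on the increments into $\P(\tau\ge n)\le \alpha^T\beta^{-n}$; your optional-stopping argument is equivalent to the paper's Chernoff bound applied to i.i.d.\ dominating increments $\Delta_k$. Your description of the backward sweep inside the coupled prefix is also correct and matches Lemma~\ref{lem:domination}: coupled backward indices stay coupled there, uncoupled ones re-couple with probability at least $\theepsilon$, so the retreat is dominated by a $\mathrm{Geometric}(\theepsilon)$ variable, and this is exactly what restricts the admissible exponent to $\alpha<(1-\theepsilon)^{-1}$ (the paper handles general $\alpha$ by replacing it with a smaller $\hat{\alpha}$, which only strengthens the bound).

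The genuine gap is your ``stability lemma''. Under Assumption~\ref{a:mixing} alone one cannot prove that the number of decoupled particle pairs above the frontier stays $O(1)$ in expectation uniformly in $T$; in the worst case it grows geometrically. With $k$ decoupled pairs, the failure probability of the maximal coupling per resampling draw is of order $k(1-\thedelta)/(N\thedelta)$ (the normalised-weight discrepancy carried by the decoupled pairs), so the $N-1$ draws produce on the order of $k(1-\thedelta)/\thedelta$ newly decoupled pairs: a supercritical branching process whenever $\thedelta<1/2$. Your proposed counterweight --- re-coupling of decoupled pairs at the mutation step with probability at least $M_*(t)/M^*(t)$ --- is unjustified, because line~\ref{line:uncoupled-mutation} of Algorithm~\ref{alg:ccxpf} permits an arbitrary coupling of the two transition kernels (for instance independent sampling, which on a continuous space never re-couples), and indeed the paper's analysis never relies on it. Consequently your claimed per-level advance probability $1-O(1/N)$, uniform in $T$, is unavailable. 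What is provable (the paper's Lemmas~\ref{lem:domination} and~\ref{lem:binom-expectation}) is weaker: the coupled fraction stays near one for roughly $L\approx \log N/\log(1/\thedelta)$ steps above the frontier, provided $N\gtrsim \thedelta^{-L}$, so the per-iteration increment is dominated below by roughly $L$ minus a $\mathrm{Geometric}(\theepsilon)$ overshoot. Since $L\to\infty$ as $N\to\infty$, this still makes the drift arbitrarily large, which is all the theorem needs; with your stability claim replaced by this statement, the remainder of your argument goes through essentially as written.
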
 
The proof of the bound~\eqref{eq:ccbpf-coupling-time} is given in
Section~\ref{sec:ccbpf-proof}, and the linear coupling time statement
follows by appropriate choice of $\alpha$ and $\beta$.
The most striking element of this statement is that the coupling time $\tau$ does not exceed $\rho T$ with greater surety as $T$ increases.

\begin{remark}
The comments in Remark~\ref{rem:quant_alpha_beta_rho_N0} also apply here.
Regarding tightness of Theorem~\ref{thm:ccbpf-coupling-time}, one may consider
whether coupling might occur with high probability after a number of iterations that is sublinear in $T$.
The simulation experiments in Section~\ref{sec:empirical} suggest that the mean
coupling time for the CCBPF is indeed linear in $T$ for the models considered, suggesting that
Theorem~\ref{thm:ccbpf-coupling-time} is tight, and we suspect that this is the case for many other models.
Similarly, in the more challenging simulation experiment, we found that values
of $N$ that were too small had mean coupling times that were
superlinear in $T$,
suggesting that there is indeed a model-dependent, minimal $N_0$ for linear-in-time coupling.
\end{remark}


\subsection{Convergence of the CCPF} 

We seek here to provide quantitative results to strengthen Theorem 3.1
of~\cite{jacob-lindsten-schon}, which does not quantify the dependence of
the probability of coupling on $N$ or $T$.
Although the results are less encouraging than for the CCBPF, the dependence
on $T$ is likely to be very conservative for many models. On the other hand,
results for the CCPF are interesting because it is more widely applicable than
the CCBPF, specifically since implementing the CCPF does not require the
ability to calculate densities $M_t$.

In empirical investigations, we have only seen the CCPF couple instantaneously,
as opposed to the progressive coupling seen for the CCBPF.
For that reason, we focus here on lower bounding the one-shot
coupling probability for two arbitrary reference trajectories.

Our first result states that with $T$ fixed,
the CCPF enjoys similar strong uniform ergodicity to the CPF, with the same rate
as the number of particles N is increased \citep[cf.][]{andrieu-lee-vihola,lindsten-douc-moulines}.

\begin{theorem}
    \label{thm:oneshot-bounds} 
Let $s_\mathrm{ref},\tilde{s}_\mathrm{ref}\in \X^T$, and consider
$(S, \tilde{S})
\gets \mathrm{CCPF}(s_\mathrm{ref},\tilde{s}_\mathrm{ref},N)$ with $N\ge 2$.
If $G_t(x_{t-1},x_t)=G_t(x_t)$ for $t\ge 2$ and
Assumption~\ref{a:g-minimal} holds, then there exists a constant
$c=c(G^*, T, c_T)\in(0,\infty)$ such that
\[
    \P( S = \tilde{S}) \ge 1 - \frac{c}{N + c}.
\]
\end{theorem}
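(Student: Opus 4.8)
The plan is to isolate a simple sufficient event for $\{S=\tilde S\}$ and then bound its complement by $O(1/N)$. Writing $C_t \defeq \{i : X_t^{(i)} = \tilde{X}_t^{(i)}\}$ for the set of \emph{coupled} particle indices, note that $C_1 = \{2{:}N\}$ (the non-reference particles are generated identically), while $1\notin C_t$ for every $t$ (the two references disagree). Because a \textsc{CRes} call assigns the two systems a common ancestor index exactly on its $p_c$-branch, and because the hypothesis $G_t(x_{t-1},x_t)=G_t(x_t)$ makes the unnormalised weight of a coupled particle identical across the two systems, a particle $(t,i)$ with $i\ge 2$ is coupled precisely when $I_t^{(i)}=\tilde{I}_t^{(i)}\in C_{t-1}$. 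Tracing the output genealogy backwards then gives $\{J_T=\tilde{J}_T\in C_T\}\subseteq\{S=\tilde S\}$: a matched final index lying in a coupled lineage forces the entire pair of output trajectories to agree. Hence it suffices to lower bound $\P(J_T=\tilde{J}_T\in C_T)$, and by maximal coupling this failure probability is exactly $\E[\,1-\sum_{j\in C_T}(w_T^{(j)}\wedge \tilde{w}_T^{(j)})\,]$.

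First I would quantify the per-step \emph{coupling defect} $1-\sum_{j\in C}(w^{(j)}\wedge\tilde{w}^{(j)})$ at a generic \textsc{CRes} call with normalised weights $w,\tilde{w}$. Using $G_t(x_{t-1},x_t)=G_t(x_t)$, the coupled particles carry equal unnormalised weights in both systems, so this defect is produced solely by (i) the weight mass on the uncoupled indices and (ii) the mismatch of the two normalisers $Z_{t-1}=\sum_i\omega_{t-1}^{(i)}$ and $\tilde{Z}_{t-1}$; each is controlled by the single reference weight (at most $G^*$) times the contamination count $|C_{t-1}^c|$, divided by the total particle weight. This yields a recursion of the form $\E[\,|C_t^c|\mid \F_{t-1}] \le 1 + (N-1)\,c_0\,|C_{t-1}^c|\,G^*/\min(Z_{t-1},\tilde{Z}_{t-1})$ for an absolute constant $c_0$, and the target event fails only through the defect of the final \textsc{CRes}, whose probability is at most of order $\E[\,|C_T^c|\,G^*/\min(Z_T,\tilde{Z}_T)]$.

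The hard part, and where $c_T$ enters, is bounding the total particle weight from below. Turning the recursion into an $O(1/N)$ statement requires $Z_t$ (equivalently $A_t=\sum_{i\ge2}\omega_t^{(i)}$) to be of order $N$, but Assumption~\ref{a:g-minimal} only bounds potentials above by $G^*$ and gives no pathwise lower bound, so $Z_t$ can be arbitrarily small and $1/Z_t$ cannot be controlled deterministically. The only handle on the typical magnitude of $Z_t$ is the normalising constant $c_T$: via the unbiasedness of the particle estimate of $c_T$ together with the structure $G_t=G_t(x_t)$, one can prove a left-tail bound on the weight sums and hence the needed control of $\E[1/Z_t]$. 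Feeding this into the recursion lets $\E[\,|C_t^c|]$ grow geometrically in $t$, so the resulting constant is exponential in $T$ (matching the paper's remark that the constants grow rapidly with $T$), and the final defect is bounded by $c/(N+c)$ with $c=c(G^*,T,c_T)$. The base case $T=1$ is already indicative: there $\P(S\ne\tilde S)\le \E[G^*/(A+G^*)]$ with $A=\sum_{i=2}^N G_1(X_1^{(i)})$ a sum of $N-1$ i.i.d.\ $[0,G^*]$-valued weights of mean $c_1=c_T$, and a left-tail bound on $A$ produces exactly the $c/(N+c)$ shape. The main obstacle throughout is this lower control of the total weight under merely bounded potentials, intertwined with tracking how the reference contamination $|C_t^c|$ propagates.
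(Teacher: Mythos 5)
Your reduction to the event $\{J_T=\tilde J_T\in C_T\}$ (with the lineage-recursive definition of $C_t$, which you implicitly adopt and which is the one that makes the inclusion $\{J_T=\tilde{J}_T\in C_T\}\subseteq\{S=\tilde S\}$ valid) is sound and coincides with the paper's starting point. The genuine gap is in the step where all of the difficulty sits: you propose to control $\E[1/\min(Z_t,\tilde Z_t)]$ via a left-tail bound on the weight sums obtained ``via the unbiasedness of the particle estimate of $c_T$''. Unbiasedness of a nonnegative estimator gives no lower-tail control whatsoever: Markov's inequality bounds the upper tail only, and a nonnegative unbiased estimator may place almost all of its mass at $0$. (Moreover, the weights here come from a \emph{conditional} particle filter, for which the usual normalizing-constant estimator is not even unbiased; the unbiasedness you invoke holds for the unconditional filter.) Your base case $T=1$ works only because there the non-reference particles are genuinely i.i.d., so boundedness of $G_1$ gives Hoeffding-type left-tail control around the deterministic mean $c_1$. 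For $t\ge 2$ the conditional mean of $Z_t$ given the past is itself a random ratio, roughly $(N-1)\,\xi_{t-1}(G_{t-1}(M_tG_t))/\xi_{t-1}(G_{t-1})$, so the concentration argument must be closed recursively; under Assumption~\ref{a:g-minimal} alone (no lower bound on potentials, arbitrary references possibly carrying zero weight) nothing in your sketch supplies the required small-ball estimates, and your recursion for $\E[|C_t^c|\mid\F_{t-1}]$ cannot be integrated without exactly the joint control of $|C_{t-1}^c|$ and $1/\min(Z_{t-1},\tilde Z_{t-1})$ that is missing.

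The paper circumvents this issue entirely, and that is the idea absent from your proposal. First, Lemma~\ref{lem:coupled-ccpfs} replaces the second reference by a ghost state $\phi$ carrying maximal potentials, reducing the two-system problem to a single marginal conditional particle filter, so only one sequence of random normalizers appears. Second, and crucially, Lemma~\ref{lem:convex_sum_rvs} is a Jensen-type inequality that replaces a random denominator $\sum_j f(Z^{(j)})$ by its conditional expectation at the price of an additive term $\|f\|_\infty$; iterating it backward in time turns both numerator and denominator into \emph{deterministic} integrals, namely $h_0^{(1)}=M_1\bar Q_{1,T-1}(G_T)=c_T$ and $h_0^{(2)}\le c_T+c_1(G^*,T)/N$, whence $\P(S=\tilde S)\ge c_T/(c_T+c_1/N)=1-c/(N+c)$ with $c=c_1/c_T$. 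Thus $c_T$ enters as the telescoped deterministic value of the numerator, not through tail control of $Z_t$, and no pathwise or probabilistic lower bound on any random weight sum is ever needed. To salvage your route you would have to develop genuine recursive small-ball estimates for the ratios $\xi_t(G_t(M_{t+1}G_{t+1}))/\xi_t(G_t)$, which is substantially harder than the convexity argument and is not implied by anything in your sketch.
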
 
The proof of Theorem~\ref{thm:oneshot-bounds} is given in
Appendix~\ref{app:oneshot}.

Theorem~\ref{thm:oneshot-bounds} is stated with a fixed time horizon
$T$, and shows that one-shot coupling occurs from any initial state
$(s_\mathrm{ref},\tilde{s}_\mathrm{ref})$ with positive probability for any
$N \geq 2$. To have a reasonably large probability of one-shot coupling, it is
sufficient to choose a large enough value of $N$.

Although Theorem~\ref{thm:oneshot-bounds} holds very generally, it
does not provide useful quantitative bounds
on the relationship between $N$ and $T$; in particular $c(G^*, T, c_T)$ may grow
very quickly with $T$ so that $N$ would need to grow similarly quickly to
control the coupling probability.
In order to provide more accurate bounds relating $N$ and $T$,
we have had to make the strong mixing assumption.
\begin{theorem}
    \label{thm:oneshot-rate-result} 
Under the setting of Theorem~\ref{thm:oneshot-bounds}, but
with Assumption~\ref{a:mixing},
\[
    \P( S = \tilde{S}) \ge 1 -
    \frac{2^T T}{(2c_*)^{-1}(N-1)+1}.
\]
\end{theorem}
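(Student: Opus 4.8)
\emph{Proof proposal.} The plan is to exhibit a sufficient event for $S=\tilde S$ whose probability I can bound from below using the maximal coupling built into \textsc{CRes} (Algorithm~\ref{alg:cres}) together with the uniform weight bounds supplied by Assumption~\ref{a:mixing}. Since the two particle systems are generated by coupled resampling and shared mutations, call an index $i$ \emph{coupled at time $t$} if $X_t^{(i)}=\tilde X_t^{(i)}$ and, recursively, its sampled ancestor index satisfies $I_t^{(i)}=\tilde I_t^{(i)}$ and points to a coupled index at time $t-1$; write $C_t$ for the set of such indices and $U_t$ for its complement, so that $C_1=\{2{:}N\}$ and $1\in U_t$ always. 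The key structural point, which is exactly where the hypothesis $G_t(x_{t-1},x_t)=G_t(x_t)$ is used, is that a coupled index carries identical \emph{unnormalised} weights in the two systems, $\omega_t^{(i)}=\tilde\omega_t^{(i)}$, so that any discrepancy between the normalised weight vectors $w_t^{(1:N)}$ and $\tilde w_t^{(1:N)}$ is produced solely by the uncoupled indices and by the differing normalising sums. If the terminal selection draws a common index lying in $C_T$, then the two retained trajectories share both their index path and their values, so that $S=\tilde S$; since \textsc{CRes} selects $J_T=\tilde J_T=j$ with probability $w_T^{(j)}\wedge\tilde w_T^{(j)}$, this yields $\P(S=\tilde S)\ge \E\big[\sum_{j\in C_T}(w_T^{(j)}\wedge\tilde w_T^{(j)})\big]$, equivalently the failure bound $\P(S\ne \tilde S)\le \E\big[1-\sum_{j\in C_T}(w_T^{(j)}\wedge\tilde w_T^{(j)})\big]$.

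The next step is to control the right-hand side by the expected number of uncoupled particles $\E|U_T|$. Writing $1-\sum_{j\in C_T}(w_T^{(j)}\wedge\tilde w_T^{(j)}) = \sum_{j\in C_T}(w_T^{(j)}-\tilde w_T^{(j)})_+ + \sum_{j\in U_T}w_T^{(j)}$ and using Assumption~\ref{a:mixing} to bound every normalised weight between $\thedelta/N$ and $1/(N\thedelta)$, both terms are $O(|U_T|/N)$: the uncoupled mass is at most $|U_T|/(N\thedelta)$ directly, while the coupled discrepancy is controlled because, coupled unnormalised weights being equal, $|\sum_k\omega_T^{(k)}-\sum_k\tilde\omega_T^{(k)}|$ only sees the $|U_T|$ uncoupled indices. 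Hence $\P(S\ne\tilde S)\le (\mathrm{const}/N)\,\E|U_T|$.

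To estimate $\E|U_T|$ I would set up a one-step recursion. Conditionally on the system at time $t-1$, an offspring $i\ge 2$ fails to be coupled at time $t$ either because \textsc{CRes} makes a non-common draw, of probability $1-\sum_j (w_{t-1}^{(j)}\wedge\tilde w_{t-1}^{(j)})=\|w_{t-1}-\tilde w_{t-1}\|_{\tv}$, or because it makes a common draw into $U_{t-1}$; both are again $O(|U_{t-1}|/N)$ by the same weight-discrepancy estimate. Summing over the $N-1$ non-reference offspring and adding the reference yields a recursion of the form $\E|U_t|\le 1 + \kappa\,\E|U_{t-1}|$, with a model constant $\kappa$ coming from the $G^*/G_*$ (and normalising-sum) ratios and with $\E|U_1|=1$. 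Iterating gives $\E|U_T|\le \sum_{t=1}^T \kappa^{T-t}$, which is where the geometric-in-$T$ growth (the factor $2^T$ once the ratios are absorbed into the constant $c_*$) and the additional linear factor $T$ (from the sum over the $T$ generations at which the reference can inject a discrepancy) both originate. Substituting back produces the stated bound $\P(S=\tilde S)\ge 1-\frac{2^T T}{(2c_*)^{-1}(N-1)+1}$.

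The main obstacle is the recursion for $\E|U_t|$: I must show that the per-generation amplification of the uncoupled mass is genuinely bounded, so that the growth is at worst geometric rather than faster, and that the constant governing it is exactly the $c_*$ appearing in the statement. This requires handling carefully the discrepancy between the two normalised weight vectors on the coupled indices, where equality of the unnormalised weights, and thus the special potential structure $G_t(x_{t-1},x_t)=G_t(x_t)$, is essential; otherwise coupled particles would themselves contribute to $\|w_{t-1}-\tilde w_{t-1}\|_{\tv}$ and the recursion would not close. Matching the precise constants $2$ and $c_*$, rather than some larger model-dependent base, is the most delicate bookkeeping in the argument.
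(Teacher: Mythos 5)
Your reduction to the terminal coupled draw is sound and matches the paper's starting point: with $C_T$ the recursively coupled indices, $\P(S=\tilde S)\ge\E\big[\sum_{j\in C_T}(w_T^{(j)}\wedge\tilde w_T^{(j)})\big]$, and on coupled indices the unnormalised weights agree precisely because $G_t$ depends only on $x_t$. The gap is in the propagation step, and it is structural, not ``delicate bookkeeping''. Your recursion $\E|U_t|\le 1+\kappa\,\E|U_{t-1}|$ is count-based, and every conversion from a count of uncoupled particles to a resampling probability costs a weight-ratio factor: the uncoupled mass satisfies only $\sum_{j\in U_{t-1}}w_{t-1}^{(j)}\le |U_{t-1}|/(N\thedelta)$, and this is essentially tight when uncoupled particles carry maximal weight and coupled particles minimal weight. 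Hence the best available constant is $\kappa\approx 2/\thedelta$ (equivalently of order $c_*$), not $2$, and it enters \emph{multiplicatively at every generation}. Iterating gives $\E|U_T|\lesssim T(2/\thedelta)^{T-1}$ and a final bound of order $2^T T/(N\thedelta^{T})$: the mixing ratio lands in the base of the exponential, raised to the power $T$, and cannot be ``absorbed into the constant $c_*$'' as you hope, since $\thedelta^{-T}\neq O(c_*)$ for $T\ge 2$ unless $\thedelta=1$. Your route therefore proves a statement that is exponentially weaker in $T$ than the one claimed, in which $c_*$ appears exactly once.

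The paper avoids this compounding by never counting particles. It tracks the coupled fraction of \emph{weighted} mass $\E[\xi_{C_u}(\bar G_{u,T})/\xi_u(\bar G_{u,T})]$, where the test functions $\bar G_{u,T}$ are the potentials integrated forward to the horizon (Lemmas~\ref{lem:ccpf-diff-bound-1}--\ref{lem:oneshot-with-rate-proof}). A conditional Jensen inequality (Lemma~\ref{lem:convex_sum_rvs}) shows that the one-step evolution of this fraction loses only an additive term $2\sup_x\bar G_{u,T}(x)/(N-1)$ in the denominator, which Assumption~\ref{a:mixing-type} (implied by Assumption~\ref{a:mixing} with $c_*=\theepsilon^{-1}$, Lemma~\ref{lem:mixing-to-mixing-type}) converts into the multiplicative factor $\beta_N=(1+2c_*/(N-1))^{-1}$. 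So the mixing constant is paid \emph{additively}, at rate $O(c_*/N)$ per step, rather than multiplicatively; the $2^T$ in the statement arises from a separate mechanism, namely that the deficits $1-\E[\rho_t]$ and $1-\E[\tilde\rho_t]$ of the two chains feed into each other and at most double per step. To repair your argument you would have to replace the count $|U_t|$ by exactly such a future-weighted coupled fraction, at which point you would have reproduced the paper's proof.
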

Theorem~\ref{thm:oneshot-rate-result}, which follows from Lemmas~\ref{lem:mixing-to-mixing-type}
and~\ref{lem:oneshot-with-rate-proof} in Appendix~\ref{app:oneshot-rate},
shows that the probability of coupling does not diminish when $N =
O(2^T T)$. That is, roughly doubling of particle number with every
unit increase in $T$ ensures non-diminishing coupling probability.
Experiments by~\cite{jacob-lindsten-schon} and also those in Section~\ref{sec:empirical}
suggest that a rate $N = O(T)$ might be enough for some models, which would be
analogous to the results on the CPF \citep{andrieu-lee-vihola,lindsten-douc-moulines},
but we have been unable to verify such a rate theoretically.
Our empirical results in Section~\ref{sec:empirical} also suggest that taking $N$ superlinear in $T$ may be
necessary in other models.



\section{Unbiased estimators}
\label{sec:unbiased} 

Let us then turn to the use of CC\textsc{x}PF together with the scheme
of~\cite{glynn-rhee}, to construct unbiased estimators,
as suggested in \citep{jacob-lindsten-schon}.
\begin{algorithm}[t]
    \caption{\textsc{Unbiased}($h,b,N$)}
    \label{alg:unbiased} 
\begin{algorithmic}[1]
    \State \label{line:init1} Run particle filter with $(M_t,G_t)_{t=1:T}$
    independently to get trajectories $\tilde{S}_{0},S_{-1} \in \X^T$
    \State \label{line:init2} Set $(S_0,\,-\,) \gets
    \text{CC\textsc{x}PF}(S_{-1},S_{-1},N)$.
    \For{$n=1,2,\ldots$}
    \State  $(S_{n},\tilde{S}_{n}) \gets
    \text{CC\textsc{x}PF}(S_{n-1},\tilde{S}_{n-1},N)$
    \State \label{line:stopping}
    \textbf{if} $S_n = \tilde{S}_n$ and $n\ge b$ \textbf{then}
    \textbf{output} $Z \defeq h(S_b) + \sum_{k=b+1}^n [h(S_k) -
    h(\tilde{S}_k)]$
    \EndFor
\end{algorithmic}
\end{algorithm}
Algorithm~\ref{alg:unbiased} aims to unbiasedly estimate the
expectation $\E_{\pi_T}[h(X_{1:T})]$, where $h : \mathsf{X}^T \to
\mathbb{R}$ is any integrable function of interest. We remark that
the unbiased estimation procedure is not specific to the choice of
$h$.
For example, in an HMM one can approximate smoothing means and covariances
by choosing several appropriate $h$ functions.
Algorithm~\ref{alg:unbiased} has two adjustable parameters, a
`burn-in' $b\ge 1$ and `number of particles' $N\ge 2$ which may be
tuned to maximise its efficiency.  Algorithm~\ref{alg:unbiased}
iterates either the coupled conditional particle filter CCPF or the
coupled conditional backward sampling particle filter CCBPF until a
perfect coupling of the trajectories $S_n$ and $\tilde{S}_n$ is
obtained.

The following result records general conditions under which
the scheme above produces unbiased finite variance estimators.
\begin{theorem}
    \label{thm:minimal-consistency} 
Suppose $G_t(x_{t-1},x_t)=G_t(x_t)$ and Assumption~\ref{a:g-minimal} holds, $h:\X^T\to\R$ is
bounded and measurable. Then, Algorithm~\ref{alg:unbiased} with
CCPF, $b\ge 1$ and $N\ge 2$ satisfies,
denoting by $\tau$ the running time (iterations before producing
output):
\begin{enumerate}[(i)]
    \item \label{item:tau-finiteness}
      $\tau<\infty$ almost surely.
\item \label{item:norate-unbiased-finite-var}
$\E[Z] = \E_{\pi_T}[h(X)]$
and $\var(Z)<\infty$.
\item
  With the constant $c=c(G^*,T,c_T)\in(0,\infty)$ of
  Theorem~\ref{thm:oneshot-bounds},
  \begin{align*}
      \E[\tau ] &\le
      b + \big(\frac{c}{N+c}\big)^{b-1} \big(\frac{N+c}{N}\big),
      \\
      |\var(Z) - \var_{\pi_T}\big(h(X)\big)| &\le
  {
  16 \| \bar{h}\|_\infty^2 \big(\frac{N+c}{N}\big)^2
  \big(\frac{c}{N+c}\big)^{b/2}
   }
  \end{align*}
 where $\bar{h}(\cdot)=h(\cdot) - \pi_T(h).$
\end{enumerate}
\end{theorem}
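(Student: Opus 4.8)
The plan is to derive everything from a single ingredient: a uniform, per-iteration lower bound on the coupling probability, which Theorem~\ref{thm:oneshot-bounds} supplies. Write $P$ for the marginal CPF transition kernel, $\mu_0$ for the law of a trajectory returned by a plain particle filter, and set $\eta \defeq c/(N+c)$ with $c=c(G^*,T,c_T)$. Tracking the indices in Algorithm~\ref{alg:unbiased} reveals the lag-one structure $S_n \sim \mu_0 P^{n+1}$ and $\tilde S_n \sim \mu_0 P^{n}$, so that $\tilde S_n \eqd S_{n-1}$ for every $n\ge 1$; moreover the coupled kernel is faithful, i.e.\ $S_n=\tilde S_n$ forces $S_m=\tilde S_m$ for all $m\ge n$. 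Since Theorem~\ref{thm:oneshot-bounds} bounds the one-step non-coupling probability by $\eta$ uniformly over the pair of input trajectories, successive conditioning gives $\P(\tau^*>n)\le \eta^n$ for the meeting time $\tau^* \defeq \inf\{n\ge 1 : S_n=\tilde S_n\}$. Part~\eqref{item:tau-finiteness} is then immediate since $\tau=\max(b,\tau^*)<\infty$ almost surely, and summing the geometric tail yields the stated bound on $\E[\tau]$.

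For unbiasedness I would first rewrite the output as $Z=h(S_b)+\sum_{k=b+1}^{\infty}\big(h(S_k)-h(\tilde S_k)\big)$, noting that faithfulness makes every term with $k\ge\tau^*$ vanish, so the sum is almost surely finite. The pathwise bound $\sum_{k>b}|h(S_k)-h(\tilde S_k)|\le 2\|h\|_\infty(\tau^*-b)_+$ together with $\E[(\tau^*-b)_+]<\infty$ justifies exchanging expectation and summation. Using $\tilde S_k\eqd S_{k-1}$, each summand satisfies $\E[h(S_k)-h(\tilde S_k)]=(\mu_0 P^{k+1})(h)-(\mu_0 P^{k})(h)$, so the series telescopes to $\pi_T(h)-\E[h(S_b)]$, where the limit $(\mu_0 P^m)(h)\to\pi_T(h)$ uses the uniform ergodicity of the CPF under Assumption~\ref{a:g-minimal}. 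This gives $\E[Z]=\pi_T(h)$, establishing~\eqref{item:norate-unbiased-finite-var}; finite variance follows from the crude bound $|Z|\le\|h\|_\infty\big(1+2(\tau^*-b)_+\big)$ and the finiteness of the second moment of the geometric tail.

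For the refined variance estimate I would centre, writing $\bar Z \defeq Z-\pi_T(h)=\bar h(S_b)+R$ with $R\defeq\sum_{k=b+1}^{\infty}\big(\bar h(S_k)-\bar h(\tilde S_k)\big)$ and $\bar h=h-\pi_T(h)$, so that $\var(Z)=\E[\bar Z^2]=\E[\bar h(S_b)^2]+2\E[\bar h(S_b)R]+\E[R^2]$. The first term is compared with $\var_{\pi_T}(h(X))=\pi_T(\bar h^2)$ via $|\E[\bar h(S_b)^2]-\pi_T(\bar h^2)|\le\|\bar h\|_\infty^2\,\|\mu_0 P^{b+1}-\pi_T\|_\tv\le\|\bar h\|_\infty^2\,\eta^{b}$, where the total-variation bound again comes from coupling the chain with a stationary copy through Theorem~\ref{thm:oneshot-bounds}. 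For the remainder I use $|R|\le 2\|\bar h\|_\infty(\tau^*-b)_+$, so that $\E[R^2]$ and, by Cauchy--Schwarz, the cross term $2\E[\bar h(S_b)R]\le 4\|\bar h\|_\infty^2\sqrt{\E[((\tau^*-b)_+)^2]}$ are both controlled by moments of $(\tau^*-b)_+$ under the geometric tail. The cross term is the dominant contribution and produces the square-root rate $\eta^{b/2}$ together with the prefactor $\big((N+c)/N\big)^2=(1-\eta)^{-2}$; collecting the three bounds yields the claimed inequality.

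The conceptual skeleton is the standard debiasing argument of~\citet{glynn-rhee} and~\citet{jacob-lindsten-schon}, so the genuine work is twofold. First, one must extract from Theorem~\ref{thm:oneshot-bounds} a geometric coupling that applies verbatim at \emph{every} iteration, uniformly over the arbitrary current trajectories --- this is precisely what the one-shot formulation provides, and it is what lets the meeting time dominate a geometric variable regardless of how far the chains have drifted. Second, and this I expect to be the main technical point, the variance bound is governed not by the squared burn-in bias (which decays like $\eta^{b}$) but by the Cauchy--Schwarz cross term between $\bar h(S_b)$ and the correction $R$, whose second moment decays only like $\eta^{b}$ and hence contributes $\eta^{b/2}$; tracking the constants to obtain exactly the factor $16\,\big((N+c)/N\big)^2$ is routine but is where the bookkeeping must be done carefully.
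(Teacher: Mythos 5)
Your proposal is correct and follows essentially the same route as the paper: a uniform one-shot coupling bound from Theorem~\ref{thm:oneshot-bounds} giving geometric tails for the meeting time, the lag-one telescoping identity $\tilde S_n \eqd S_{n-1}$ for unbiasedness, and the variance decomposition into a total-variation term, a Cauchy--Schwarz cross term (correctly identified as dominant, yielding the $\eta^{b/2}$ rate), and a remainder. The only difference is organisational: the paper routes these steps through general-purpose results in its appendix (Theorem~\ref{thm:coupling-unbiased}, Lemmas~\ref{lem:cauchy-terms}, \ref{lem:variance-convergence}, \ref{lem:marginal-corollary} and~\ref{lem:var_bound}, the last of which allows $h\in L^2(\pi)$ and hence needs an $L^2$-boundedness argument), whereas you inline them and exploit boundedness of $h$ to use plain domination.
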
 
\begin{proof} 
Theorem~\ref{thm:oneshot-bounds}
implies that $\P(\tau > k) \le \big(\frac{c}{N+c}\big)^{k-1}$ for all
$k>b$, from which
\[
    \E[\tau] = \sum_{k\ge 0} \P(\tau > k)
    \le b + \sum_{k\ge b} \P(\tau > k)
    \le b + \big(\frac{c}{N+c}\big)^{b-1} \big(\frac{N+c}{N}\big)
    <\infty,
\]
and the bound on $|\var(Z) - \var_{\pi_T}\big(h(X)\big)|$ follows from
Lemma~\ref{lem:var_bound}. Part~\eqref{item:norate-unbiased-finite-var} follows
from Theorem~\ref{thm:coupling-unbiased} and Lemma~\ref{lem:cauchy-terms}.
\end{proof}

Theorem~\ref{thm:minimal-consistency} extends the consistency
result of~\cite{jacob-lindsten-schon}, by quantifying the convergence
rates. Fix $T$: if $N$ is large, then $\E[\tau]\approx b$, and if $b$ is large,
then $\var(Z)\approx \var_{\pi_T}\big(h(X)\big)$. As mentioned after
Theorem~\ref{thm:ccbpf-coupling-time} the growth of the constant $c$ with respect
to $T$ can be very rapid. In contrast, in case of the CCBPF, the results may be
refined as follows:

\begin{theorem}
    \label{thm:mixing-result} 
Suppose Assumption~\ref{a:mixing} holds,
let $\alpha,\beta\in(1,\infty)$ and let
$N_0\in\N$ be from Theorem~\ref{thm:ccbpf-coupling-time}.
Then, Algorithm~\ref{alg:unbiased} with CCBPF and $b\ge 1$ satisfies, with $\rho \defeq
\log\alpha/\log\beta$:
\begin{enumerate}[(i)]
    \item
      \label{item:mixing-tau}
      $\E[\tau] \le  b \vee \lceil \rho T \rceil + \alpha^T
          \beta^{-(\lceil \rho T \rceil \vee b) } (\beta -1)^{-1}$.
    \item
      \label{item:mixing-var}
      $\big|\var(Z) - \var_{\pi_T}\big(h(X)\big)\big|
         \le 16 \alpha^T (1-\beta^{-1})^{-2} \beta^{-b/2}
         \|\bar{h}\|_\infty^2$.\\
      In particular, if $b= \lceil \tilde{\rho} T \rceil $with any $\tilde{\rho}>2\rho$,
      $\big|\var(Z) - \var_{\pi_T}\big(h(X)\big)\big|\to 0$ as
      $T\to\infty$.
\end{enumerate}
\end{theorem}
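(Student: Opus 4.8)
The plan is to follow the same route as the proof of Theorem~\ref{thm:minimal-consistency} for the CCPF, replacing the one-shot tail bound of Theorem~\ref{thm:oneshot-bounds} by the geometric coupling-time tail of Theorem~\ref{thm:ccbpf-coupling-time}, and then to track the constants through the resulting geometric sums; throughout I work with $N\ge N_0(\theepsilon,\alpha,\beta)$. First I would separate the meeting time from the running time: writing $\tau_{\mathrm c}\defeq\inf\{n\ge1 : S_n=\tilde S_n\}$ for the first time the two chains in Algorithm~\ref{alg:unbiased} coincide, and using that coupled reference trajectories stay coupled, the running time satisfies $\tau=b\vee\tau_{\mathrm c}$. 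Conditioning on the (randomly initialised) pair $(S_0,\tilde S_0)$ and exploiting that the bound in Theorem~\ref{thm:ccbpf-coupling-time} is uniform over the initial reference trajectories, I obtain $\P(\tau_{\mathrm c}\ge n)\le\alpha^T\beta^{-n}$ unconditionally. In particular $\tau<\infty$ almost surely with a geometric tail, so all moments of $\tau$ are finite, and unbiasedness and finiteness of $\var(Z)$ then follow from the same general results invoked in Theorem~\ref{thm:minimal-consistency} (Theorem~\ref{thm:coupling-unbiased} and Lemma~\ref{lem:cauchy-terms}), exactly as in the CCPF case.

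For part~\eqref{item:mixing-tau} I would expand $\E[\tau]=\sum_{k\ge0}\P(\tau>k)$ and split the sum at $m\defeq b\vee\lceil\rho T\rceil$. For $k<m$ the trivial bound $\P(\tau>k)\le1$ contributes at most $m$. For $k\ge m\ge b$ one has $\{\tau>k\}=\{\tau_{\mathrm c}>k\}$, so $\P(\tau>k)=\P(\tau_{\mathrm c}\ge k+1)\le\alpha^T\beta^{-(k+1)}$, and summing this geometric series from $k=m$ gives exactly the stated correction $\alpha^T\beta^{-m}(\beta-1)^{-1}$. The key observation that makes the bound useful is $\alpha=\beta^{\rho}$, whence $\alpha^T\beta^{-m}\le\alpha^T\beta^{-\rho T}=1$; the correction is therefore $O(1)$ and $\E[\tau]\le m+(\beta-1)^{-1}$ grows only linearly in $T$. (Splitting at $m$ rather than at $b$ is essential: a split at a small $b$ would leave the exponentially large factor $\alpha^T\beta^{-b}$ uncontrolled.)

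For part~\eqref{item:mixing-var} I would invoke Lemma~\ref{lem:var_bound}, the general bound on $|\var(Z)-\var_{\pi_T}(h)|$ in terms of $\|\bar h\|_\infty$ and the coupling-time tail probabilities that already underlies the CCPF variance bound in Theorem~\ref{thm:minimal-consistency}, and substitute $\P(\tau_{\mathrm c}\ge n)\le\alpha^T\beta^{-n}$. Evaluating the resulting geometric sums reproduces the three ingredients of the claimed bound: the half-power $\beta^{-b/2}$ (from the Cauchy--Schwarz step applied at the burn-in boundary), the factor $(1-\beta^{-1})^{-2}$ (from summing the geometric corrections), and the amplitude $\alpha^T$ (inherited from the tail); matching these to the stated constant $16\alpha^T(1-\beta^{-1})^{-2}\beta^{-b/2}\|\bar h\|_\infty^2$ is then routine bookkeeping. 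The ``in particular'' claim follows by taking $b=\lceil\tilde\rho T\rceil$: since $\alpha=\beta^{\rho}$ we have $\alpha^T\beta^{-b/2}\le\beta^{(\rho-\tilde\rho/2)T}$, whose exponent is negative precisely when $\tilde\rho>2\rho$, so the bound vanishes as $T\to\infty$. The main obstacle here is conceptual rather than computational: unlike the CCPF bounds, the CCBPF bounds carry a $T$-dependent amplitude $\alpha^T$ arising from the progressive (as opposed to one-shot) nature of the coupling, and one must check that the burn-in $b$ can be taken to grow linearly fast enough to absorb it, with the threshold slope $2\rho$ forced by the factor of two lost in the half-power of the variance lemma. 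The remaining care is in transferring the uniform coupling-time tail to the algorithm's meeting time and in matching the geometric sums to the stated constants.
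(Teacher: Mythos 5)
Your proposal is correct and follows essentially the same route as the paper: the paper's (one-line) proof also combines the uniform tail bound $\P(\tau_{\mathrm c}\ge n)\le\alpha^T\beta^{-n}$ from Theorem~\ref{thm:ccbpf-coupling-time} with the geometric-sum computation of $\E[\tau]$ as in Theorem~\ref{thm:minimal-consistency} (your split at $b\vee\lceil\rho T\rceil$ is the intended adaptation), and obtains the variance bound by substituting $C=\alpha^T$, $\lambda=\beta^{-1}$ into Lemma~\ref{lem:var_bound}. Your additional care about transferring the bound from deterministic reference trajectories to the algorithm's random initialisation, and the observation $\alpha^T\beta^{-\lceil\rho T\rceil}\le 1$, correctly fill in details the paper leaves implicit.
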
 
\begin{proof} 
The results follow from Theorem~\ref{thm:ccbpf-coupling-time}
and Lemma~\ref{lem:var_bound},
similarly as in the proof of Theorem~\ref{thm:minimal-consistency}.
\end{proof}

Note that the latter term in Theorem~\ref{thm:mixing-result}~\eqref{item:mixing-tau} is at most $(\beta -1)^{-1}$, showing that
the expected coupling time is linear in $T$. Theorem~\ref{thm:mixing-result}~\eqref{item:mixing-var}
may be interpreted so
that the CCBPF algorithm is almost equivalent with perfect sampling
from $\pi_T$, when $b$ is increased linearly with respect to $T$.

We conclude the section with a number of remarks about
Algorithm~\ref{alg:unbiased}:
\begin{enumerate}[(i)]
\item We follow~\cite{jacob-lindsten-schon} and suggest an initialisation
based on a standard particle filter in line~\ref{line:init1}.
However, this initialisation may be changed to any other scheme,
which ensures that $S_0$ and $\tilde{S}_1$ have identical
distributions. Our results above do not depend on the chosen
initialisation strategy.
\item The estimator $Z$ is constructed for a single function
  $h:\X^T\to\R$, but several estimators may be constructed
  simultaneously for a number of functions $h_1,\ldots,h_m$. In fact,
  as~\cite{glynn-rhee} note,
  if we let $\tau \defeq \inf \{n\ge b\given S_n=\tilde{S}_n\}$,
  we may regard the random signed measure
  \[
      \textstyle
      \hat{\mu}_b(\uarg) \defeq \delta_{S_b}(\uarg) + \sum_{k={b+1}}^\tau
      [\delta_{S_k}(\uarg) - \delta_{\tilde{S}_k}(\uarg)]
  \]
  as the output, which will satisfy the unbiasedness
  $\E[\hat{\mu}_b(\varphi)]
  = \pi_T(\varphi)$ at least for all bounded measurable $\varphi:\X\to\R$.
\item It is also possible to construct a `time-averaged' estimator that
  corresponds to a weighted average of the estimators $\hat{\mu}_b$ over a
  range of values for $b$ \citep{jacob2017unbiased}.
\item We believe that the method is valid also without
  Assumption~\ref{a:g-minimal}
but may exhibit poor performance --- similar to the conditional
particle filter, which is sub-geometrically ergodic with unbounded potentials
\citep{andrieu-lee-vihola}.
\end{enumerate}


\section{Coupling time of CCBPF}
\label{sec:ccbpf-proof} 

Consider now the Markov chain $(S_k,\tilde{S}_k)_{k\ge 1}$ defined by
Algorihm~\ref{alg:unbiased}, with the stopping criterion (line~\ref{line:stopping}) omitted.
Define the `perfect coupling boundary' as
\[
    \kappa_n \defeq \kappa(S_n, \tilde{S}_n)
    \defeq
    \max\big\{t\ge 0\given S_{n,1:t} =
    \tilde{S}_{n,1:t} \big\},
\]
We are interested in upper bounding the stopping time $\tau \defeq
\inf\{n\ge 1\given S_n = \tilde{S}_n\} = \inf\{n\ge 1\given \kappa_n =
  T \}$.

Since the CCBPF is complicated, in our analysis we instead focus on a simplified
Markov chain that considers only the vector of numbers of identical particles
at each time $t \in \{1{:}T\}$. The boundary associated with this simpler
chain grows by i.i.d. increments, which are stochastically ordered
with respect to the increments of the CCBPF boundary increments, ultimately
allowing us to upper bound the stopping time.

We use stochastic ordering $X\lest Y$ of
two random variables $X$ and $Y$, which holds
if their distribution functions are ordered
$\P(X\le x) \ge \P(Y\le x)$ for all $x\in\R$.
Two random vectors $X$ and $Y$ are ordered $X\lest Y$
if $\E[\phi(X)]\le \E[\phi(Y)]$ for all functions $\phi:\R^n\to\R$ for
which the expectations exist, and which are increasing, in the sense
that $\phi(x)\le \phi(y)$ whenever $x\le
y$, where `$\le$' is the usual partial order $x\le y$ if $x_i \le y_i$
for all $i\in\{1{:}d\}$.
Recall also that $X\lest Y$ if and only if
there exists a probability space with random
variables $\tilde{X}$ and $\tilde{Y})$  such that $X\eqd \tilde{X}$ and
$Y\eqd \tilde{Y}$ and $\tilde{X}\le \tilde{Y}$ a.s.
\cite[Theorem 6.B.1]{shaked-shanthikumar}.

Our bound of $\tau$ is based on
an independent random variable $\Delta$, which
satisfies $\kappa_{n+1} - \kappa_n \gest \Delta \wedge
(T-\kappa_n)$, under Assumption~\ref{a:mixing}.

\begin{lemma}
\label{lem:domination} 
Suppose Assumption~\ref{a:mixing} holds,
and consider the output of Algorithm~\ref{alg:ccxpf} (CCBPF).
The perfect coupling boundaries satisfy
\[
    \kappa(X_{1:T}^{(J_{1:T})}, \tilde{X}_{1:T}^{(\tilde{J}_{1:T})})
    - \kappa(X_{1:T}^*,\tilde{X}_{1:T}^*)
    \gest \Delta\wedge \big(T-\kappa(X_{1:T}^*,\tilde{X}_{1:T}^*)\big),
\]
where the random variable $\Delta$ is defined through the following
procedure:
\begin{enumerate}[(i)]
\item Let $\hat{C}_t = N$ for $t\le 0$ and $\hat{C}_1 = N-1$, and set $s=1$.
  While $\hat{C}_s>0$:
  \begin{itemize}
  \item Simulate
  $\hat{C}_{s+1} \sim \mathrm{Binom}\Big(N-1,  \frac{\thedelta \hat{C}_{s}}{
    \thedelta \hat{C}_{s} +
        (N-\hat{C}_{s})}\Big)$.
  \item Let $s \gets s+1$.
  \end{itemize}
\item Set $\hat{C}_t=0$ for $t>s$, $\xi_{s}=0$ and $t=s-1$. While $t\ge 0$ or $\xi_{t+1}=0$:
  \begin{itemize}
      \item Simulate $\xi_{t} \sim\mathrm{Bernoulli}(p_{t})$,
        where
        \[
            p_t \defeq \begin{cases}
            p_{t}^{(0)} \defeq
        \theepsilon \frac{\hat{C}_{t}}{N}, & \xi_{t+1}=0 \\
        p_{t}^{(1)} \defeq
        \frac{\hat{C}_{t}\theepsilon}{\hat{C}_{t}\theepsilon + N -
          \hat{C}_{t}}, & \xi_{t+1}=1.
        \end{cases}
        \]
      \item Let $t \gets t-1$.
  \end{itemize}
\item Set $\Delta \gets
  \min\{i\ge t\given  \xi_{i}=0\}-1$.
\end{enumerate}
\end{lemma}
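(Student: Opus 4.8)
The plan is to bound the increment of the coupling boundary produced by a single CCBPF step from below, decoupling the analysis into the forward (selection/mutation) sweep and the backward sampling sweep, exactly mirroring the two phases of Algorithm~\ref{alg:ccxpf}. Write $\kappa_* \defeq \kappa(X_{1:T}^*, \tilde X_{1:T}^*)$ and let $C_t \defeq |\{i \given X_t^{(i)} = \tilde X_t^{(i)}\}|$ be the number of \emph{coupled particles} at time $t$. Since the two reference paths agree on $1{:}\kappa_*$ and the particles $2{:}N$ are generated in common, an easy induction shows $C_t = N$ for $t \le \kappa_*$, and that at $t = \kappa_*+1$ every common offspring is forced to couple while the reference particle cannot, giving $C_{\kappa_*+1} = N-1$. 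I would shift time by $\kappa_*$ so that the idealized processes in the statement are indexed from $\hat C_1 = N-1$, reducing the claim to bounding how far past $\kappa_*$ the \emph{output} trajectories remain coupled; the case $\kappa_* = T$ is vacuous through the cap $T-\kappa_* = 0$.

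For the forward sweep I would establish the key one-step estimate: conditional on there being $c$ coupled particles at a given time, each of the $N-1$ non-reference offspring couples, independently, with probability at least $\frac{\thedelta c}{\thedelta c + (N-c)}$. This holds because an offspring couples whenever the index-coupled resampling (Algorithm~\ref{alg:cres}) draws a common ancestor from the overlap $w \wedge \tilde w$ that lands on one of the coupled particles; since coupled particles carry equal unnormalised weights, the overlap mass they contribute equals $\frac{S_c}{S_c + (R \vee \tilde R)}$ with $S_c \ge c\,G_*$ and $R\vee\tilde R \le (N-c)G^*$, and dividing through by $G^*$ and minimising over $t$ produces $\thedelta$. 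Conditional independence of the $N-1$ draws in \textsc{CRes} then gives a $\mathrm{Binom}(N-1, p(c))$ lower bound; since $p(c)$ is increasing in $c$, a standard monotone coupling via the representation of $\lest$ in \citep[Theorem~6.B.1]{shaked-shanthikumar} chains these one-step dominations into $C_{\kappa_*+s} \gest \hat C_s$ for all $s$, where $\hat C$ is the binomial chain of the statement.

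The backward sweep is where the two cases of $p_t$ enter, and I expect it to be the main obstacle. Conditioning on the forward configuration, I would track the indicator $\xi^{\mathrm{out}}_t \defeq \charfun{X_t^{(J_t)} = \tilde X_t^{(\tilde J_t)}}$ as the backward index $t$ decreases, and lower bound the probability that a coupled backward index survives one more step down. The behaviour genuinely depends on whether the step above has already coupled: if $\xi^{\mathrm{out}}_{t+1}=1$ then $X_{t+1}^{(J_{t+1})} = \tilde X_{t+1}^{(\tilde J_{t+1})}$, the backward weights $b_t^{(i)}, \tilde b_t^{(i)}$ coincide on coupled particles, and the same overlap computation—now applied to $b_t$ built from $G_t$, $M_{t+1}$, $G_{t+1}$ rather than a single potential, which after minimising over $t$ yields $\theepsilon$—gives survival probability $p_t^{(1)} = \frac{\theepsilon \hat C_t}{\theepsilon \hat C_t + (N-\hat C_t)}$; if $\xi^{\mathrm{out}}_{t+1}=0$ the coupled particles no longer have matching backward weights, the cancellation in the denominator is lost, and only the weaker per-particle bound $p_t^{(0)} = \theepsilon \hat C_t/N$ survives. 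The delicate points are that these lower bounds must hold uniformly over all forward configurations with a given coupled-particle count, so that conditioning on the counts $C_t$ suffices, and that the resulting chain is monotone, with $p^{(1)} \ge p^{(0)}$ and both increasing in $\hat C_t$, so the forward domination $C_t \gest \hat C_t$ can be pushed through the backward recursion.

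Finally I would assemble the pieces. Because output coupling propagates downward with probability one through the all-coupled region $t \le \kappa_*$ once $\xi^{\mathrm{out}}_1 = 1$, the output boundary satisfies $\kappa_{\mathrm{new}} = \kappa_* + \Delta'$ with $\Delta' = \min\{i \ge 1 \given \xi^{\mathrm{out}}_i = 0\} - 1$ the length of the initial run of coupled indices. Coupling the actual $\xi^{\mathrm{out}}$-run above the idealized $\xi$-run of the statement—using the forward domination, the monotonicity of the backward probabilities, and the fact that coupled particles go extinct beyond $\hat C$'s first hitting time of $0$ (so $\xi$ cannot survive there)—gives $\Delta' \gest \Delta$, and the cap at $T - \kappa_*$ accounts for the boundary reaching $T$. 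Stitching the forward and backward monotone couplings onto a common probability space then yields the claimed $\kappa_{\mathrm{new}} - \kappa_* \gest \Delta \wedge (T - \kappa_*)$.
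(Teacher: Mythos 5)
Your decomposition and key estimates mirror the paper's proof closely: the forward-sweep overlap computation is exactly Lemma~\ref{lem:ic-res}~\eqref{item:ic-coupled}, the binomial domination chained by monotonicity is how the paper obtains $(|C_1|,\ldots,|C_T|)\gest \hat{C}_{(1-\kappa):(T-\kappa)}$ via \citet[Theorem~6.B.3]{shaked-shanthikumar}, and your two-case backward bounds $p_t^{(1)},p_t^{(0)}$ coincide with the paper's conditional bounds on $\P(J_t=\tilde{J}_t\in C_t\mid C_{1:T},J_{t+1:T})$. One small caution: define $C_t$ as the particles coupled through the resampling/mutation mechanism (as the paper does), not merely by value equality $X_t^{(i)}=\tilde{X}_t^{(i)}$; otherwise your claim that coupled particles carry equal unnormalised weights can fail, because $\omega_t^{(i)}$ also depends on the ancestor pair.

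However, the final assembly has a genuine gap. You assert $\kappa_{\mathrm{new}}=\kappa_*+\Delta'$ with $\Delta'\defeq \min\{i\ge 1\given \xi_i^{\mathrm{out}}=0\}-1\ge 0$. This identity is false: backward sampling can fail to couple at time $\kappa_*+1$, after which the backward weights $b_t^{(i)}$ and $\tilde{b}_t^{(i)}$ differ even at times $t\le\kappa_*$ (they involve the differing selected future states), so $J_t\neq\tilde{J}_t$ can occur there and the output boundary can \emph{retreat}, i.e.\ $\kappa_{\mathrm{new}}<\kappa_*$ is possible. This is precisely why the procedure in the statement defines $\xi_t$ for $t\le 0$ with $p_t^{(1)}=1$ and $p_t^{(0)}=\theepsilon$, and why $\Delta$ is allowed to be negative. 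Since $\gest$ requires $\P(\kappa_{\mathrm{new}}-\kappa_*\ge d)\ge \P\big(\Delta\wedge(T-\kappa_*)\ge d\big)$ for \emph{all} $d$, including $d<0$, your truncated $\Delta'$ addresses at most the nonnegative thresholds, and even for those it rests on the false equality above (when $\Delta'=0$ you cannot conclude $\kappa_{\mathrm{new}}-\kappa_*\ge 0$). The negative part is not a technicality: downstream, Lemma~\ref{lem:drift} bounds $\E[\alpha^{-\Delta}]$, whose finiteness hinges on the negative tail of $\Delta$ (hence the restriction $\alpha<(1-\theepsilon)^{-1}$). The paper avoids the issue by dominating the full indicator vector $\big(\charfun{\smash{J_1=\tilde{J}_1\in C_1}},\ldots,\charfun{\smash{J_T=\tilde{J}_T\in C_T}}\big)$ over all times, including $t\le\kappa$, by $\xi_{(1-\kappa):(T-\kappa)}$, and then applying the increasing functions $\phi_t(x_{1:T})=\prod_{u=1}^t\max\{0,x_u\}$ for every $t$; your argument would be repaired by extending your run comparison below the old boundary in the same way rather than truncating at zero.
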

\begin{proof} 
Denote in short $\kappa = \kappa(X_{1:T}^*,\tilde{X}_{1:T}^*)$,
and the indices of the coupled particles
\[
    C_t \defeq \{j \in \{1{:}N\} \given X_t^{(I_t^{(j)})} =
  \tilde{X}_t^{(\tilde{I}_t^{(j)})} \} \quad \text{for} \quad t\in\{1{:}T\}.
\]
Then, the sizes of $C_t$ satisfy the following:
\begin{align*}
    |C_t|&=N & \text{$t=1{:}\kappa$}, \\
    |C_{t}| \bigmid C_{1:t-1} &\gest \mathrm{Binom}\Big(N-1,
      \frac{\thedelta|C_{t-1}|}{\thedelta|C_{t-1}| +
        N-|C_{t-1}|}\Big) & \text{$t=(\kappa+1){:}T$,}
\end{align*}
where the latter follows by Lemma~\ref{lem:ic-res}~\eqref{item:ic-coupled}.
As the function $c\mapsto \thedelta c(N - (1-\thedelta)c)^{-1}$ is
increasing in $c$, and $\mathrm{Binom}(n,p) \gest
\mathrm{Binom}(n,p')$ for $p\ge p'$, it follows that
$(|C_{1}|,\ldots,|C_{T}|)\gest \hat{C}_{(1-\kappa):(T-\kappa)}$
\cite[Theorem 6.B.3]{shaked-shanthikumar}. This means that
we may construct (by a suitable coupling) $\hat{C}_t$ such that
$|C_t|\ge \hat{C}_{t-\kappa}$ for all $t\in\{1{:}T\}$.

By Lemma~\ref{lem:ic-res}, the backward sampling indices satisfy:
\begin{align*}
    \P(J_T = \tilde{J}_T\in C_T
    \mid C_{1:T}) &\ge  \frac{\thedelta|C_T|}{\thedelta |C_T| + N-|C_T|},
    \\
    \P(J_t=\tilde{J}_t\in C_t\mid C_{1:T}, J_{t+1:T})
    &\ge \begin{cases}
    \frac{\theepsilon|C_t|}{\theepsilon|C_t| + N-|C_t|},&\text{$J_{t+1}=\tilde{J}_{t+1}\in
      C_{t+1}$,}  \\
    \theepsilon \frac{|C_t|}{N},&\text{otherwise,}
    \end{cases} 
\end{align*}
for $t=1{:}(T-1)$.
By definition, $\thedelta\ge \theepsilon$, and therefore
$\frac{\thedelta c}{\thedelta c + N-c} \ge \frac{\theepsilon c}{\theepsilon c +
  N-c} \ge \frac{\theepsilon c}{N}$. This, together with
$|C_{T}|\gest \hat{C}_{T-\kappa}$
implies that $\P(J_T = \tilde{J}_T\in C_T)\ge \P(\xi_{T-\kappa}=1)$.
Similarly, by~\cite[Theorem 6.B.3]{shaked-shanthikumar},
we deduce that
\[
    \big(\charfun{\smash{J_1 = \tilde{J}_1\in C_1}},\ldots,
    \charfun{\smash{J_T =
      \tilde{J}_T\in C_T}}\big) \gest
    \xi_{(1-\kappa):(T-\kappa)}.
\]
Because the functions $\phi_t(x_{1:T}) = \prod_{u=1}^t \max\{0,x_u\}$
are increasing, the claim follows.
\end{proof} 

\begin{lemma}
\label{lem:ic-res} 
Suppose $0<\omega_*\le \omega^*<\infty$ and
$\omega^{(i)},\tilde{\omega}^{(i)}\in[\omega_*,\omega^*]$
for $i\in\{1{:}N\}$. Let
\[
\varepsilon \defeq  \frac{\omega_*}{\omega^*}
\qquad\text{and}\qquad
C \defeq \{j\in\{1{:}N\}\given
          \omega^{(j)}=\tilde{\omega}^{(j)}\}.
\]
Then, $(I^{(1:n)},\tilde{I}^{(1:n)}) \sim
\textsc{CRes}(\omega^{(1:N)}, \tilde{\omega}^{(1:N)}, n)$ satisfy the
following for all $j\in \{1{:}n\}$:
\begin{enumerate}[(i)]
    \item \label{item:ic-simple}
      $\P(I^{(j)}=\tilde{I}^{(j)}=i) \ge
      \frac{\varepsilon}{N}$
      for all $i=1{:}N$,
    \item \label{item:ic-coupled}
      $\P(I^{(j)}=\tilde{I}^{(j)}\in C)
    \ge \frac{|C| \varepsilon}{|C|\varepsilon + N-|C|}$.
\end{enumerate}
\end{lemma}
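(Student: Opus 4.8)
The plan is to exploit the maximal-coupling structure of \textsc{CRes}: each of the $n$ draws $(I^{(j)},\tilde{I}^{(j)})$ is produced by the same mixture, so it suffices to treat a single $j$. With probability $p_c$ the algorithm performs a \emph{coupled} draw, setting $I^{(j)}=\tilde{I}^{(j)}$ equal to an index sampled from $w_c^{(1:N)}$; with the complementary probability it draws the two indices independently from the residuals $w_r^{(1:N)}$ and $\tilde{w}_r^{(1:N)}$. First I would observe that the coupled draw alone already forces $I^{(j)}=\tilde{I}^{(j)}$, so for any index $i$,
\[
\P(I^{(j)}=\tilde{I}^{(j)}=i) \ge p_c\, w_c^{(i)} = w^{(i)}\wedge\tilde{w}^{(i)},
\]
where I discard the nonnegative contribution of residual draws that happen to coincide. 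This single inequality is the engine for both parts.

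For~\eqref{item:ic-simple} I would bound $w^{(i)}\wedge\tilde{w}^{(i)}$ from below directly. Since $\omega^{(i)}\ge\omega_*$ and $\sum_{j=1}^N\omega^{(j)}\le N\omega^*$, the normalised weight satisfies $w^{(i)}\ge\omega_*/(N\omega^*)=\varepsilon/N$, and identically $\tilde{w}^{(i)}\ge\varepsilon/N$; taking the minimum gives the claim.

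For~\eqref{item:ic-coupled} I would sum the engine inequality over $i\in C$, obtaining $\P(I^{(j)}=\tilde{I}^{(j)}\in C)\ge\sum_{i\in C}(w^{(i)}\wedge\tilde{w}^{(i)})$. The point is that for $i\in C$ the unnormalised weights agree, $\omega^{(i)}=\tilde{\omega}^{(i)}$, so writing $\Sigma\defeq\sum_j\omega^{(j)}$ and $\tilde{\Sigma}\defeq\sum_j\tilde{\omega}^{(j)}$ one has $w^{(i)}\wedge\tilde{w}^{(i)}=\omega^{(i)}/(\Sigma\vee\tilde{\Sigma})$, whence the sum equals $\big(\sum_{i\in C}\omega^{(i)}\big)/(\Sigma\vee\tilde{\Sigma})$. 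I then note that $\Sigma\vee\tilde{\Sigma}$ equals $\sum_{i\in C}\omega^{(i)}$ plus a residual over $i\notin C$ (of either $\omega$ or $\tilde{\omega}$, whichever normaliser is larger) bounded by $(N-|C|)\omega^*$, while $\sum_{i\in C}\omega^{(i)}\ge|C|\omega_*$. Since $x\mapsto x/(x+D)$ is increasing for fixed $D\ge0$, substituting these bounds and dividing through by $\omega^*$ yields
\[
\P(I^{(j)}=\tilde{I}^{(j)}\in C)\ge\frac{|C|\omega_*}{|C|\omega_*+(N-|C|)\omega^*}=\frac{|C|\varepsilon}{|C|\varepsilon+N-|C|}.
\]

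I expect the only genuine care to be needed in part~\eqref{item:ic-coupled}, namely the monotonicity step that lets the two differing normalisers $\Sigma$ and $\tilde{\Sigma}$ be replaced by the single worst-case denominator $\Sigma\vee\tilde{\Sigma}$ and then bounded; everything else is a direct reading of the maximal-coupling construction together with the elementary weight bounds $\omega^{(i)},\tilde{\omega}^{(i)}\in[\omega_*,\omega^*]$.
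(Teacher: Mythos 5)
Your proposal is correct and follows essentially the same route as the paper: both reduce each part to the maximal-coupling identity $p_c\,w_c^{(i)}=w^{(i)}\wedge\tilde{w}^{(i)}$, bound part~(i) by $w^{(i)}\ge\omega_*/(N\omega^*)$, and prove part~(ii) by writing $\sum_{i\in C}w^{(i)}\wedge\tilde{w}^{(i)}$ as $\sum_{i\in C}\omega^{(i)}$ over the common $C$-mass plus the larger of the two residual $C^c$-masses, then invoking monotonicity of $x\mapsto x/(x+b)$ with the bounds $\omega_*\le\omega^{(i)}\le\omega^*$. The only cosmetic difference is that you state the engine inequality with $\ge$ where the paper notes it is in fact an equality, which changes nothing for the lower bounds.
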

\begin{proof} 
Note that $\P(I^{(j)}=\tilde{I}^{(j)}=i) = w^{(i)}\wedge
\tilde{w}^{(i)}$, so
the first bound is immediate.
For the second, let $C^c \defeq
\{1{:}N\}\setminus C$, and observe that
\begin{align*}
    \sum_{j\in C} w^{(j)} \wedge \tilde{w}^{(j)}
    &= \frac{\sum_{j\in C} \omega^{(j)}}{\sum_{i\in C} \omega^{(i)} +
      \big(\sum_{i\in C^c} \omega^{(i)}\big) \vee
      \big(\sum_{i\in C^c} \tilde{\omega}^{(i)}\big)}
    \\
    &\ge \frac{ |C| \omega_*
       }{
      |C| \omega_*  + |C^c| \omega^*},
\end{align*}
because $x\mapsto x(x+b)^{-1}$ is increasing for $x\ge 0$ for any $b>0$.
The last bound equals~\eqref{item:ic-coupled}.
\end{proof}

Because $\kappa_{n+1} - \kappa_n \gest \Delta \wedge
(T-\kappa_n)$, we note that $\tau\lest \hat{\tau}$,
where
\begin{equation}
    \hat{\tau} \defeq \inf\bigg\{ n\ge 0 \given \sum_{k=1}^n \Delta_k
      \ge T \bigg\},
    \label{eq:tau-hat}
\end{equation}
and $\Delta_k$ are independent realisations of $\Delta$ in
Lemma~\ref{lem:domination}. The next lemma indicates that if $N$ is
large enough (given $\thedelta,\theepsilon$), the random variables $\Delta$
are well-behaved, and ensure good expectation and tail probability
bounds for $\hat{\tau}$.

\begin{lemma}
\label{lem:drift} 
Given any $N\in\N$,
consider the random variable $\Delta$ defined in Lemma~\ref{lem:domination}.
For any $\beta<\infty$ and $\alpha\in\big(1,
(1-\theepsilon)^{-1}\big)$,
there exists
$N_0<\infty$ such that for all $N\ge N_0$,
\[
    \E[\Delta] \ge \beta \qquad\text{and}\qquad
    \E[\alpha^{-\Delta}] \le \beta^{-1}.
\]
\end{lemma}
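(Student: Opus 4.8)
The plan is to show that both inequalities follow from a single qualitative fact: as $N\to\infty$ the increment $\Delta$ becomes stochastically large, with a lower tail decaying geometrically at rate $1-\theepsilon$. Since $\alpha>1$, the map $\delta\mapsto\alpha^{-\delta}$ is decreasing, so $\E[\alpha^{-\Delta}]\le\beta^{-1}$ asks that $\Delta$ be large with high probability, and $\E[\Delta]\ge\beta$ asks the same in mean; the only delicate point is that the contribution to $\E[\alpha^{-\Delta}]$ of the unlikely event $\{\Delta\text{ small or negative}\}$ stays controlled, and this is exactly where the restriction $\alpha<(1-\theepsilon)^{-1}$ enters, as the integrability threshold for $\alpha^{\mathrm{Geom}(\theepsilon)}$.

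First I would analyse the forward chain $\hat C$. Writing $D_s \defeq N-\hat C_s$ for the number of uncoupled indices, a short computation with the binomial mean shows that, to leading order in the regime $D_s\ll N$, one has $\E[D_{s+1}\mid\hat C_s]\approx 1+D_s/\thedelta$ with $1/\thedelta\ge 1$. Hence $\E[D_s]$ grows at most geometrically, so for any fixed $\eta\in(0,1)$ the ``healthy window'' on which $\hat C_s\ge(1-\eta)N$ has length at least $m(N)$ with $m(N)\to\infty$ as $N\to\infty$ (of order $\log N/\log(1/\thedelta)$ when $\thedelta<1$, and of order $\eta N$ when $\thedelta=1$). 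Markov's inequality together with a union bound over $s\le m$ then shows that for any fixed $m$ and $\eta$ the event $A_m\defeq\{\hat C_s\ge(1-\eta)N\text{ for all }s\le m\}$ has $\P(A_m)\to 1$ as $N\to\infty$.

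On $A_m$ I would bound the backward chain $\xi$ from below by a tractable two-state chain. In the healthy window the success probabilities satisfy $p_t^{(0)}\ge\theepsilon(1-\eta)$ and $p_t^{(1)}\ge 1-O(\eta)$, while for $t\le 0$ one has exactly $p_t^{(0)}=\theepsilon$ and $p_t^{(1)}=1$. Using that smaller $p_t$ yield a stochastically smaller $\xi$ and hence a smaller $\Delta$ (the functional $\Delta$ is increasing in $\xi$, so the comparison proceeds by the same increasing-function argument as in Lemma~\ref{lem:domination} via a monotone coupling of the Markov chains), $\Delta$ dominates the increment of a chain with constant parameters $\theepsilon(1-\eta)$ and $1-O(\eta)$ on $\{1,\dots,m\}$. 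Because $p^{(1)}\approx 1$ makes a run of $1$'s persist once started, and such a run starts within a $\mathrm{Geom}(\theepsilon(1-\eta))$ number of steps, this gives a stochastic lower bound of the form $\Delta\gest m-G$ with $G$ geometric. Consequently $\E[\Delta]\ge m-\E[G]\to\infty$, which delivers $\E[\Delta]\ge\beta$ for $N$ large, and $\E[\alpha^{-\Delta}\charfun{A_m}]\le\alpha^{-m}\E[\alpha^{G}]$, where $\E[\alpha^{G}]<\infty$ precisely because $\alpha(1-\theepsilon)<1$ after taking $\eta$ small, and this tends to $0$ as $m\to\infty$.

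It remains to absorb the bad event. On $A_m^c$ the increment $\Delta$ may be negative, but the $t\le 0$ part of the construction (where the chain couples with probability $\theepsilon$ at each step) forces $\Delta\gest -G'$ with $G'\sim\mathrm{Geom}(\theepsilon)$, whose law does not depend on $N$; thus $\alpha^{-\Delta}$ is dominated by the integrable, $N$-uniform variable $\alpha^{G'}$ (integrable again by $\alpha<(1-\theepsilon)^{-1}$), so $\E[\alpha^{-\Delta}\charfun{A_m^c}]\le\E[\alpha^{G'}\charfun{A_m^c}]\to 0$ as $\P(A_m^c)\to 0$ by uniform integrability. Choosing first $\eta$ small (depending on $\alpha,\theepsilon$), then $m$ large (depending on $\beta$), and finally $N_0$ large enough that $m(N)\ge m$ and $\P(A_m^c)$ is negligible for $N\ge N_0$, yields both $\E[\Delta]\ge\beta$ and $\E[\alpha^{-\Delta}]\le\beta^{-1}$. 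The main obstacle I anticipate is the comparison in the third paragraph: the backward chain $\xi$ depends on the entire random forward trajectory $\hat C$, including its death phase, so the stochastic comparison to a constant-parameter chain must be carried out jointly and only on the healthy portion, with the negative-index regime handled separately so that $\Delta$'s law is independent of $\kappa$ and the geometric penalty remains integrable.
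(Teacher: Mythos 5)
Your overall strategy is essentially the paper's: show that over a fixed window the forward chain stays healthy ($\hat{C}_t/N$ close to $1$) with probability tending to one as $N\to\infty$, deduce that inside the window the backward Bernoulli chain has parameters close to $(\theepsilon,1)$ while below index $0$ they are exactly $(\theepsilon,1)$, and conclude that $\Delta$ is stochastically close to the window length minus a geometric variable whose exponential moment is finite precisely because $\alpha(1-\theepsilon)<1$. The paper implements this via an exact conditional decomposition of the law of $\Delta$ together with Lemma~\ref{lem:binom-expectation}; your monotone-coupling comparison with a two-state chain, and your uniform-in-$N$ geometric bound $\P(\Delta\le -k)\le(1-\theepsilon)^k$ for the negative tail and the bad event, are legitimate alternatives and match the paper's treatment of the $t\le 0$ regime.

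There is, however, a genuine error in your order of quantifiers, and it breaks the key step. You fix $\eta$ first (depending only on $\alpha,\theepsilon$) and then take $m$ large (depending on $\beta$). But on the healthy event your comparison chain has $1-p^{(1)}\asymp\eta/\theepsilon$, so a run of ones breaks with probability of order $\eta$ at \emph{every} step of the window; the probability that a run survives the whole window is of order $(1-\eta/\theepsilon)^m$, which tends to $0$ once $m\gg\theepsilon/\eta$. Consequently the claimed domination $\Delta\gest m-G$ with $G$ geometric \emph{of law independent of $m$} is false for the comparison chain: its increment has expectation $O(\theepsilon/\eta)$ uniformly in $m$, and its exponential moment $\E[\alpha^{-\cdot}]$ stays bounded away from $0$ as $m\to\infty$. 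Hence, with your stated order of choices, the argument cannot yield $\E[\Delta]\ge\beta$ or $\E[\alpha^{-\Delta}]\le\beta^{-1}$ for arbitrary $\beta$. The fix is to reverse the order: fix the window length $m$ first (as a function of $\alpha,\beta,\theepsilon$), then choose $\eta$ small depending on $m$ (e.g.\ $\eta\le\theepsilon/m^2$, so that $m(1-p^{(1)})\to 0$), and only then take $N_0$ large; equivalently, for each fixed window derive bounds in the limit $N\to\infty$, where the window parameters converge to $(\theepsilon,1)$. This is exactly how the paper proceeds: it fixes $L$, uses Lemma~\ref{lem:binom-expectation} to get $p_t^{(0)}\to\theepsilon$ and $p_t^{(1)}\to 1$ for all $t\le L$ as $N\to\infty$ (so that the union-bound term $\sum_{u=t}^L(1-p_u^{(1)})$ vanishes), obtains
\begin{equation*}
\limsup_{N\to\infty}\E[\varphi(\Delta)]\le\varphi(L)+\sum_{t=0}^{\infty}(1-\theepsilon)^t\theepsilon\,\varphi(L-t)
\end{equation*}
for decreasing $\varphi$, and only then sends $L\to\infty$ through the two choices $\varphi(t)=(L-t)_+$ and $\varphi(t)=\alpha^{-t}$.
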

\begin{proof} 
Suppose $\varphi:\N\to\R_+$ is decreasing and $L\in\N$, then
\begin{align*}
    \E[\varphi(\Delta)\mid \hat{C}_{1:s}] &=
    \sum_{t=-\infty}^s \P(\xi_{-\infty:t}=1,\xi_{t+1}=0\mid \hat{C}_{1:s}) \varphi(t) \\
    &\le \sum_{t=-\infty}^{L\wedge s} \P(\xi_{-\infty:t}=1,\xi_{t+1}=0\mid \hat{C}_{1:s}) \varphi(t)
    + \varphi(L),
\end{align*}
and because $p_u^{(1)}=1$ and $p_u^{(0)} = \theepsilon$
for $u\le 0$, we may write
\begin{align*}
    \P(\xi_{-\infty:t}=1,\xi_{t+1}=0&\mid \hat{C}_{1:s}) \\
        &= \bigg[\prod_{u=1}^{t-1} p_u^{(1)} \bigg] p_t^{(0)}
        \bigg[\prod_{u=t+1}^{0} (1-\theepsilon)\bigg]
       \P(\xi_{(t+1)\vee 1}=0\mid \hat{C}_{1:s}).
\end{align*}
Furthermore, for $t\in\{1{:}L\}$,
\begin{align*}
    \P(\xi_t=0\mid \hat{C}_{1:s})
    &= \P(\xi_{t:(L+1)}=0\mid \hat{C}_{1:s})
    + \hspace{-4ex}\sum_{b\in \{0,1\}^{L-t},\,b\neq 0}\hspace{-4ex}
    \P(\xi_t = 0, \xi_{t+1:(L+1)}=b\mid \hat{C}_{1:s}) \\
    &\le \prod_{u=t}^{L} (1-p_u^{(0)})
    + \sum_{u=t}^L (1-p_u^{(1)}).
\end{align*}
Lemma~\ref{lem:binom-expectation} implies that for $t=1{:}L$,
the terms
$R_t \defeq \hat{C}_t/N \to 1$
in probability as $N\to\infty$, and consequently also
$p_t^{(1)} \to 1$
and $p_t^{(0)}\to \theepsilon$.
We conclude that whenever $\sum_{t<0} (1-\theepsilon)^t \varphi(t)<\infty$,
\[
    \limsup_{N\to\infty}
    \E[\varphi(\Delta)] \le \varphi(L) + \sum_{t=-\infty}^{L}
    (1-\theepsilon)^{L-t}\theepsilon \varphi(t)
    = \varphi(L) + \sum_{t=0}^\infty (1-\theepsilon)^t \theepsilon \varphi(L-t).
\]
We get the first bound by
and applying the result above with
$\varphi(t) = (L-t)_+$, because $\E[\Delta] \ge L - \E[(L-\Delta)_+]$,
and $\limsup_{N\to\infty} \E[(L-\Delta)_+]\le 1-\theepsilon^{-1}$.
The second bound follows by taking $\varphi(t) = \alpha^t$, because
\[
    \limsup_{N\to\infty}
    \E[\varphi(\Delta)] \le \alpha^{-L} +
    \sum_{t=0}^\infty (1-\theepsilon)^t \theepsilon \alpha^{t-L}
    = \alpha^{-L} \big[ 1 + \theepsilon \big( 1 - (1-\theepsilon)\alpha
    \big)^{-1}\big].\qedhere
\]
\end{proof} 

\begin{lemma}
\label{lem:binom-expectation} 
The expectation of $\hat{C}_t$ generated in Lemma~\ref{lem:domination} may
be lower bounded as follows:
\[
    \E\bigg[\frac{\hat{C}_t}{N}\bigg] \ge \frac{\delta_N^{t-1}(N-1)}{1
      + \delta_N^{t-1}(N-1)},
    \qquad\text{where}\qquad
    \delta_N \defeq \frac{N-1}{N}
\thedelta.
\]
Therefore, for any $t\in\N$ and $\varepsilon>0$, there exists $N_0$
such that for all $N\ge N_0$ and all $u=1{:}t$,
$\E[\hat{C}_u/N]\ge 1-\varepsilon$.
\end{lemma}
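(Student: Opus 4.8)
The plan is to collapse the stochastic recursion defining $\hat{C}_t$ in Lemma~\ref{lem:domination} to a deterministic scalar recursion for its mean, close that recursion with a single convexity (Jensen) step, and then linearise and solve it explicitly. First I would set $m_t\defeq\E[\hat{C}_t]$ and combine the tower property with the mean of a binomial to obtain the one-step identity
\[
    m_{t+1} = (N-1)\,\E\big[\phi(\hat{C}_t)\big],
    \qquad
    \phi(c)\defeq \frac{\thedelta c}{\thedelta c + N - c}
    = \frac{\thedelta c}{N-(1-\thedelta)c} .
\]
On the relevant range $c\in[0,N)$ (note $\hat{C}_t\le N-1$ almost surely) the map $\phi$ is increasing and convex, since $\phi''(c)=2\thedelta(1-\thedelta)N\,\big(N-(1-\thedelta)c\big)^{-3}\ge 0$.

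Convexity lets Jensen's inequality point in the useful direction: $\E[\phi(\hat{C}_t)]\ge \phi(m_t)$, giving the closed recursive bound $m_{t+1}\ge (N-1)\phi(m_t)$ started from $m_1=N-1$. Because $c\mapsto (N-1)\phi(c)$ is itself increasing, a routine induction then yields $m_t\ge \bar m_t$, where $\bar m_t$ is the deterministic orbit defined by $\bar m_1=N-1$ and $\bar m_{t+1}=(N-1)\phi(\bar m_t)$. The Möbius form of $\phi$ is exactly what makes this orbit tractable: under the reciprocal/odds substitution $y_t\defeq N/\bar m_t-1$, a short computation converts the orbit into the affine recursion
\[
    y_{t+1} = \frac{1}{\delta_N}\,y_t + \frac{1}{N-1},
    \qquad y_1 = \frac{1}{N-1},
\]
in which the coefficient $\delta_N=\frac{N-1}{N}\thedelta$ is precisely the ratio produced by the $N-1$ trials against the $N$ in the denominator of $\phi$. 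Solving this geometric recursion in closed form and re-expressing $\E[\hat{C}_t/N]=\bar m_t/N=(1+y_t)^{-1}$ delivers a lower bound of the stated Möbius shape, namely a ratio whose numerator and denominator both grow like $\delta_N^{t-1}(N-1)$.

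The corollary follows immediately from this form. For fixed $t$ and $N\to\infty$ one has $\delta_N\to\thedelta\in(0,1)$, so the geometric factors stay bounded while the additive $\tfrac{1}{N-1}$ term forces $y_t=O(1/N)\to 0$; hence $\E[\hat{C}_t/N]\to 1$. Since the recursion makes $y_u$ increasing in $u$ (as $\delta_N^{-1}\ge 1$), controlling $y_t$ controls $y_u$ for every $u\le t$ at once, so a single $N_0$ guarantees $\E[\hat{C}_u/N]\ge 1-\varepsilon$ simultaneously for all $u\in\{1{:}t\}$.

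The hard part will be the algebra of the linearisation rather than any deep probabilistic input. The crux is keeping the Jensen step oriented correctly, which forces the argument to track the mean $m_t$ through the convex $\phi$ directly, rather than through an odds-type functional where the binomial fluctuations would reverse the inequality. I also expect the delicate point to be verifying that the constant emerging from the solved recursion is exactly $\delta_N$ (and that the solved $y_t$ matches the claimed denominator), since the bound is tight enough that a nearby constant would not suffice; checking that $\hat{C}_t<N$ throughout, so that $\phi$ is genuinely convex on the range used, is a small but necessary part of this.
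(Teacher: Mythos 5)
Your argument is, at its core, the paper's argument run in the opposite direction: both reduce everything to the one-step mean $\E[R_t\mid R_{t-1}]=\delta_N R_{t-1}/(1-(1-\thedelta)R_{t-1})$ (with $R_t\defeq\hat{C}_t/N$), which is an increasing convex M\"obius map, and both apply Jensen once per time step. The paper composes functionals $r\mapsto ar/(b-\lambda r)$ backwards from time $t$ down to time $1$; you iterate the mean forwards and then solve via the odds substitution. The two computations produce the identical quantity, namely the $(t-1)$-fold composition of the M\"obius map evaluated at $(N-1)/N$, and every individual step you take (tower property, convexity and monotonicity of $\phi$, the direction of Jensen, the monotone induction, and the affine recursion $y_{t+1}=y_t/\delta_N+1/(N-1)$ with $y_1=1/(N-1)$) is correct.

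The gap is your final, hedged claim that solving the recursion ``delivers a lower bound of the stated M\"obius shape''. It does not deliver the stated bound. The exact solution of your recursion is
\[
    y_t = \frac{1}{N-1}\sum_{j=0}^{t-1}\delta_N^{-j},
\]
whereas the stated bound corresponds to odds equal to the leading term $\delta_N^{-(t-1)}/(N-1)$ alone; the remaining terms are strictly positive for $t\ge 2$, so your (valid) bound $(1+y_t)^{-1}$ lies strictly \emph{below} the stated one and cannot recover it. Moreover, no repair is possible, because the displayed inequality of the lemma is in fact false: at $t=2$ your recursion is exact (since $\hat{C}_1=N-1$ is deterministic), giving $\E[\hat{C}_2/N]=(N-1)/(N+\delta_N^{-1})$, which is strictly smaller than the stated bound $(N-1)/(N-1+\delta_N^{-1})$ for every $N\ge 2$ and $\thedelta\in(0,1]$; concretely, for $N=2$, $\thedelta=1$, one has $\hat{C}_2\sim\mathrm{Binom}(1,1/2)$, so $\E[\hat{C}_2/N]=1/4$ while the stated bound is $1/3$. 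The source is the last inequality in the paper's own proof, which requires $(1-\thedelta)/(1-\delta_N)\ge 1$, i.e.\ $\delta_N\ge\thedelta$, whereas $\delta_N=\frac{N-1}{N}\thedelta<\thedelta$: that weakening step points the wrong way. What your argument does prove is the corrected quantitative bound $\E[\hat{C}_t/N]\ge \delta_N^{t-1}(N-1)\big/\big(t+\delta_N^{t-1}(N-1)\big)$, obtained by bounding the geometric sum by $t\,\delta_N^{-(t-1)}$, and since $y_u$ is increasing in $u$ and of order $O(1/N)$ for each fixed $u\le t$, it also proves the lemma's second, qualitative conclusion --- which is the only part used downstream, in Lemma~\ref{lem:drift}, so the paper's main results are unaffected. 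You should state your result in that corrected form rather than assert the displayed inequality.
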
 
\begin{proof} 
Denote $R_t \defeq \hat{C}_t/N$,
then
for any $t\ge 1$, we have
\[
    \E[R_t\mid R_{t-1}]
      = \frac{\delta_N R_{t-1}
        }{ 1-
        (1-\thedelta) R_{t-1}  }
\]
Note that for $a,b>0$ and $\lambda\in[0,b)$, the function
$x\mapsto ax(b-\lambda x)^{-1}$
is convex on $[0,1]$. Therefore, by Jensen's inequality,
\begin{align*}
    \E\bigg[ \frac{a_t R_t}{b_t - \lambda_t R_t}
       \,\bigg|\, R_{t-1}\bigg]
    &\ge \frac{a_t \E[R_t\mid R_{t-1}] }{b_t -
      \lambda_t \E[R_t\mid R_{t-1}]
      } \\
    &= \frac{a_t \delta_N R_{t-1} }{b_t
      [1 -(1-\thedelta) R_{t-1}] - \lambda_t \delta_N R_{t-1}} \\
    &= \frac{a_{t-1} R_{t-1}}{b_{t-1} - \lambda_{t-1}
      R_{t-1}},
\end{align*}
where $a_{t-1} = \delta_N a_t$, $b_{t-1} = b_t$ and
$\lambda_{t-1} = \delta_N \lambda_{t} +
    (1-\thedelta) b_t$.
Starting with $a_t = 1$, $b_t = 1$ and $\lambda_t=0$, we conclude that
\[
    a_1 = \delta_N^{t-1}, \qquad
    b_1 = 1, \qquad\text{and}\qquad
    \lambda_1 = (1-\thedelta)\sum_{k=0}^{t-2} \delta_N^k
    = (1-\thedelta) \frac{1 - \delta_N^{t-1}}{1 - \delta_N},
\]
and consequently,
\[
    \E[R_t] \ge \frac{a_1 R_1}{b_1 + \lambda_1 R_1}
    = \frac{\delta_N^{t-1} \frac{N-1}{N}}{1 -
      (1-\thedelta) \frac{1 - \delta_N^{t-1}}{1 - \delta_N} \frac{N-1}{N}}
    \ge \frac{\delta_N^{t-1} \frac{N-1}{N}}{1 -
      (1 - \delta_N^{t-1}) \frac{N-1}{N}},
\]
because $(1-\delta_N)>(1-\thedelta)$.
This equals the desired bound.
\end{proof} 

\begin{remark} 
In order to make the bound in Lemma~\ref{lem:binom-expectation} large,
we must have $\delta_N^{t-1}(N-1)\gg 1$. Because $\delta_N^{t-1}\ge
(1-t/N)\thedelta^{t-1}$, it is sufficient that we take $N\gg t$ and
$\log(N) \ge c+ t(-\log\thedelta)$. Usually the latter is dominant,
meaning that $N$ of order $\thedelta^{-t}$ is necessary.

We simulated the random variables $\hat{C}_t$, and observed a similar
`cutoff' --- a $\thedelta^{-1}$-fold increase in $N$ caused the
`boundary' where $\hat{C}_t/N$ starts to drop from zero around one to
zero, to extend by one step further. We believe that Lemma~\ref{lem:binom-expectation} captures the behaviour of $\hat{C}_t$
rather accurately. However, we believe that $\hat{C}_{t-\kappa}$ are
often rather pessimistic compared with the actual couplings $|C_t|$.
\end{remark} 

\begin{proof}[Proof of Theorem~\ref{thm:ccbpf-coupling-time}] 
The result follows from the fact that $\P(\tau \ge n) \le
\P(\hat{\tau} \ge n)$ where $\hat{\tau}$ is given in~\eqref{eq:tau-hat},
and a Chernoff bound
\begin{align*}
    \P(\hat{\tau} \ge n)  \le
    \P\bigg(\sum_{i=1}^n \Delta_i \le T\bigg)
     \le \min_{u>0} e^{uT} \prod_{i=1}^n \E[e^{-u \Delta_i}]
    \le \hat{\alpha}^T \beta^{-n},
\end{align*}
where the final inequality uses Lemma~\ref{lem:drift} by choosing
$u=\log\hat{\alpha}$, for some $\hat{\alpha} \le \alpha$ such that
$\hat{\alpha}\in\big(1,(1-\theepsilon)^{-1}\big)$.
\end{proof} 


\section{Empirical comparison}
\label{sec:empirical} 

We compare the CCBPF with the CCPF, as well as the CCPF with ancestor sampling
proposed by~\cite{jacob-lindsten-schon} that was found to outperform the CCPF in their experiments.
Like the CCBPF, the CCPF with ancestor sampling is a coupling of the CBPF.
To emphasize the main difference between the algorithms, we abbreviate the
variants according to the way ancestors are sampled as declared in lines 15 and 17 of Algorithm~\ref{alg:ccxpf}.
The CCPF is denoted AT (ancestor tracing),
the CCBPF is denoted BS (backward sampling),
and the CCPF with ancestor sampling is denoted AS (ancestor sampling).

The computational cost of Algorithm~\ref{alg:unbiased} is the coupling time $\tau$
multiplied by the cost of each run of the CC\textsc{x}PF, at least when the burn-in $b$ is small.
We refer to this as the total cost of coupling.
For AT, AS and BS the cost per iteration is $O(NT)$, and AS and BS are often
around twice the cost of AT.
Hence the expected total cost of coupling is $\mathbb{E}[\tau]NT$, up to some
constant depending on the model, with an additional factor of about 2 for AS and BS.
We investigate below the dependence of $\tau$ and $\mathbb{E}[\tau]$ on $N$ and $T$ for two models:
\begin{enumerate}
\item the linear Gaussian model used by~\cite{jacob-lindsten-schon} with the same data;
\item a simple homogeneous model where the transitions correspond to a simple random walk model with standard normal increments, and the potential functions are $G_t(x_t)=\mathbf{1}_{[-s,s]}(x_t)$ for $t \in \{1,\ldots,T\}$, with either $s=5$ or $s=10$.
\end{enumerate}
To give an idea of computational cost in seconds, the cost of running AT, AS and BS with
$(T,N) = (1000,1024)$ for the linear Gaussian model is respectively approximately
$100$ms, $160$ms and $160$ms on both a Xeon E5-2667v3 CPU and a 2018 Macbook Air.

For fairer comparison, we report results with proposals using common random
numbers to execute line~\ref{line:uncoupled-mutation} of Algorithm~\ref{alg:ccxpf}, as suggested in \citep{jacob-lindsten-schon}.
Our implementation of AT and AS differs from theirs only by minor modifications
that empirically have no substantive effect on any important characteristics of
the algorithm; for instance they quantile-couple the residual indices (cf. line 7 of
Algorithm~\ref{alg:cres}).
In Appendix~\ref{sec:supp_sims} we report the results of similar experiments run
without common random numbers in which both AT and AS
perform considerably worse, while BS is only slightly affected.
This suggests that accurate analysis of AT or AS may require analysis of the effect
of common random numbers which has thus far not been undertaken here or in \citet{jacob-lindsten-schon}.

In Figures~\ref{fig:mean-coupling-times} (a)--(c) we plot the mean coupling times
from 1000 replications using the linear Gaussian model, for every combination of
$T \in \{50,100,200,400,800,1600,3200\}$ and $N \in \{64, 128, 256, 512, 1024\}$.
If at least one replication resulted in a coupling time above $2000$, the results
for that $(T,N)$ combination are excluded.
For AT, we see that $N$ needs to grow roughly linearly in $T$ to
maintain the same mean coupling time, and that smaller values of $N$
do not lead to successful coupling in a reasonable amount of time: for
$T \in \{1600, 3200\}$, even $N=1024$ was insufficient to have a reasonable mean
coupling time.
AS and BS exhibit somewhat similar characteristics, with BS having a smaller
mean coupling time for $T$ large.
For $N \in \{64, 128\}$, AS did not have a reasonable mean coupling time for large $T$, unlike BS.
We complement Figures~\ref{fig:mean-coupling-times} (a)--(c) with
Table~\ref{tab:jls-couplingtimes} for combinations of $(T,N)$ in which all algorithms
are somewhat competitive. The table indicates that in this regime,
BS coupling times have a smaller variance.

In Figure~\ref{fig:mean-coupling-times} (d) we plot the mean coupling times for
all of the approaches for the simple homogeneous model with $s=10$.
We used all combinations of
$T \in \{500,1000,2000,4000\}$ and $N \in \{16, 32, 64, 128\}$, with results for
a combination excluded if a replication resulted in a coupling time above $10T$.
This is a homogeneous model, and we see that BS has a mean
coupling time that is roughly linear in $T$.
In contrast, the mean coupling time appears to grow superlinearly for AS.
We observed also that for large $T$, BS did not result in a steady
increase of the coupling boundary for $N \leq 32$.
This suggests that there is indeed a threshold value of $N_0$ necessary for
linear in $T$ mixing: it seems one cannot take $N_0 = 2$ universally in
Theorem~\ref{thm:ccbpf-coupling-time} with an appropriately large $\rho$.
While this model is relatively simple, it is challenging for coupled
conditional particle filters because the potential functions are not very
informative about the location of a particle in space.

The two examples show quite different behaviour for AS: in the linear Gaussian model
it appears that AS may have linear-in-time convergence for the range of $T$ considered,
even if its performance appears to be worse than that of BS.
However, for the simple but challenging homogeneous model AS appears not to enjoy
this property. We suspect that the good behaviour of AS in the former case is due
to the combination of using common random numbers (cf. Figure~\ref{fig:mean-coupling-times-ind})
and the fact that the potential functions are highly informative about particle locations.
It is not clear that this type of behaviour will extend to more challenging scenarios, such as
when the state space is higher dimensional or the smoothing distribution is multimodal.

\begin{figure}
    \centering
    \begin{subfigure}[b]{0.49\linewidth}
    \centering
    \includegraphics[width=\linewidth]{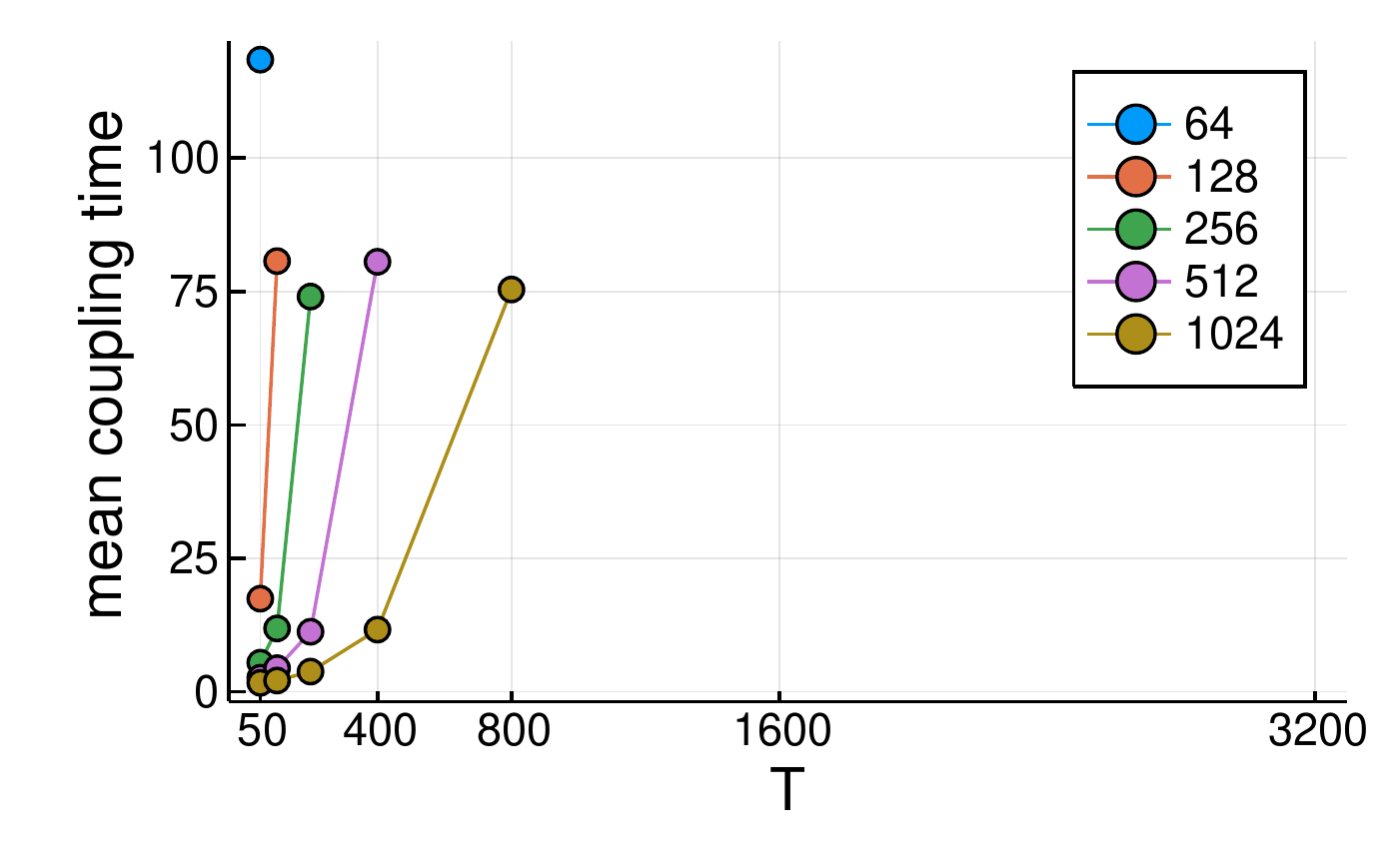}
    \caption{Linear Gaussian: AT}
    \end{subfigure}
    \begin{subfigure}[b]{0.49\linewidth}
    \centering
    \includegraphics[width=\linewidth]{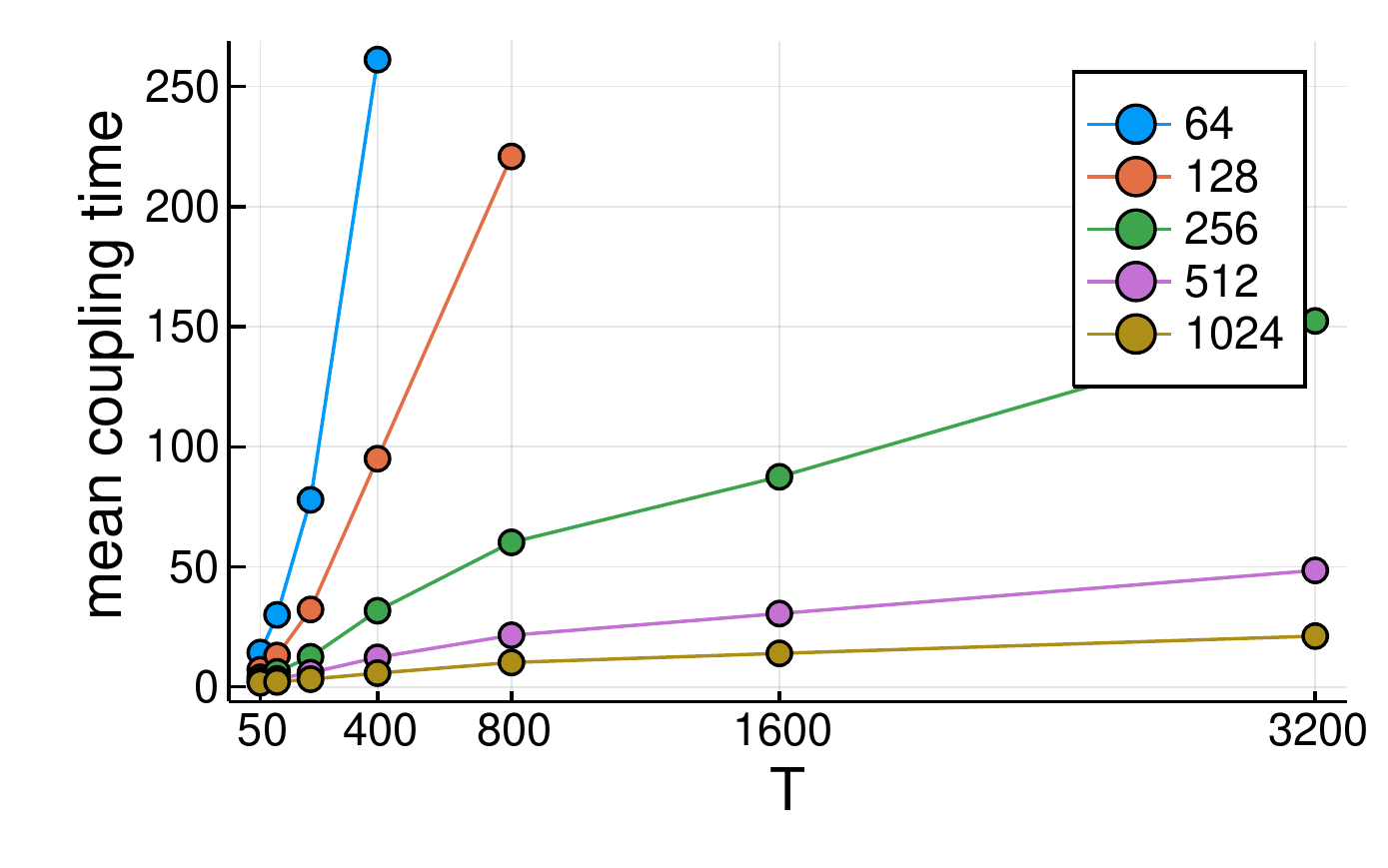}
    \caption{Linear Gaussian: AS}
    \end{subfigure}
    \begin{subfigure}[b]{0.49\linewidth}
    \centering
    \includegraphics[width=\linewidth]{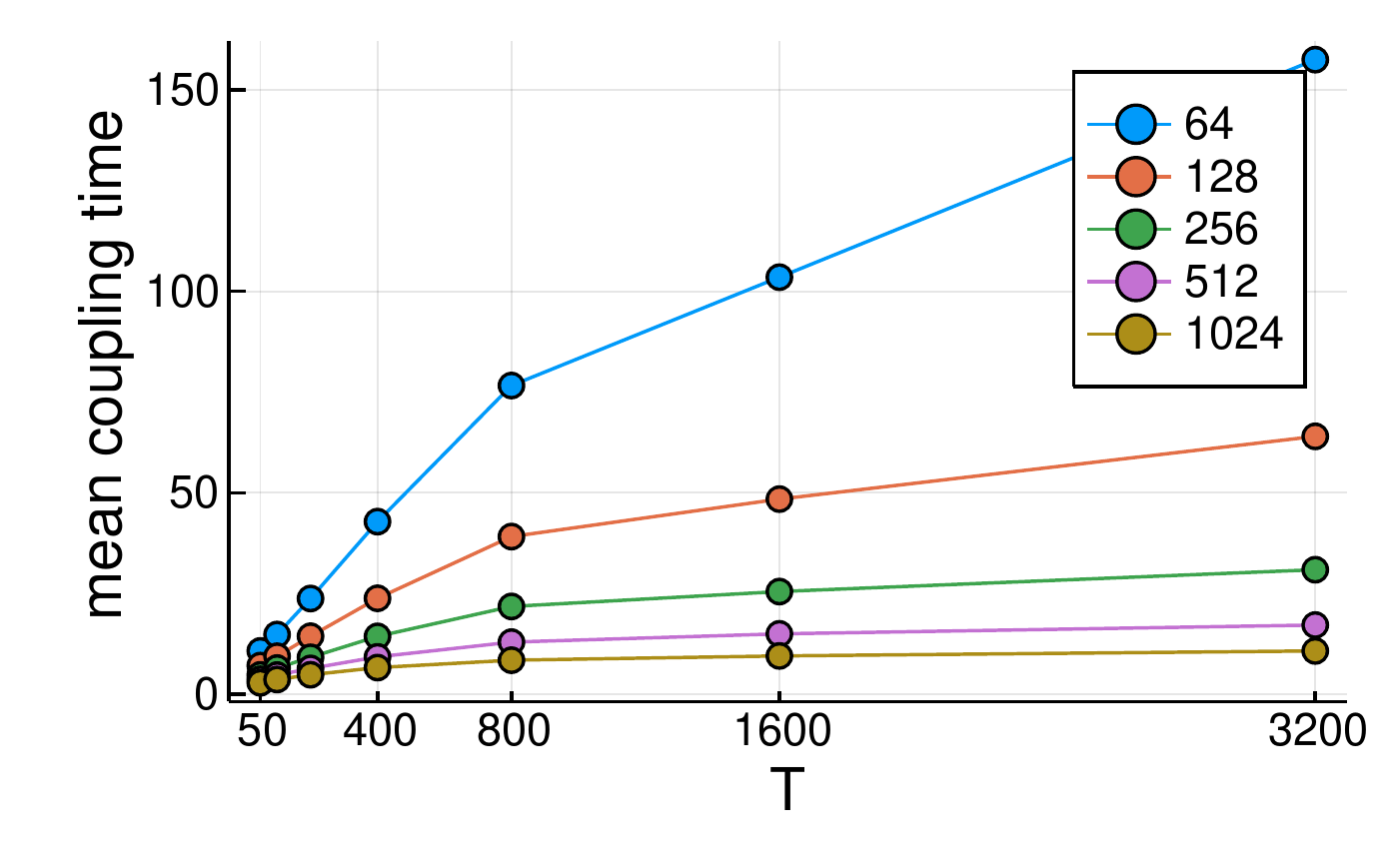}
    \caption{Linear Gaussian: BS}
    \end{subfigure}
    \begin{subfigure}[b]{0.49\linewidth}
    \centering
    \includegraphics[width=\linewidth]{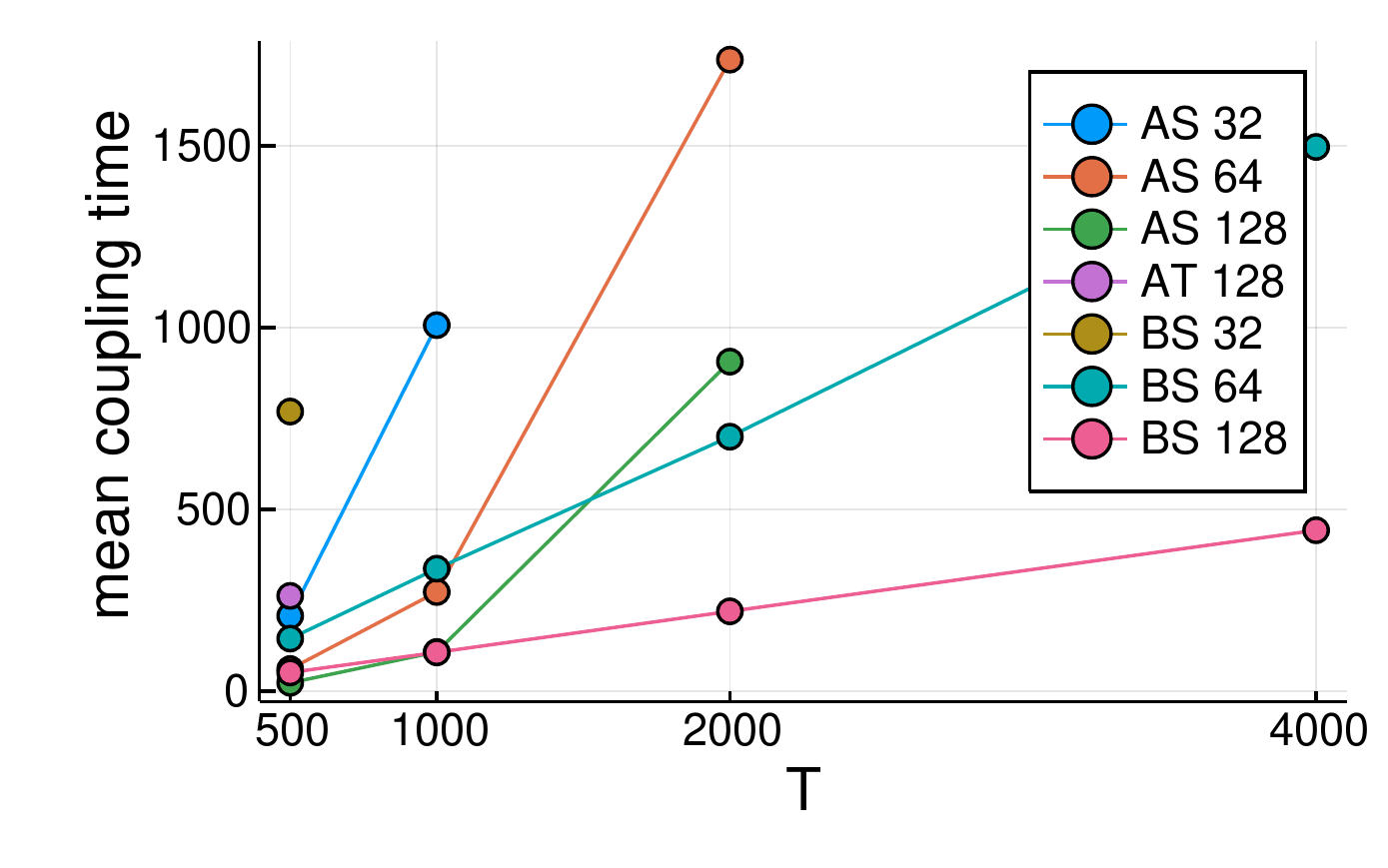}
    \caption{Simple homogeneous model}
    \end{subfigure}
    \caption{Mean coupling times associated with ancestor tracing (AT), ancestor
      sampling (AS) and backward sampling (BS). For (d), the lines are coloured
      according to the type of algorithm and the number of particles $N$.}
    \label{fig:mean-coupling-times}
\end{figure}

\begin{table}
\caption{Average of 1000 coupling times (with standard deviations), with different
  variants of coupled ancestor tracing (AT), ancestor sampling (AS)
  and backward sampling (BS).}
\label{tab:jls-couplingtimes} 
\setlength{\tabcolsep}{2pt}
\scriptsize
\begin{tabular}{ll@{\hskip2pt}ll@{\hskip2pt}ll@{\hskip2pt}ll@{\hskip2pt}ll@{\hskip2pt}ll@{\hskip2pt}ll@{\hskip2pt}ll@{\hskip2pt}l}
\toprule
$T$    & \multicolumn{4}{c}{50}
       &  \multicolumn{4}{c}{100}
       & \multicolumn{4}{c}{200}
       & \multicolumn{4}{c}{400} \\
       \cmidrule(lr){2-5}
       \cmidrule(lr){6-9}
       \cmidrule(lr){10-13}
       \cmidrule(lr){14-17}
$N$    & \multicolumn{2}{c}{64} & \multicolumn{2}{c}{128}
       & \multicolumn{2}{c}{128} & \multicolumn{2}{c}{256}
       & \multicolumn{2}{c}{256} & \multicolumn{2}{c}{512}
       & \multicolumn{2}{c}{512} & \multicolumn{2}{c}{1024} \\
\midrule
AT     & 122.3 & (131.2)
       & 17.3 & (17.1)
       & 77.3 & (82.0)
       & 12.3 & (11.2)
       & 68.2 & (67.5)
       & 10.9 & (9.6)
       & 81.5 & (76.6)
       & 11.7 & (9.9)\\
AS     & 14.2 & (11.0)
       & 7.2 & (5.9)
       & 13.0 & (10.4)
       & 6.3 & (4.5)
       & 12.2 & (8.8)
       & 5.9 & (4.1)
       & 12.5 & (8.2)
       & 5.9 & (3.5) \\
BS     & 11.0 & (5.2)
       & 6.9 & (3.0)
       & 9.5 & (3.3)
       & 6.3 & (2.0)
       & 9.2 & (2.5)
       & 6.4 & (1.7)
       & 9.4 & (2.2)
       & 6.6 & (1.6) \\
\bottomrule
\end{tabular}
\end{table}

Figure~\ref{fig:paths} shows coupling boundaries (see Section~\ref{sec:ccbpf-proof})
by iteration of a single run of each method in the linear Gaussian model,
illustrating typical progressive behaviour of the coupling boundary with BS,
in contrast with AS which does not clearly display a drift towards complete coupling,
and AT which makes no progress at all.
The BS appears viable with much smaller number of particles, and
suggests that the computationally optimal number of particles with BS may differ
significantly from that of AT and AS.

\begin{figure} 
    \begin{center}
        \includegraphics[width=\linewidth]{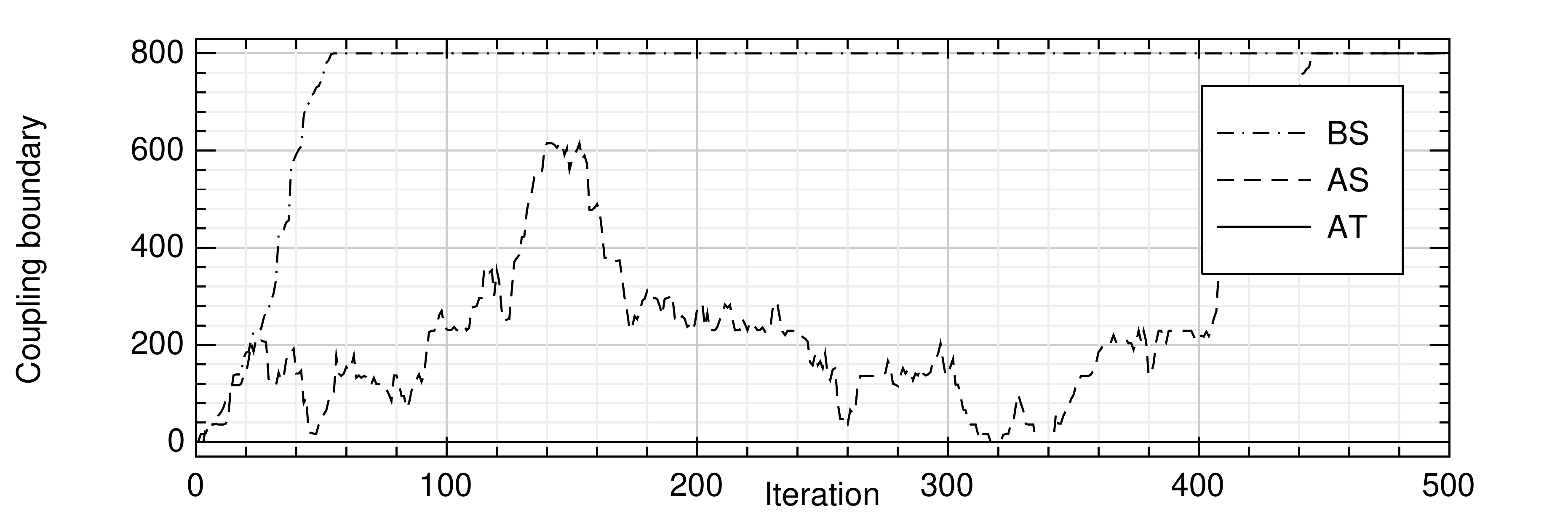}
    \end{center}
    \caption{One realisation of coupling boundaries with $T=800$ and $N=64$.} 
    \label{fig:paths}
\end{figure}

Finally, we compare the total cost of coupling in the simple homogeneous model
with $s=5$, with $N\in\{2^8,2^9,2^{10},2^{11},2^{12},2^{13}\}$
and $T\in\{1000,2000,3000,4000,5000\}$. Figure~\ref{fig:optimalcost}
shows normalised cost of coupling, defined as $N\tau/T$, over 100 replications of both
algorithms. The memory consumption limited the highest number of
particles to $N=8192$ with $T=5000$, which already exceeded 4
gigabytes in our implementation. With the shortest time horizon
$T=1000$, the AS was competitive with BS, reaching sometimes lower
costs than BS. With increasing $T$, the AS failed to couple
increasingly often before reaching the maximum number of iterations $\lfloor 10^9/(NT) \rfloor$,
chosen so that the maximum time spent on a replication was approximately 2 minutes.
The BS only failed to couple before this maximum number of iterations was reached
3 times (out of 100) with $N=8192$ and $T=5000$, and remained effective with small $N$.
This experiment suggests that AS requires $N$ to increase with $T$ in
order to stay effective, leading to a superlinear memory requirement
that may limit its application with longer time horizons.

\begin{figure} 
    \begin{center}
        \includegraphics[width=\linewidth]{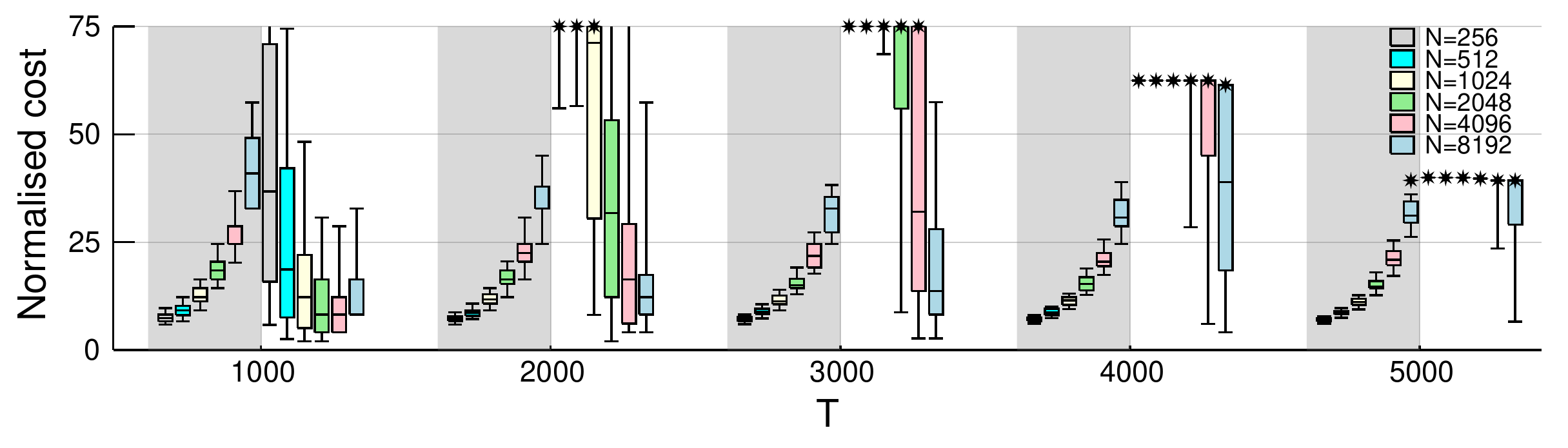}
    \end{center}
    \caption{Normalised cost of coupling for the simple homogeneous model
    for 100 replications of BS and AS, the former on shaded background. The stars indicate
    cases where at least one replication failed to couple before hitting the
    maximum total cost. Since the normalised cost of coupling is plotted, the actual differences
    in computational cost are obtained by scaling by $T^2$.}
    \label{fig:optimalcost}
\end{figure}


To provide finer detail, we report in Figure~\ref{fig:mean-coupling-times-scale}
the results of simulations testing the scaling properties AT, AS and BS for both
the linear Gaussian and simple homogeneous model with $s=5$.
For the linear Gaussian model,
the mean coupling time for AT appears to be stable with $N$ proportional to $T$,
while for both AS and BS the mean coupling time grows roughly linearly with $T$ for
$N$ fixed.
In contrast, for the simple homogeneous model the mean coupling time appears to
grow superlinearly for AT even with $N$ proportional to $T$, linearly for AS
with $N$ proportional to $T$, but linearly for BS with $N$ fixed.
This suggests that taking $N$ proportional to $T$ is not sufficient in general
to stabilize AT even for relatively simple models. Similarly, for the
homogeneous model, the relative cost of AS over BS grows with $T$ and is already
around 8 for $T=1600$.

\begin{figure}
    \centering
    \begin{subfigure}[b]{0.49\linewidth}
    \centering
    \includegraphics[width=\linewidth]{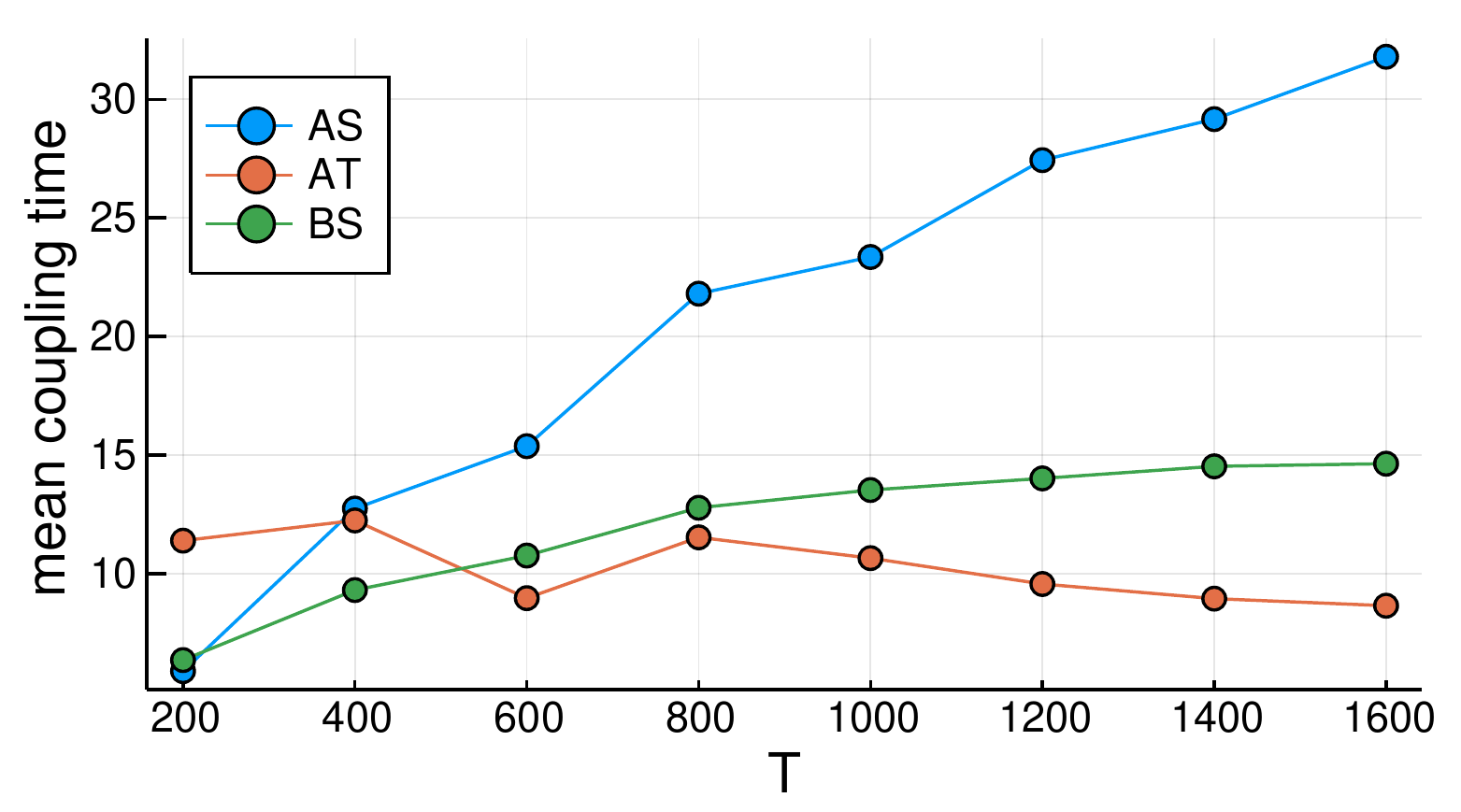}
    \caption{Linear Gaussian model.}
    \end{subfigure}
    \begin{subfigure}[b]{0.49\linewidth}
    \centering
    \includegraphics[width=\linewidth]{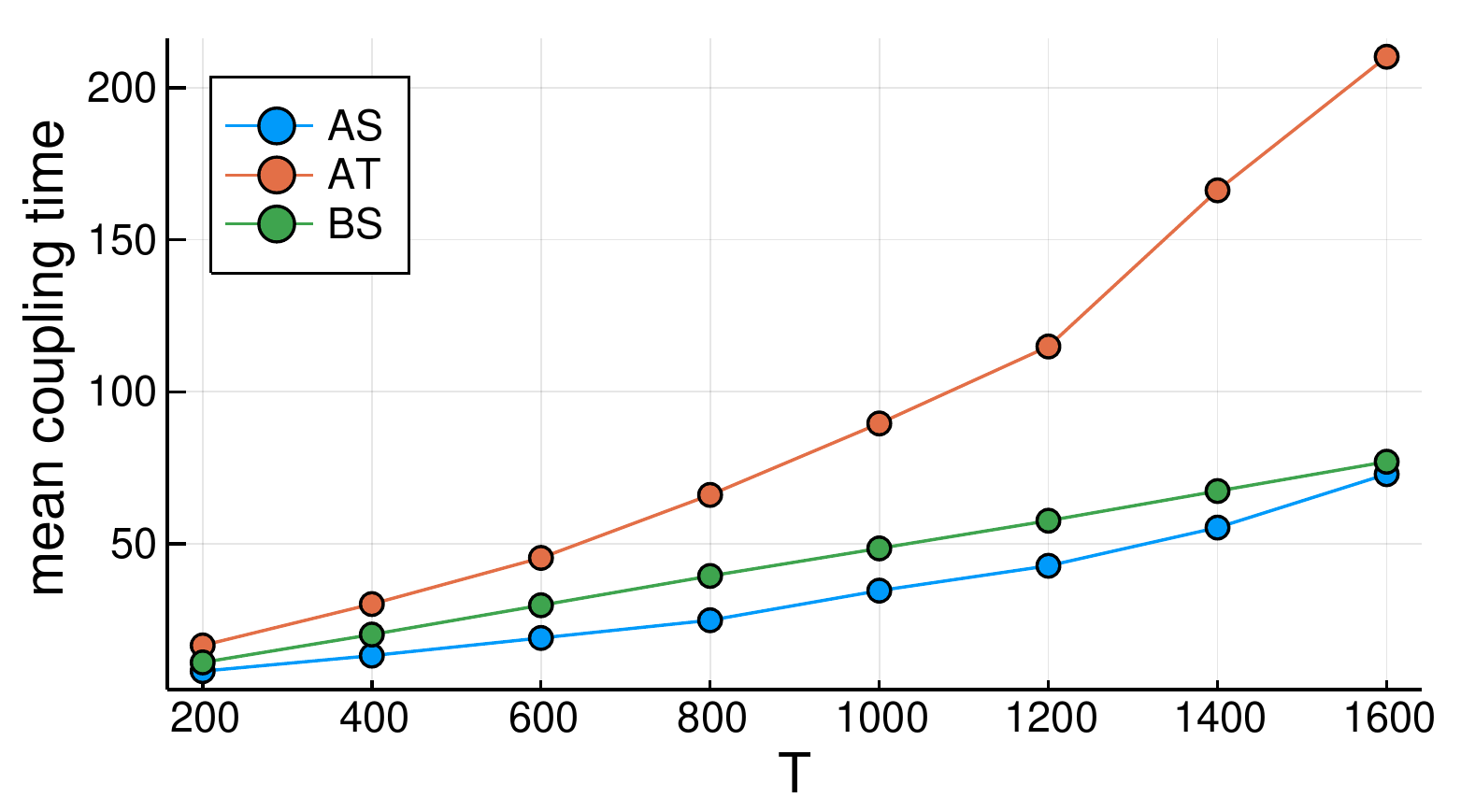}
    \caption{Simple homogeneous model.}
    \end{subfigure}
    \caption{Mean coupling times associated with ancestor tracing (AT), ancestor
      sampling (AS) and backward sampling (BS). For the linear Gaussian model,
      $N = 2.56T$ for AT whereas $N=512$ for AS and BS. For the simple
      homogeneous model, $N = 0.64T$ for AT and AS, whereas $N=128$ for BS.}
    \label{fig:mean-coupling-times-scale}
\end{figure}

Our empirical results suggest that all of AT, AS and BS are competitive when $T$ is
small, and one should take $N$ proportional to $T$ for AT and sometimes also for AS.
Since the space complexity of any algorithm when taking $N$ proportional to $T$ is
quadratic in $T$, this approach does not scale to large values of $T$.
When it is no longer possible due to memory requirements to
take $N$ proportional to $T$, BS appears to be the only appropriate algorithm.
We are not able to quantify accurately the minimal number of particles $N_0$
required for BS to exhibit linear-in-time convergence or the value of $N$ that
maximises its computational efficiency, so this needs to be done empirically.


\section*{Acknowledgements} 
The authors would like to thank the Isaac Newton
Institute for Mathematical Sciences and the Insititute for
Mathematical Sciences at the National University of Singapore for
support and hospitality during the programmes ``Scalable inference;
statistical, algorithmic, computational aspects'' and ``Bayesian
Computation for High-Dimensional Statistical Models'' respectively,
when work on this paper was undertaken. This work was supported by
EPSRC grant numbers EP/K032208/1, EP/R014604/1 and EP/R034710/1, and by The Alan
Turing Institute under the EPSRC grant EP/N510129/1. MV was supported
by Academy of Finland grants 274740, 284513, 312605 and 315619.
The authors wish to acknowledge CSC, IT Center for
Science, Finland, for computational resources.

\bibliographystyle{abbrvnat}
\bibliography{refs.bib}

\begin{thebibliography}{17}
\providecommand{\natexlab}[1]{#1}
\providecommand{\url}[1]{\texttt{#1}}
\expandafter\ifx\csname urlstyle\endcsname\relax
  \providecommand{\doi}[1]{doi: #1}\else
  \providecommand{\doi}{doi: \begingroup \urlstyle{rm}\Url}\fi

\bibitem[Andrieu et~al.(2010)Andrieu, Doucet, and
  Holenstein]{andrieu-doucet-holenstein}
C.~Andrieu, A.~Doucet, and R.~Holenstein.
\newblock Particle {M}arkov chain {M}onte {C}arlo methods.
\newblock \emph{J. R. Stat. Soc. Ser. B Stat. Methodol.}, 72\penalty0
  (3):\penalty0 269--342, 2010.

\bibitem[Andrieu et~al.(2018)Andrieu, Lee, and Vihola]{andrieu-lee-vihola}
C.~Andrieu, A.~Lee, and M.~Vihola.
\newblock Uniform ergodicity of the iterated conditional {SMC} and geometric
  ergodicity of particle {G}ibbs samplers.
\newblock \emph{Bernoulli}, 24\penalty0 (2):\penalty0 842--872, 2018.

\bibitem[Chopin and Singh(2015)]{chopin-singh}
N.~Chopin and S.~S. Singh.
\newblock On particle {G}ibbs sampling.
\newblock \emph{Bernoulli}, 21\penalty0 (3):\penalty0 1855--1883, 2015.

\bibitem[Del{~}Moral(2004)]{del-moral}
P.~Del{~}Moral.
\newblock \emph{{F}eynman-{K}ac Formulae}.
\newblock Springer, 2004.

\bibitem[Del~Moral and Guionnet(2001)]{del2001stability}
P.~Del~Moral and A.~Guionnet.
\newblock On the stability of interacting processes with applications to
  filtering and genetic algorithms.
\newblock In \emph{Annales de l'IHP Probabilit{\'e}s et statistiques},
  volume~37, pages 155--194, 2001.

\bibitem[Delyon et~al.(1999)Delyon, Lavielle, and
  Moulines]{delyon-lavielle-moulines}
B.~Delyon, M.~Lavielle, and E.~Moulines.
\newblock Convergence of a stochastic approximation version of the {EM}
  algorithm.
\newblock \emph{Ann. Statist.}, 27\penalty0 (1):\penalty0 94--128, 1999.

\bibitem[Douc et~al.(2011)Douc, Garivier, Moulines, and
  Olsson]{douc2011sequential}
R.~Douc, A.~Garivier, E.~Moulines, and J.~Olsson.
\newblock Sequential monte carlo smoothing for general state space hidden
  {M}arkov models.
\newblock \emph{The Annals of Applied Probability}, 21\penalty0 (6):\penalty0
  2109--2145, 2011.

\bibitem[Fearnhead and K{\"u}nsch(2018)]{fearnhead2018particle}
P.~Fearnhead and H.~R. K{\"u}nsch.
\newblock Particle filters and data assimilation.
\newblock 5:\penalty0 421--449, 2018.

\bibitem[Glynn and Rhee(2014)]{glynn-rhee}
P.~W. Glynn and C.-H. Rhee.
\newblock Exact estimation for {M}arkov chain equilibrium expectations.
\newblock \emph{J. Appl. Probab.}, 51\penalty0 (A):\penalty0 377--389, 2014.

\bibitem[Jacob et~al.(2017)Jacob, O'Leary, and Atchad{\'e}]{jacob2017unbiased}
P.~E. Jacob, J.~O'Leary, and Y.~F. Atchad{\'e}.
\newblock Unbiased markov chain monte carlo with couplings.
\newblock \emph{arXiv preprint arXiv:1708.03625}, 2017.

\bibitem[Jacob et~al.(to appear)Jacob, Lindsten, and
  Sch{\"o}n]{jacob-lindsten-schon}
P.~E. Jacob, F.~Lindsten, and T.~B. Sch{\"o}n.
\newblock Smoothing with couplings of conditional particle filters.
\newblock \emph{J. Amer. Statist. Assoc.}, to appear.

\bibitem[Lindsten et~al.(2014)Lindsten, Jordan, and
  Sch{\"o}n]{lindsten-jordan-schon}
F.~Lindsten, M.~I. Jordan, and T.~B. Sch{\"o}n.
\newblock Particle {G}ibbs with ancestor sampling.
\newblock \emph{J. Mach. Learn. Res.}, 15\penalty0 (1):\penalty0 2145--2184,
  2014.

\bibitem[Lindsten et~al.(2015)Lindsten, Douc, and
  Moulines]{lindsten-douc-moulines}
F.~Lindsten, R.~Douc, and E.~Moulines.
\newblock Uniform ergodicity of the particle {G}ibbs sampler.
\newblock \emph{Scand. J. Stat.}, 42\penalty0 (3):\penalty0 775--797, 2015.

\bibitem[Shaked and Shanthikumar(2007)]{shaked-shanthikumar}
M.~Shaked and J.~G. Shanthikumar.
\newblock \emph{Stochastic orders}.
\newblock Springer, 2007.

\bibitem[Singh et~al.(2017)Singh, Lindsten, and
  Moulines]{singh-lindsten-moulines}
S.~S. Singh, F.~Lindsten, and E.~Moulines.
\newblock Blocking strategies and stability of particle {G}ibbs samplers.
\newblock \emph{Biometrika}, 104\penalty0 (4):\penalty0 953--969, 2017.

\bibitem[Whiteley(2010)]{whiteley-backwards-note}
N.~Whiteley.
\newblock Discussion on {P}article {M}arkov chain {M}onte {C}arlo methods.
\newblock \emph{J. R. Stat. Soc. Ser. B Stat. Methodol.}, 72\penalty0
  (3):\penalty0 306--307, 2010.

\bibitem[Whiteley(2013)]{whiteley2013stability}
N.~Whiteley.
\newblock Stability properties of some particle filters.
\newblock \emph{The Annals of Applied Probability}, 23\penalty0 (6):\penalty0
  2500--2537, 2013.

\end{thebibliography}

\appendix
\section{One-shot coupling probability of CCPF}
\label{app:oneshot} 

\begin{lemma}
\label{lem:convex_sum_rvs} 
Suppose $Y^{(1:n)}$ are non-negative random numbers,
$Z^{(1:n)}$ are $\mathcal{Z}$-valued random variables,
$f:\mathcal{Z}\to[0,b]$ is measurable, and $\mathcal{G}$ is a
$\sigma$-algebra.
If $Y^{(i)}$ are $\sigma(\mathcal{G},Z^{(i)})$ measurable and
$Z^{(1:n)}$ are conditionally independent given $\mathcal{G}$,
then for any $\G$-measurable $A\ge0$,
\[
\E\bigg[\frac{\sum_{i=1}^{n}Y^{(i)}
  }{
  A + \sum_{j=1}^{n}f(Z^{(j)})}\;\bigg|\;
\mathcal{G}\bigg] \ge
\frac{\sum_{i=1}^{n}\E[Y^{(i)}\mid \mathcal{G}]}{A+ b
  +\sum_{j=1}^{n}\E[f(Z^{(j)})\mid \mathcal{G}]}.
\]
\end{lemma}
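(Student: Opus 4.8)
The plan is to reduce the statement to one about independent random variables by working conditionally on $\mathcal{G}$. Since $A$ is $\mathcal{G}$-measurable it behaves as a constant under $\E[\uarg\mid\mathcal{G}]$, each $Y^{(i)}$ becomes a measurable function of $Z^{(i)}$ alone, and the $Z^{(i)}$ are conditionally independent given $\mathcal{G}$. By conditional linearity of expectation it then suffices to establish, for each fixed $i$, the single-term bound
\[
\E\bigg[\frac{Y^{(i)}}{A+\sum_{j=1}^{n}f(Z^{(j)})}\;\bigg|\;\mathcal{G}\bigg]
\ge \frac{\E[Y^{(i)}\mid\mathcal{G}]}{A+b+\sum_{j=1}^{n}\E[f(Z^{(j)})\mid\mathcal{G}]},
\]
and then to sum over $i$: the numerators add, the denominator is common, and the result is exactly the claimed inequality.

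For the single-term bound I would condition additionally on $Z^{(i)}$. Writing $F_i\defeq f(Z^{(i)})$ and $S_{-i}\defeq\sum_{j\neq i}f(Z^{(j)})$, the denominator is $(A+F_i)+S_{-i}$, where $A+F_i$ is fixed given $(\mathcal{G},Z^{(i)})$ while $S_{-i}$ is, given $\mathcal{G}$, independent of $Z^{(i)}$; hence its conditional law given $(\mathcal{G},Z^{(i)})$ coincides with that given $\mathcal{G}$. The map $x\mapsto 1/((A+F_i)+x)$ is convex and decreasing on $[0,\infty)$, so Jensen's inequality applied to $S_{-i}$ yields $\E[1/((A+F_i)+S_{-i})\mid\mathcal{G},Z^{(i)}]\ge 1/((A+F_i)+\E[S_{-i}\mid\mathcal{G}])$. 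Using $F_i\le b$ together with $\E[S_{-i}\mid\mathcal{G}]\le\sum_{j=1}^{n}\E[f(Z^{(j)})\mid\mathcal{G}]$ (all summands being non-negative) and the monotonicity just noted, this is at least the $\mathcal{G}$-measurable constant $1/(A+b+\sum_{j}\E[f(Z^{(j)})\mid\mathcal{G}])$. Multiplying through by $Y^{(i)}\ge 0$, which is $(\mathcal{G},Z^{(i)})$-measurable, and taking $\E[\uarg\mid\mathcal{G}]$ via the tower property gives the single-term bound.

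The one genuinely delicate point is organisational rather than technical. One must resist applying Jensen to the full denominator at once --- this would not produce the correct constant --- and instead isolate a single summand $F_i$ by conditioning on $Z^{(i)}$, apply Jensen only to the independent remainder $S_{-i}$, and then absorb the isolated term through the crude bound $F_i\le b$. This is precisely where the additive $+b$ in the denominator of the claim originates, and it explains why the boundedness hypothesis $f\le b$ is needed. Once this structure is identified, the remaining steps are routine monotonicity and linearity manipulations.
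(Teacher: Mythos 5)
Your proof is correct and takes essentially the same route as the paper's: isolate each term $i$, condition on $(\mathcal{G},Z^{(i)})$, apply Jensen's inequality (convexity of $x\mapsto (c+x)^{-1}$) only to the conditionally independent remainder $\sum_{j\neq i}f(Z^{(j)})$, absorb $f(Z^{(i)})\le b$ into the additive $+b$, and sum over $i$ via the tower property. The only detail the paper adds that you omit is a one-line dismissal of the degenerate case in which $A+\sum_{j}f(Z^{(j)})=0$ with positive conditional probability, where the claimed inequality holds trivially; you may wish to note this so that the convexity/Jensen step is applied on $(0,\infty)$ without extended-value issues.
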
 
\begin{proof} 
The claim is trivial whenever
$\P\big(A +\sum_{j=1}^{n}f(Z^{(j)})=0\bigmid \G\big)>0$, so consider
the case $A +\sum_{j=1}^{n}f(Z^{(j)})> 0$.
Because $x\mapsto x^{-1}$ is convex on $(0,\infty)$,
\begin{alignat*}{1}
    \E\bigg[\frac{Y^{(i)}}{A+\sum_{j} f(Z^{(j)})}\;\bigg|\;\mathcal{G}\bigg]
 & \ge
 \E\bigg[
 \frac{Y^{(i)}}{A+Z^{(i)}+\sum_{j\neq i}\E[f(Z^{(j)})\mid
   \mathcal{G},Z^{(i)}]}\;\bigg|\; \mathcal{G}\bigg]\\
 & \ge \frac{\E[Y^{(i)}\mid \mathcal{G}] }{A+b+\sum_{j}
   \E [f(Z^{(j)})\mid \mathcal{G}] },
\end{alignat*}
whence the result follows.
\end{proof} 

\begin{lemma}
    \label{lem:coupled-ccpfs} 
Consider an augmented state space $\bar{\X} = \X\cup\{\phi\}$, and
define
\begin{itemize}
    \item $\bar{G}_t(x) \defeq G_t(x)$
      and $\bar{G}_t(\phi) \defeq \sup_x G_t(x)$
      for all $t=1{:}T$ and $x\in\X$,
    \item $\bar{M}_t(x,A) = M_t(x,A)$ for all $t=1{:}T$, $x\in\X$ and measurable
$A\subset\X$,
    \item $\bar{M}_t(\phi,\{\phi\})=1$ for $t=2{:}T$.
\end{itemize}
Let $\overline{\mathrm{CCPF}}$ and $\mathrm{CCPF}$ stand for
the CCPF
for models $(\bar{\X},\bar{M}_{1:T},\bar{G}_{1:T})$
and $(\X, M_{1:T},G_{1:T})$, respectively. Then, for all
$s,\tilde{s}\in\X^T$,
\begin{enumerate}[(i)]
    \item \label{item:aug-equivalence}
      $\overline{\mathrm{CCPF}}(s,\tilde{s},N) \eqd
      \mathrm{CCPF}(s,\tilde{s},N)$.
\end{enumerate}
Let $C_{1:T}$ stand for the sets generated in
      by $\overline{\mathrm{CCPF}}(s, \tilde{s}, N)$ and
      $C_{1:T}^\phi$ stand for those generated in
      $\overline{\mathrm{CCPF}}\big(s, (\phi,\ldots,\phi), N\big)$.
\begin{enumerate}[(i)]
    \stepcounter{enumi}
    \item \label{item:aug-minor}
      There exists a coupling such that
      $C_{t}\supset C_t^{\phi}$ a.s.~for all $t=1{:}T$.
    \item $C_t^{\phi} = \big\{i\in\{2{:}N\}\given
        \tilde{X}_t^{(i)}\neq \phi\big\}$.
\end{enumerate}
\end{lemma}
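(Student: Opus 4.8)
The plan is to establish the three claims in the order (i), (iii), (ii), since the characterisation of the all-$\phi$ run in (iii) is what feeds the monotone coupling in (ii). For part (i) the point is simply that $\phi$ is never visited when both reference trajectories lie in $\X^T$. I would argue by induction over $t$ that every particle produced by $\overline{\mathrm{CCPF}}(s,\tilde s,N)$ lies in $\X$: at $t=1$ the reference particles $X_1^{(1)}=s_1$ and $\tilde X_1^{(1)}=\tilde s_1$ are in $\X$ and the remaining $X_1^{(i)}\sim M_1$ are in $\X$ because $M_1$ is a density on $\X$; for $t\ge 2$ each particle is mutated through $\bar M_t(x,\uarg)=M_t(x,\uarg)$ from an ancestor $x\in\X$, which is supported on $\X$, whereas $\phi$ is reachable only from $\phi$ itself. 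On the event that all particles stay in $\X$ the weights agree ($\bar G_t=G_t$ on $\X$), the \textsc{CRes} calls receive identical inputs, and the mutations use the same kernels, so with shared randomness the two algorithms coincide pathwise, giving $\overline{\mathrm{CCPF}}(s,\tilde s,N)\eqd\mathrm{CCPF}(s,\tilde s,N)$.

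For part (iii) I would run $\overline{\mathrm{CCPF}}(s,(\phi,\ldots,\phi),N)$, so that the first system stays in $\X$ (by the argument above, as $s\in\X^T$) while the second reference is $\phi$ throughout. Since $\bar M_t(\phi,\{\phi\})=1$, the point $\phi$ is absorbing in the second system, hence $\tilde X_t^{(i)}=\phi$ precisely when its resampled ancestor $\tilde X_{t-1}^{(\tilde I_t^{(i)})}=\phi$. I would then prove the invariant, by induction over $t$, that every $i\in\{2{:}N\}$ satisfies either $\tilde X_t^{(i)}=\phi$ or $\tilde X_t^{(i)}=X_t^{(i)}\in\X$. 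The key computation is that, under $G_t(x_{t-1},x_t)=G_t(x_t)$ and $\bar G_t(\phi)=\sup_x G_t(x)$, a coupled particle carries equal unnormalised weights in the two systems while every $\phi$-particle carries the maximal weight, so the second system's normalising constant dominates the first's; consequently the normalised weights satisfy $\tilde w^{(k)}\le w^{(k)}$ on coupled indices, which forces the residual weight $\tilde w_r^{(k)}=0$ there. Thus the second system's \textsc{CRes} can land an index on a non-$\phi$ particle only through its common component, in which case $I_t^{(i)}=\tilde I_t^{(i)}$ select the same coupled ancestor, and the common-random-number mutation of line~\ref{line:uncoupled-mutation} yields $X_t^{(i)}=\tilde X_t^{(i)}$. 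This closes the induction and identifies $C_t^{\phi}=\{i\in\{2{:}N\}\given \tilde X_t^{(i)}\neq\phi\}$.

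For the domination in part (ii) I would run $\overline{\mathrm{CCPF}}(s,\tilde s,N)$ and $\overline{\mathrm{CCPF}}(s,(\phi,\ldots,\phi),N)$ on a common probability space, sharing the first particle system, the coupled initialisation $X_1^{(i)}=\tilde X_1^{(i)}$ for $i\ge 2$, and the mutation noise, and couple the two families of \textsc{CRes} calls. The all-$\phi$ reference is extremal: at every index it assigns the second system the maximal potential $\bar G_t(\phi)=\sup_x G_t(x)$, so against the same first system it produces the smallest possible overlap $w\wedge\tilde w$ on coupled indices and the largest residual mass on the uncoupled ones. The goal is to split the coupled and residual components of \textsc{CRes} in the two runs so that, index by index, a coupling event in the $\phi$-run entails one in the $\tilde s$-run; propagated through the shared mutations, this gives $C_t\supseteq C_t^{\phi}$ almost surely for all $t$.

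I expect part (ii) to be the main obstacle. The difficulty is that the two runs have different second-system weights, hence different overlap probabilities and different residual measures, so the ancestor indices need not agree across runs, and one must build the joint law of the four index vectors respecting each run's maximal coupling while nesting the coupling events. The right device should be an induction over $t$ carrying the invariant that each second-system particle of the $\phi$-run is either $\phi$ or agrees with the corresponding $\tilde s$-run particle that is itself coupled to the first system; verifying that this monotone relation survives one \textsc{CRes}-and-mutation step, given the extremality of the $\phi$-weights and the absorbing dynamics of $\phi$, is the technical heart of the argument.
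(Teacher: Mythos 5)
Your proposal is correct and follows essentially the paper's own argument: the paper likewise proves~\eqref{item:aug-minor} by running $\overline{\mathrm{CCPF}}(s,\tilde s,N)$ and $\overline{\mathrm{CCPF}}(s,(\phi,\ldots,\phi),N)$ jointly with shared initialisation, inducting on four-way agreement of the particles over $C_{t-1}^\phi$, and using the extremality of the $\phi$-weights (which inflate the second system's normalising constant in the $\phi$-run) to obtain, for $i\in C_{t-1}^\phi$,
\[
\frac{\omega_{t-1}^{(i)}}{\sum_j \omega_{t-1}^{(j)}}\wedge\frac{\tilde{\omega}_{t-1}^{(i)}}{\sum_j \tilde{\omega}_{t-1}^{(j)}}
\;\ge\;
\frac{\omega_{t-1}^{\phi(i)}}{\sum_j \omega_{t-1}^{\phi(j)}}\wedge\frac{\tilde{\omega}_{t-1}^{\phi(i)}}{\sum_j \tilde{\omega}_{t-1}^{\phi(j)}},
\]
which is precisely the ordering that lets the \textsc{CRes} coupling events be nested across the two runs --- i.e.\ the `technical heart' you defer goes through by the same normalising-constant comparison you already carried out for part (iii), so there is no obstacle there. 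The only (cosmetic) slip is that when the resampled ancestors coincide it is the coupled mutation step of Algorithm~\ref{alg:ccxpf} (the line preceding line~\ref{line:uncoupled-mutation}), not a common-random-number draw in line~\ref{line:uncoupled-mutation}, that forces the offspring to be equal.
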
 
\begin{proof} 
The marginal equivalence~\eqref{item:aug-equivalence} is
straightforward. For the stochastic minorisation~\eqref{item:aug-minor}, we consider running
$\overline{\mathrm{CCPF}}(s, \tilde{s}, N)$ and
$\overline{\mathrm{CCPF}}\big(s, (\phi,\ldots,\phi), N\big)$
simultaneously, in a coupled manner. More specifically, set
\[
    X_1^{(2:N)} = \tilde{X}_1^{(2:N)} = X_1^{\phi(2:N)} =
\tilde{X}_1^{\phi(2:N)}
\sim M_1(\uarg).
\]
For $t\ge 2$, we proceed inductively, assuming that $X_{t-1}^{(i)} =
\tilde{X}_{t-1}^{(i)} = X_{t-1}^{\phi(i)} = \tilde{X}_{t-1}^{\phi(i)}$
for all $i\in C_{t-1}^{\phi}$, and that $C_{t-1}\supset C_{t-1}^\phi$,
which obviously hold for $t=2$.
Note that then
\begin{align*}
    \omega_{t-1}^{(i)} = \tilde{\omega}_{t-1}^{(i)} &=
    \omega_{t-1}^{\phi(i)} = \tilde{\omega}_{t-1}^{\phi(i)},
    & i&\in C_{t-1}^\phi \\
    \omega_{t-1}^{(i)} \vee \tilde{\omega}_{t-1}^{(i)} &\le
    \tilde{\omega}_{t-1}^{\phi(i)},
    & i&\notin C_{t-1}^{\phi}.
\end{align*}
Also, $\omega_{t-1}^{\phi(i)} \le \tilde{\omega}_{t-1}^{\phi(i)}$ for
$i\notin C_{t-1}^{\phi}$, so we conclude that
\[
    \frac{\omega_{t-1}^{(i)}}{\sum_j \omega_{t-1}^{(j)}} \wedge
    \frac{\tilde{\omega}_{t-1}^{(i)}}{\sum_j \tilde{\omega}_{t-1}^{(j)}}
    \ge
    \frac{\tilde{\omega}_{t-1}^{\phi(i)}}{\sum_j
      \tilde{\omega}_{t-1}^{\phi(j)}}
    = \frac{\omega_{t-1}^{\phi(i)}}{\sum_j
      \omega_{t-1}^{\phi(j)}}
    \wedge
    \frac{\tilde{\omega}_{t-1}^{\phi(i)}}{\sum_j
      \tilde{\omega}_{t-1}^{\phi(j)}},\qquad i \in C_{t-1}^{\phi}.
\]
Consequently, the outputs of \textsc{CRes} satisfy
$\P(I_t^{(i)} = \tilde{I}_t^{(i)} \in
    C_{t-1}^\phi) \ge \P (I_t^{\phi(i)} = \tilde{I}_t^{\phi(i)} \in
    C_{t-1}^\phi)$, and we may couple the outputs such that
\[
    \P(I_t^{(i)} = \tilde{I}_t^{(i)} =
    I_t^{\phi(i)} = \tilde{I}_t^{\phi(i)} \in
    C_{t-1}^\phi) = \P (I_t^{\phi(i)} = \tilde{I}_t^{\phi(i)} \in
   C_{t-1}^\phi),
\]
and consequently we may also couple
$X_t^{(i)},\tilde{X}_t^{(i)},X_t^{\phi(i)},\tilde{X}_t^{\phi(i)}$ such
that
\[
    X_t^{(i)} = \tilde{X}_t^{(i)} = X_t^{\phi(i)} = \tilde{X}_t^{\phi(i)}
    \sim M_t(X_{t-1}^{(I_t^{(i)})},\uarg),
    \qquad i\in C_t^{\phi}.
    \qedhere
\]
\end{proof} 

\begin{proof}[Proof of~\ref{thm:oneshot-bounds}] 
Consider $\overline{\mathrm{CCPF}}\big(s, (\phi,\ldots,\phi), N\big)$,
let $\check{C}_t \defeq \big\{i\in\{2{:}N\}\given \tilde{X}_t^{(i)}\neq \phi\}$,
${\xi}_{t}=\sum_{i=1}^{N}\delta_{\tilde{X}_{t}^{(i)}}$,
${\xi}_{\check{C}_{t}}=\sum_{i\in \check{C}_{t}}\delta_{\tilde{X}_{t}^{(i)}}$,
then by Lemma~\ref{lem:coupled-ccpfs}
\begin{align*}
    \P \big( X_{1:T}^{(J_{1:T})} = \tilde{X}_{1:T}^{(\tilde{J}_{1:T})} \big)
    =
    \P\big( J_T=\tilde{J}_T \in C_T\big)
    \ge \E \bigg[ \frac{{\xi}_{\check{C}_T}(G_T) }{{\xi}_T(G_T)}\bigg].
\end{align*}
Note that the latter quantity does not depend on $X_{t}^{(i)}$, but
only on the marginal conditional particle filter $\tilde{X}_t^{(i)}$ with reference
$(\phi,\ldots,\phi)$.
Setting $h_T^{(1)} \defeq h_T^{(2)}\defeq G_T$,
we may apply Lemma~\ref{lem:convex_sum_rvs}
with $Z^{(i)} = \tilde{X}_T^{(i+1)}$ and $\G=\G_{T-1}$ where
$\G_t = \sigma(\tilde{X}_u^{(i)}\given u\le t, i=2{:}N)$,
\begin{align*}
\E \bigg[ \frac{{\xi}_{\check{C}_T}(G_T) }{{\xi}_T(G_T)}\bigg]
& = \E \bigg[ \frac{{\xi}_{\check{C}_T}(h_T^{(1)})
  }{{\xi}_T(h_T^{(2)})}\bigg] \\
& \ge  \E\bigg[ \frac{\sum_{i=2}^N \E[ \charfun{\smash{\tilde{X}_T^{(i)}\in
    \check{C}_T }} h_T^{(1)}(\tilde{X}_T^{(i)}) \mid \G_{T-1} ]
}{2 \| h_T^{(2)} \|_\infty + \sum_{j=2}^N \E[ h_T^{(2)}(
  \tilde{X}_T^{(j)}) \mid \G_{T-1} ]}
\bigg] \\
&= \E\bigg[ \frac{(N-1) \frac{\xi_{\check{C}_{T-1}} (G_{T-1}
      M_T h_T^{(1)})}{\xi_{T-1}(G_{T-1})}
    }{2 \| h_T^{(2)} \|_\infty + (N-1) \frac{\xi_{T-1}(G_{T-1}
      M_T h_T^{(2)})}{
      \xi_{T-1}(G_{T-1})
      }
    }
\bigg] \\
&= \E\bigg[ \frac{\xi_{\check{C}_{T-1}} (h_{T-1}^{(1)})
    }{\xi_{T-1}(h_{T-1}^{(2)})
    }
\bigg],
\end{align*}
where $h_{t}^{(1)} \defeq G_t M_{t+1} h_{t+1}^{(1)}$
and $h_t^{(2)} \defeq G_t \big( 2 (N-1)^{-1} \| h_{t+1}^{(2)}\|_\infty +
M_{t+1} h_{t+1}^{(2)}\big)$.
We have $h_t^{(1)} \le h_t^{(2)}$, so we may iterate similarly as above
to obtain
\[
    \E \bigg[ \frac{{\xi}_{\check{C}_T}(G_T) }{{\xi}_T(G_T)}\bigg]
    \ge \E\bigg[ \frac{\xi_{\check{C}_{1}} (h_{1}^{(1)})
    }{\xi_{1}(h_{1}^{(2)})
    }\bigg]
    \ge \frac{h_0^{(1)}
    }{h_0^{(2)}
    },
\]
by Lemma~\ref{lem:convex_sum_rvs}, where $h_0^{(1)},h_0^{(2)}$ are defined as
above, with convention $G_0\equiv 1$.

Denoting $Q_t \defeq G_t M_{t+1}$,
$\bar{Q}_{t,u} \defeq Q_t \cdots Q_{u}$ for $t\le u$ and
$\bar{Q}_{t,t} = I$, we have
$h_0^{(1)} = M_1 \bar{Q}_{1,{T-1}}(G_T)$, and
\begin{align*}
    h_0^{(2)}
    &\le 2(N-1)^{-1} \| h_1^{(2)} \|_\infty + \| M_1 h_1^{(2)} \|_\infty.
\end{align*}
We may bound
\begin{align*}
    \|h_t^{(2)}\|_\infty
    &\le G^* \| h_{t+1}^{(2)} \|_\infty \big(
    1 + 2 (N-1)^{-1} \big), \\
    \|M_1 \bar{Q}_{1,t-1} h_t^{(2)}\|_\infty
    &\le 2 (N-1)^{-1} (G^*)^{t} \|h_{t+1}^{(2)} \|_\infty
    + \| M_1 \bar{Q}_{1,t} h_{t+1}^{(2)} \|_\infty,
\end{align*}
and conclude that
\begin{align*}
    h_0^{(2)} \le h_0^{(1)} + c_1 N^{-1},
\end{align*}
for some $c_1=c_1(G^*,T)\in(0,\infty)$. We conclude the claim with
$c = c_1/h_0^{(1)}$.
\end{proof} 


\section{One-shot coupling with rate}
\label{app:oneshot-rate} 

Our results below hold under the following,
slightly more general strong mixing assumption:
\begin{assumption}
    \label{a:mixing-type} 
For any $t=1{:}T$, define $Q_t(x_{t:t+1}) \defeq
G_t(x_t)M_{t+1}(x_t, x_{t+1})$. There exists a
constant $c_*<\infty$ such that
for all  $1\le u\le t\le T$
\[
    \frac{ \sup_{x_u} \int Q_u(x_{u:u+1}) \cdots Q_{t-1}(x_{t-1:t})
    G_t(x_t) \ud x_{u+1:t} }{
    \inf_{x_{u-1}}
    \int
    M_{u}(x_{u-1:u}) Q_u(x_{u:u+1}) \cdots
      Q_{t-1}(x_{t-1:t}) G_t(x_t) \ud x_{u:t} }
    \le c_*.
\]
\end{assumption}
\begin{lemma}
    \label{lem:mixing-to-mixing-type} 
    Suppose $G_t(x_{t-1},x_t) = G_t(x_t)$
  for all $t\ge 2$. Then,
    Assumption~\ref{a:mixing}
      implies Assumption~\ref{a:mixing-type} with $c_* =
      \theepsilon^{-1}$.
\end{lemma}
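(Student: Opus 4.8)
The plan is to verify Assumption~\ref{a:mixing-type} by bounding both the numerator and the denominator of the ratio using the uniform upper and lower bounds on $M_t$ and $G_t$ afforded by Assumption~\ref{a:mixing}. Under the simplifying hypothesis $G_t(x_{t-1},x_t)=G_t(x_t)$, the kernel $Q_t(x_{t:t+1})=G_t(x_t)M_{t+1}(x_t,x_{t+1})$ factors neatly, so a telescoping bound on the integrand $Q_u\cdots Q_{t-1}G_t$ should follow by replacing each factor with its supremum in the numerator and its infimum in the denominator. The key observation is that after integrating out the intermediate variables $x_{u+1:t}$, most of the $M$-factors integrate to $1$ (since $M_t$ are Markov transition densities), so the surviving ratio should collapse to a product of $G_*/G^*$ and $M_*/M^*$ ratios over a \emph{single} time step — which is precisely the quantity controlled by $\theepsilon$.

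First I would write the numerator integrand as $G_u(x_u)\prod_{v=u+1}^{t-1}\big[M_v(x_{v-1},x_v)G_v(x_v)\big]\cdot M_t(x_{t-1},x_t)G_t(x_t)$ and bound each $G_v(x_v)\le G^*(v)$ and each $M_v\le M^*(v)$; then integrating $\ud x_{u+1:t}$ kills the transition densities that have been bounded above only up to a constant, leaving a product of suprema. For the denominator I would do the mirror image: insert the extra leading factor $M_u(x_{u-1},x_u)$, bound each $G_v(x_v)\ge G_*(v)$ and each $M_v\ge M_*(v)$ from below, and integrate $\ud x_{u:t}$. Taking the ratio, the bulk of the $G_*(v)/G^*(v)$ and $M_*(v)/M^*(v)$ terms for the intermediate times $v$ should cancel in a telescoping fashion, leaving only the boundary terms at time $u$ (and possibly $t$), whose product is bounded by $1/\theepsilon$ by the definition of $\theepsilon$ in Assumption~\ref{a:mixing}. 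Since the resulting bound is uniform in $x_u$ and $x_{u-1}$, the supremum/infimum causes no difficulty and the ratio is at most $\theepsilon^{-1}$.

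The main obstacle I anticipate is bookkeeping: making sure the right factors cancel so that only one time-step's worth of $G_*/G^*$ and $M_*/M^*$ ratios survive, rather than accumulating a product over all of $u{:}t$ (which would give a $T$-dependent, exponentially bad constant instead of the clean $\theepsilon^{-1}$). The integration order matters, since one must integrate the innermost variables first so that each Markov density integrates to $1$ before the next bound is applied. Care is also needed with the endpoints: the denominator has an extra $M_u$ factor relative to the numerator, and this is exactly what supplies the $M_*(u)/M^*(u)$ ratio needed to match the definition of $\theepsilon$, so the argument must exploit the specific form of $\theepsilon$ (the product $G_*(t)M_*(t)G_*(t+1)/[G^*(t)M^*(t)G^*(t+1)]$ minimised over $t$) rather than the weaker $\thedelta$.
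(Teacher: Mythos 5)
Your plan correctly identifies the danger (accumulating a product of per-step ratios over all of $u{:}t$), but the mechanism you propose to avoid it does not exist: there is no ``telescoping cancellation'' between an upper bound on the numerator and a lower bound on the denominator. Concretely, if you bound each $G_v(x_v)\le G^*(v)$ in the numerator and integrate the transition densities to one (innermost variable first), you get numerator $\le \prod_{v=u}^{t} G^*(v)$; the mirror-image argument gives denominator $\ge \prod_{v=u}^{t} G_*(v)$. The resulting bound on the ratio is $\prod_{v=u}^{t} G^*(v)/G_*(v)$: the $G^*(v)$ appearing in the numerator bound and the $G_*(v)$ appearing in the denominator bound are \emph{different constants}, each factor is $\ge 1$, and nothing cancels. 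This is exponentially bad in $t-u$, i.e.\ precisely the failure you say must be avoided, and no reordering of the integration repairs it --- as long as a pointwise bound is applied at every intermediate time, a factor is paid at every intermediate time. (Replacing the $M_v$'s themselves by the constants $M^*(v)$, as your plan also suggests, is worse still: a constant is not a probability density, so the ``integrate to $1$'' step is no longer available.)

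The missing idea, which is the entire content of the paper's proof, is to bound \emph{nothing} at intermediate times and instead exploit a factor common to both integrals. Define the tail
\[
\phi_{u+1,t}(x_{u+1}) \defeq \int Q_{u+1}(x_{u+1:u+2})\cdots Q_{t-1}(x_{t-1:t})\,G_t(x_t)\,\ud x_{u+2:t},
\]
which appears identically in numerator and denominator, and apply pointwise bounds only to the single transition $M_{u+1}$: since $M_*(u+1)\le M_{u+1}(x,y)\le M^*(u+1)$ for all $x,y$, one has, for every $x$,
\[
M_*(u+1)\,m(\phi_{u+1,t}) \;\le\; \int M_{u+1}(x,y)\,\phi_{u+1,t}(y)\,\ud y \;\le\; M^*(u+1)\,m(\phi_{u+1,t}),
\qquad m(\phi)\defeq \int \phi(y)\,\ud y .
\]
Hence the numerator is at most $G^*(u)\,M^*(u+1)\,m(\phi_{u+1,t})$, while the denominator is at least $\big(\int M_u(x_{u-1},x_u)G_u(x_u)\ud x_u\big) M_*(u+1)\,m(\phi_{u+1,t}) \ge G_*(u)\,M_*(u+1)\,m(\phi_{u+1,t})$. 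The quantity $m(\phi_{u+1,t})$ is the \emph{same number} in both bounds and cancels exactly, leaving $G^*(u)M^*(u+1)/\big(G_*(u)M_*(u+1)\big)\le \theepsilon^{-1}$, uniformly in $u$, $t$ and $T$. Note also that your reading of the endpoint is off: the extra $M_u$ in the denominator does not ``supply the $M_*/M^*$ ratio''; it is simply integrated against the lower bound $G_*(u)$ on $G_u$ and disappears, and the $M$-ratio in the constant comes from the transition $M_{u+1}$ common to both integrals. Without this single-transition-plus-common-factor step your argument only yields $c_*=\prod_{v=u}^{t} G^*(v)/G_*(v)$, which does not prove the lemma.
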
 
\begin{proof} 
Suppose Assumption~\ref{a:mixing} holds. For any non-negative, bounded test
function $h:\X\to\R$, let $m(h) \defeq \int h(y) \ud y < \infty$.
Assumption~\ref{a:mixing} implies that for any $x\in\X$ and $t\ge 2$,
$M_*(t) m(h) \le \int M_t(x,y) h(y) \ud y
\le M^*(t) m(h)$.
Let $1\le u\le t$, and define
$$\phi_{u+1,t}(x_{u+1}) \defeq \int Q_{u+1}(x_{u+1:u+2}) \cdots Q_{t-1}(x_{t-1:t})
    G_t(x_t) \ud x_{u+2:t},$$
which is non-negative and bounded both from above and away from zero. We may calculate
\begin{align*}
&    \sup_{x_u',x_{u-1}'}
\frac{  \int G_u(x_u') M_{u+1}(x_u',x_{u+1}) Q_{u+1}(x_{u+1:u+2}) \cdots Q_{t-1}(x_{t-1:t})
    G_t(x_t) \ud x_{u+1:t} }{
      \int
    M_{u}(x_{u-1}',x_u) Q_u(x_{u:u+1}) \cdots
      Q_{t-1}(x_{t-1:t}) G_t(x_t) \ud x_{u:t} } \\
&\le \sup_{x_u',x_{u-1}'}
\frac{ G_u(x_u') M^*(u+1)  m(\phi_{u+1,t})  }{
      \big(\int
    M_{u}(x_{u-1}',x_u) G_u(x_u) \ud x_u \big)M_*(u+1)
    m(\phi_{u+1,t})} \\
&\le \frac{G^*(u) M^*(u+1)}{G_*(u) M_*(u+1)}. \qedhere
\end{align*}
\end{proof} 

Consider CCPF in Algorithm~\ref{alg:ccxpf}, and
denote $\xi_{t}=\sum_{i=1}^{N}\delta_{X_{t}^{(i)}}$,
$\xi_{C_{t}}=\sum_{i\in C_{t}}\delta_{X_{t}^{(i)}}$,
$\tilde{\xi}_{t}=\sum_{i=1}^{N}\delta_{\tilde{X}_{t}^{(i)}}$,
$\tilde{\xi}_{C_{t}}=\sum_{i\in C_{t}}\delta_{\tilde{X}_{t}^{(i)}}$,
$\rho_t \defeq \xi_{C_t}(G_t)/\xi_t(G_t)$ and
$\tilde{\rho}_t \defeq \tilde{\xi}_{C_t}(G_t)/\tilde{\xi}_t(G_t)$.

\begin{lemma}
\label{lem:ccpf-diff-bound-1} 
Let $h_2\ge h_1\ge 0$ be functions such that
$h_{2}^{*}\defeq \sup_{x}h_{2}(x)<\infty$, then for $t=1{:}(T-1)$,
\[
\E\bigg[ \frac{\xi_{C_{t+1}}(h_{1})}{\xi_{t+1}(h_{2})}\bigg]
\geq\E\bigg[
\frac{\xi_{C_{t}}(h_{1}')}{\xi_{t}(h_{2}')}\bigg] + \E[\tilde{\rho}_t]
- 1,
\]
where
\begin{alignat*}{1}
h_{1}'(x) & =G_t(x) (M_{t+1} h_{1})(x),\\
h_{2}'(x) & =G_t(x) \big[2(N-1)^{-1}h_{2}^{*}+(M_{t+1}h_{2})(x)\big].
\end{alignat*}
\end{lemma}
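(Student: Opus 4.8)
The plan is to argue one time step at a time, conditioning on the $\sigma$-algebra $\mathcal{F}_t$ generated by both particle systems through time $t$, and to reduce everything to a single application of Lemma~\ref{lem:convex_sum_rvs}. I would split off the deterministic reference contributions: write $\xi_{t+1}(h_2) = h_2(X_{t+1}^{(1)}) + \sum_{i=2}^N h_2(X_{t+1}^{(i)})$ and set $A \defeq h_2(X_{t+1}^{(1)}) \le h_2^*$, while for the numerator I discard the nonnegative reference term to keep $\xi_{C_{t+1}}(h_1) \ge \sum_{i=2}^N \charfun{i\in C_{t+1}}\, h_1(X_{t+1}^{(i)})$. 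Conditionally on $\mathcal{F}_t$, the per-particle outcomes $Z^{(i)}$ for $i\in\{2{:}N\}$ --- each comprising the coupled resampling draw $(I_{t+1}^{(i)},\tilde{I}_{t+1}^{(i)})$ and the propagated pair $(X_{t+1}^{(i)},\tilde{X}_{t+1}^{(i)})$ --- are independent, and both $\charfun{i\in C_{t+1}} h_1(X_{t+1}^{(i)})$ and $h_2(X_{t+1}^{(i)})$ depend on a single $Z^{(i)}$. Lemma~\ref{lem:convex_sum_rvs}, applied with $\mathcal{G}=\mathcal{F}_t$ and $b=h_2^*$, then lower-bounds the conditional expectation by a ratio of conditional expectations.

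I would then evaluate these conditional expectations. For the denominator the ancestor $I_{t+1}^{(i)}$ is marginally $\Categorical(\omega_t^{(1:N)})$, so using $\omega_t^{(j)}=G_t(X_t^{(j)})$ one gets $\E[h_2(X_{t+1}^{(i)})\mid\mathcal{F}_t]=\xi_t(G_t\,M_{t+1}h_2)/\xi_t(G_t)$; summing the $N-1$ terms and adding $A+h_2^*\le 2h_2^*$ reproduces exactly the $2(N-1)^{-1}h_2^*$ correction baked into $h_2'$, so that the denominator collapses to $(N-1)\,\xi_t(h_2')/\xi_t(G_t)$. For the numerator I would lower-bound $\charfun{i\in C_{t+1}}$ by the event that \textsc{CRes} picks a \emph{common} ancestor lying in $C_t$; on that event both systems share the ancestor $X_t^{(j)}=\tilde{X}_t^{(j)}$ and propagate it with the coupled proposal, so $X_{t+1}^{(i)}=\tilde{X}_{t+1}^{(i)}$ and hence $i\in C_{t+1}$. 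By the maximal-coupling property of \textsc{CRes} (cf.\ Lemma~\ref{lem:ic-res}) together with $\omega_t^{(j)}=\tilde{\omega}_t^{(j)}=G_t(X_t^{(j)})$ for $j\in C_t$, this common-ancestor probability is $G_t(X_t^{(j)})/\big(\xi_t(G_t)\vee\tilde{\xi}_t(G_t)\big)$, giving $\sum_i\E[\,\cdot\mid\mathcal{F}_t]\ge (N-1)\,\xi_{C_t}(h_1')/\big(\xi_t(G_t)\vee\tilde{\xi}_t(G_t)\big)$.

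Combining, the $(N-1)$ factors cancel and I obtain
\[
\E\Big[\frac{\xi_{C_{t+1}}(h_1)}{\xi_{t+1}(h_2)}\,\Big|\,\mathcal{F}_t\Big]
\ge \frac{\xi_{C_t}(h_1')}{\xi_t(h_2')}\cdot
\frac{\xi_t(G_t)}{\xi_t(G_t)\vee\tilde{\xi}_t(G_t)}.
\]
To pass to the additive form of the statement I would invoke $ab\ge a-(1-b)$, valid because $a\defeq\xi_{C_t}(h_1')/\xi_t(h_2')\le 1$ (since $h_1'\le h_2'$ and $C_t\subseteq\{1{:}N\}$) and $b\in[0,1]$, and then take total expectation and use the tower property to drop the conditioning.

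The step I expect to be the main obstacle is the appearance of the factor $\xi_t(G_t)/\big(\xi_t(G_t)\vee\tilde{\xi}_t(G_t)\big)$, which comes purely from the mismatch of the two normalising constants in the coupled resampling and has no analogue in the augmented-space computation of Theorem~\ref{thm:oneshot-bounds}. Turning this multiplicative penalty into the clean additive term $\E[\tilde{\rho}_t]-1$ rests on the identity $\xi_{C_t}(G_t)=\tilde{\xi}_{C_t}(G_t)$ valid on the coupled set (because $X_t^{(j)}=\tilde{X}_t^{(j)}$ there), from which $1-\xi_t(G_t)/\big(\xi_t(G_t)\vee\tilde{\xi}_t(G_t)\big)\le 1-\tilde{\rho}_t$ follows by treating the cases $\xi_t(G_t)\ge\tilde{\xi}_t(G_t)$ and $\xi_t(G_t)<\tilde{\xi}_t(G_t)$ separately. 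Getting this bookkeeping exactly right is what keeps the resulting one-step recursion sharp enough to telescope.
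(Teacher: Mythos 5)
Your proof is correct and follows essentially the same route as the paper's: the same conditioning on the particle history, the same single application of Lemma~\ref{lem:convex_sum_rvs} with the $2h_2^*$ correction absorbed into $h_2'$, and the same evaluation of the conditional expectations, landing on the identical one-step bound before taking total expectations. The only (cosmetic) difference is that you compute the maximal-coupling probability exactly and show the resulting factor $\xi_t(G_t)/\big(\xi_t(G_t)\vee\tilde{\xi}_t(G_t)\big)$ dominates $\tilde{\rho}_t$, whereas the paper lower-bounds $\P\big(I_{t+1}^{(i)}=\tilde{I}_{t+1}^{(i)}\in C_t \mid \mathcal{G}_t\big)$ by $\rho_t\tilde{\rho}_t$ and then applies the same multiplicative-to-additive conversion.
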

\begin{proof} 
It is direct to check that $I_{t+1}^{(j)} \in C_{t}$ and
$\tilde{I}_{t+1}^{(j)}\in C_{t}$ implies $I_{t+1}^{(j)}=\tilde{I}_{t+1}^{(j)}$,
because either $\omega_{t}^{(j)} \le \tilde{\omega}_{t}^{(j)}$
for all $j\in C_{t}$ or
$\omega_{t}^{(j)} \ge \tilde{\omega}_{t}^{(j)}$
for all $j\in C_{t}$.
Therefore, we may write
\[
\E\bigg[ \frac{\xi_{C_{t+1}}(h_{1})}{\xi_{t+1}(h_{2})}\bigg]
= \E \bigg[  \frac{\sum_{i=2}^N h_{1}(X_{t+1}^{(i)})1\big((I_{t+1}^{(i)},
  \tilde{I}_{t+1}^{(i)}) \in C_{t}^2\big)
  }{h_{2}(X_{t+1}^{(1)})+\sum_{j=2}^N h_2(X_{t+1}^{(j)})}\bigg],
\]
and apply Lemma~\ref{lem:convex_sum_rvs}
with $\mathcal{G} = \mathcal{G}_{t}\defeq \big\{
  X_{1:t}^{(1:N)},I_{1:t}^{(1:N)},
  \tilde{X}_{1:t}^{(1:N)},\tilde{I}_{1:t}^{(1:N)}\big\}$,
$Y^{(i)} = h_{1}(X_{t+1}^{(i)})
1\big((I_{t+1}^{(i)},
  \tilde{I}_{t+1}^{(i)}) \in C_{t}^2\big) $,
$Z^{(i)} =
(X_{t+1}^{(i)},\tilde{X}_{t+1}^{(i)},I_{t+1}^{(i)},\tilde{I}_{t+1}^{(i)})$
and $f(x,\tilde{x},i,\tilde{i})=h_2(x)$, yielding
\begin{align*}
    \E\bigg[ \frac{\xi_{C_{t+1}}(h_{1})}{\xi_{t+1}(h_{2})}\bigg]
& \ge \E\bigg[
\frac{\sum_{i=2}^{N}
  \E\big[
  h_{1}(X_{t+1}^{(i)})
  1\big((I_{t+1}^{(i)}, \tilde{I}_{t+1}^{(i)}) \in C_{t}^2\big) \;\big|\;
  \mathcal{G}_t\big]
    }{2h_{2}^{\ast}+ \sum_{j=2}^N \E[h_2(X_{t+1}^{(j)})\mid
\mathcal{G}_t]}\bigg] \\
 & =
 \E\Bigg[
 \frac{
 \sum_{i=2}^{N} \frac{\xi_{C_{t}}(G_t (M_{t+1}h_{1}))}{\xi_{C_{t}}
   (G_{t})}
 \E\big[1\big((I_{t+1}^{(i)}, \tilde{I}_{t+1}^{(i)}) \in C_{t}^2
       \big)\;\big|\; \mathcal{G}_{t}\big]
 }{2h_{2}^{\ast}+(N-1)\frac{\xi_{t}(G_t (M_{t+1}h_{2}))}{\xi_{t}(G_{t})}}\Bigg]
\\
 & \geq
  \E\Bigg[
 \frac{
 (N-1) \frac{\xi_{C_{t}}(G_t (M_{t+1}h_{1}))}{\xi_{C_{t}}
   (G_{t})}
 \rho_t \tilde{\rho}_t
 }{2h_{2}^{\ast}+(N-1)\frac{\xi_{t}(G_t
   (M_{t+1}h_{2}))}{\xi_{t}(G_{t})}}\Bigg] \\
 & = \E\Bigg[
 \frac{
 \xi_{C_{t}}(G_t (M_{t+1}h_{1}))
 }{2h_{2}^{\ast}(N-1)^{-1}\xi_t(G_t) + \xi_{t}(G_t
   (M_{t+1}h_{2}))} \tilde{\rho}_t \Bigg]  \\
 &= \E\bigg[ \frac{\xi_{C_t}(h_1')}{\xi_t(h_2')} - (1-\tilde{\rho}_t)
 \frac{\xi_{C_t}(h_1')}{\xi_t(h_2')}\bigg],
\end{align*}
from which the claim follows because $\tilde{\rho}_t\in[0,1]$ and
$h_1'\le h_2'$, so the latter fraction is upper bounded by one.
\end{proof}

\begin{lemma}
    \label{lem:rho-stuff} 
Suppose that Assumption~\ref{a:mixing-type} holds,
then, for any $t=1{:}T-1$,
\[
    \E[\rho_{t}] \ge \beta_N^{t} + \sum_{u=1}^{t-1}
    \big(\E[\tilde{\rho}_u] - 1 \big),
    \qquad\text{where}\qquad
    \beta_N \defeq \bigg( 1 + \frac{2c_*}{N-1}\bigg)^{-1}.
\]
\end{lemma}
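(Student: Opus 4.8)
The plan is to iterate the one-step bound of Lemma~\ref{lem:ccpf-diff-bound-1} down to the initial time and then to control the resulting deterministic ratio using the mixing Assumption~\ref{a:mixing-type}. Writing $\eta \defeq 2(N-1)^{-1}$, I would define backward-iterated test functions $h_u^{(1)}, h_u^{(2)}$ for $u=t,\ldots,1$ by $h_t^{(1)} = h_t^{(2)} = G_t$ and, for $u<t$, $h_u^{(1)} = G_u M_{u+1} h_{u+1}^{(1)}$ and $h_u^{(2)} = G_u[\eta \|h_{u+1}^{(2)}\|_\infty + M_{u+1} h_{u+1}^{(2)}]$, exactly as in Lemma~\ref{lem:ccpf-diff-bound-1}; an easy induction shows $h_u^{(1)} \le h_u^{(2)}$, so the lemma applies at every level. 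Since $\rho_t = \xi_{C_t}(G_t)/\xi_t(G_t)$, successive application telescopes to $\E[\rho_t] \ge \E[\xi_{C_1}(h_1^{(1)})/\xi_1(h_1^{(2)})] + \sum_{u=1}^{t-1}(\E[\tilde\rho_u]-1)$, reducing the claim to showing that the initial-time term is at least $\beta_N^t$.

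For that base term I would use that at time $1$ the particles $X_1^{(2:N)}$ are i.i.d.\ from $M_1$ and all coupled, so $\{2{:}N\}\subseteq C_1$; bounding $\xi_{C_1}(h_1^{(1)}) \ge \sum_{i=2}^N h_1^{(1)}(X_1^{(i)})$ and applying Lemma~\ref{lem:convex_sum_rvs} (the reference particle contributing $A = h_1^{(2)}(X_1^{(1)}) \le \|h_1^{(2)}\|_\infty$, and $b = \|h_1^{(2)}\|_\infty$) yields the deterministic lower bound $\E[\xi_{C_1}(h_1^{(1)})/\xi_1(h_1^{(2)})] \ge M_1 h_1^{(1)}/\big(\eta\|h_1^{(2)}\|_\infty + M_1 h_1^{(2)}\big)$, where $M_1 f$ denotes the scalar $\int M_1(x) f(x)\,\ud x$.

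The heart of the argument is to compare $h_u^{(2)}$ with $h_u^{(1)}$ via the mixing bound. I would first unfold the $h^{(2)}$-recursion into $h_u^{(2)} = h_u^{(1)} + \eta\sum_{w=u}^{t-1}\|h_{w+1}^{(2)}\|_\infty\, h_{u,w}^{(1)}$, where $h_{u,w}^{(1)} \defeq G_u M_{u+1}\cdots M_w G_w$ is the analogue of $h_u^{(1)}$ truncated at time $w$ (so $h_{u,t}^{(1)} = h_u^{(1)}$). Each truncated function obeys $\|h_{u,w}^{(1)}\|_\infty \le c_* \inf_x (M_u h_{u,w}^{(1)})(x)$ by Assumption~\ref{a:mixing-type} (read as the scalar $M_1 h_{1,w}^{(1)}$ when $u=1$); since $h_u^{(2)}$ is a non-negative combination of such functions with constant coefficients and the infimum is superadditive, this gives the key relation $\|h_u^{(2)}\|_\infty \le c_* \inf_x(M_u h_u^{(2)})(x)$ for every $u$. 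Feeding this back, a downward induction on $u$ establishes $(M_u h_u^{(2)})(x) \le (1+c_*\eta)^{t-u}(M_u h_u^{(1)})(x)$ pointwise: the main term is handled by the inductive hypothesis, while the drift term $\eta\|h_{u+1}^{(2)}\|_\infty (M_u G_u)(x)$ is absorbed using $(M_u G_u)(x) \le (M_u h_u^{(1)})(x)/\inf_x(M_{u+1}h_{u+1}^{(1)})(x)$ together with the key relation and the inf of the inductive hypothesis. Evaluating at $u=1$ gives $M_1 h_1^{(2)} \le (1+c_*\eta)^{t-1} M_1 h_1^{(1)}$ and $\eta\|h_1^{(2)}\|_\infty + M_1 h_1^{(2)} \le (1+c_*\eta) M_1 h_1^{(2)}$, so the base term is at least $(1+c_*\eta)^{-t} = \beta_N^t$, completing the proof.

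The main obstacle I anticipate is precisely this comparison of $h^{(2)}$ with $h^{(1)}$: one must show that the accumulated $O((N-1)^{-1})$ corrections in $h^{(2)}$ inflate the relevant integral by at most a factor $(1+c_*\eta)$ per time step, \emph{uniformly in $T$}. This is where strong mixing is indispensable---the bound $\|h_{u,w}^{(1)}\|_\infty \le c_*\inf_x(M_u h_{u,w}^{(1)})(x)$ holding uniformly over all windows $[u,w]$ is exactly what prevents the corrections from compounding in an uncontrolled way, and making the bookkeeping of suprema versus infima line up consistently through the downward induction is the delicate part.
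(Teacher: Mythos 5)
Your proof is correct, and it rests on the same two ingredients as the paper's: iterating Lemma~\ref{lem:ccpf-diff-bound-1} down to time $1$ (with Lemma~\ref{lem:convex_sum_rvs} handling the initial-time term, where only the reference particle lies outside $C_1$), and using Assumption~\ref{a:mixing-type} to convert each additive correction $2(N-1)^{-1}\sup(\cdot)$ into a multiplicative factor $1+2c_*(N-1)^{-1}=\beta_N^{-1}$. The difference is purely in \emph{where} that conversion happens, and it is worth seeing how the paper sidesteps what you call the main obstacle. The paper applies Lemma~\ref{lem:ccpf-diff-bound-1} at every level with $h_1=h_2=\bar{G}_{u,t}$ (your $h_u^{(1)}$, i.e.\ already renormalised), observes that the resulting $h_2'$ satisfies $h_2'\le \beta_N^{-1}\bar{G}_{u-1,t}$ by a single application of the mixing assumption on the window $[u,t]$, and pulls the constant $\beta_N$ out of the ratio immediately; this yields the one-step recursion $e_{u+1}\ge \beta_N e_u + (\E[\tilde{\rho}_u]-1)$ for $e_u\defeq \E\big[\xi_{C_u}(\bar{G}_{u,t})/\xi_u(\bar{G}_{u,t})\big]$, so the numerator and denominator involve the \emph{same} function at every step and no corrections ever accumulate. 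You instead let the corrections accumulate inside $h^{(2)}$ and pay them back globally at the end, via the unfolding $h_u^{(2)}=h_u^{(1)}+\eta\sum_{w}\|h_{w+1}^{(2)}\|_\infty h_{u,w}^{(1)}$ (with $\eta\defeq 2(N-1)^{-1}$) and the downward induction $(M_u h_u^{(2)})\le(1+c_*\eta)^{t-u}(M_u h_u^{(1)})$. I checked this induction and it is sound: the superadditivity argument giving $\|h_u^{(2)}\|_\infty\le c_*\inf_x (M_u h_u^{(2)})(x)$, and the absorption of the drift term through $(M_uG_u)\le (M_uh_u^{(1)})/\inf_x(M_{u+1}h_{u+1}^{(1)})(x)$, both go through---though note your route invokes the assumption on all sub-windows $[u,w]$ with $u\le w\le t$, whereas the paper needs only the windows $[u,t]$ ending at $t$. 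Both routes produce the same constant $\beta_N^t$; a minor advantage of yours is that the terms $\E[\tilde{\rho}_u]-1$ enter with coefficient $1$ directly, whereas the paper's recursion produces coefficients $\beta_N^{t-1-u}$ and relaxes them to $1$ using $\E[\tilde{\rho}_u]\le 1$.
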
 
\begin{proof} 
We may apply Lemma~\ref{lem:ccpf-diff-bound-1} recursively
with
$h_1^{(u)} = h_2^{(u)} = \bar{G}_{u,t}$ where $\bar{G}_{u,t}(x_u)
\defeq \int Q_u(x_{u:u+1}) \bar{G}_{u+1,t}(x_{u+1})\ud x_{u+1}$ and
$\bar{G}_{t,t} = G_t$, leading to
\begin{align*}
    {h_2'}^{(u)}(x) &= G_{u-1}(x) \big[
    2 (N-1)^{-1} \sup_{x'} \bar{G}_{u,t}(x') + (M_u \bar{G}_{u,t})(x)\big]\\
    &\le G_{u-1}(x) (M_u \bar{G}_{u,t})(x) \bigg( 1 +
    \frac{2c_*}{N-1}\bigg)\\
    & = \bar{G}_{u-1,t} \beta_N^{-1},
\end{align*}
implying that
\[
    \E \bigg[
    \frac{\xi_{C_{u+1}}(\bar{G}_{u+1,t})}{\xi_{u+1}(\bar{G}_{u+1,t})}\bigg]
    \ge \E \bigg[
    \frac{\xi_{C_{u}}(\bar{G}_{u,t})}{\xi_{u}(\bar{G}_{u,t})}\bigg]
      \beta_N +
       (\E[\tilde{\rho}_u] - 1).\qedhere
\]
\end{proof} 

\begin{lemma}
    \label{lem:oneshot-with-rate-proof} 
Under Assumption~\ref{a:mixing-type},
\[
    \E[\rho_T] \ge 1 - 2^T (1- \beta_N^T)
    \ge 1 - \frac{2^T T}{(2c_*)^{-1}(N-1)+1}.
\]
\end{lemma}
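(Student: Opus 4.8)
The plan is to combine Lemma~\ref{lem:rho-stuff} with its mirror image and resolve the resulting coupled system of inequalities by a doubling induction. First I would observe that the CCPF treats its two marginal systems symmetrically: exchanging the tilde and non-tilde quantities throughout the proof of Lemma~\ref{lem:ccpf-diff-bound-1} (equivalently, applying Lemma~\ref{lem:convex_sum_rvs} to the normalisation of the second system rather than the first) yields the companion bound $\E[\tilde{\rho}_t] \ge \beta_N^t + \sum_{u=1}^{t-1}(\E[\rho_u]-1)$ alongside the stated $\E[\rho_t] \ge \beta_N^t + \sum_{u=1}^{t-1}(\E[\tilde{\rho}_u]-1)$. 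Since the underlying recursion of Lemma~\ref{lem:ccpf-diff-bound-1} is valid for $t=1{:}(T-1)$ and links level $t+1$ to level $t$, the same argument run with top level $T$ gives both inequalities for all $t$ up to and including $T$.

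Next I would pass to the complementary quantities $\epsilon_t \defeq 1-\E[\rho_t]$ and $\tilde{\epsilon}_t \defeq 1-\E[\tilde{\rho}_t]$, which are non-negative because $\rho_t,\tilde{\rho}_t\in[0,1]$. The two bounds become $\epsilon_t \le (1-\beta_N^t) + \sum_{u=1}^{t-1}\tilde{\epsilon}_u$ and $\tilde{\epsilon}_t \le (1-\beta_N^t) + \sum_{u=1}^{t-1}\epsilon_u$. Writing $M_t \defeq \epsilon_t \vee \tilde{\epsilon}_t$, both give $M_t \le (1-\beta_N^t) + \sum_{u=1}^{t-1} M_u$. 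A short induction then shows $M_t \le 2^{t-1}(1-\beta_N^t)$: the base case $M_1\le 1-\beta_N$ is immediate, and if $M_u \le 2^{u-1}(1-\beta_N^u) \le 2^{u-1}(1-\beta_N^t)$ for $u<t$ (using that $1-\beta_N^t$ is increasing in $t$), then $M_t \le (1-\beta_N^t)\big(1 + \sum_{u=1}^{t-1}2^{u-1}\big) = 2^{t-1}(1-\beta_N^t)$. Taking $t=T$ yields $\epsilon_T \le M_T \le 2^{T-1}(1-\beta_N^T) \le 2^T(1-\beta_N^T)$, which is exactly the first inequality $\E[\rho_T]\ge 1-2^T(1-\beta_N^T)$.

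For the second inequality it remains to control $1-\beta_N^T$ in terms of $N$, $T$ and $c_*$. I would use the elementary factorisation $1-\beta_N^T = (1-\beta_N)\sum_{k=0}^{T-1}\beta_N^k \le T(1-\beta_N)$, valid since $\beta_N\in(0,1)$. Substituting $\beta_N = (1+2c_*(N-1)^{-1})^{-1}$ gives $1-\beta_N = \frac{2c_*}{N-1+2c_*}$, so that $T(1-\beta_N) = \frac{2c_*T}{N-1+2c_*} = \frac{T}{(2c_*)^{-1}(N-1)+1}$; multiplying through by $2^T$ delivers the claimed bound.

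I expect the main obstacle to lie in the first paragraph rather than the algebra: one must justify carefully that the symmetric companion of Lemma~\ref{lem:rho-stuff} is legitimate (the exchangeability of the two marginal systems under the index-coupled resampling of Algorithm~\ref{alg:cres}) and that the recursion can be run all the way to the top level $t=T$. Once the coupled pair of inequalities is in place, the $\max$ reduction decouples them, and the doubling induction together with the geometric-series estimate are routine.
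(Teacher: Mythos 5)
Your proposal is correct and follows essentially the same route as the paper's proof: invoke Lemma~\ref{lem:rho-stuff} together with its symmetric (tilde-swapped) companion, run a doubling induction on $\big(1-\E[\rho_t]\big)\vee\big(1-\E[\tilde{\rho}_t]\big)$ to obtain the $2^T(1-\beta_N^T)$ bound, and finish with $1-\beta_N^T\le T(1-\beta_N)$ and the algebraic identity $1-\beta_N = \big((2c_*)^{-1}(N-1)+1\big)^{-1}$. Your version is marginally sharper (you get $2^{T-1}$ rather than $2^T$) and, usefully, you make explicit the point the paper glosses over — that the recursion of Lemma~\ref{lem:ccpf-diff-bound-1} can legitimately be run with top level $t=T$, even though Lemma~\ref{lem:rho-stuff} is stated only for $t=1{:}(T-1)$.
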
 
\begin{proof} 
The first inequality follows once we prove inductively that
$(1 - \E[\rho_t]) \vee (1 - \E[\tilde{\rho}_t])
\le 2^t (1 - \beta_N^t)$, which holds for $t=1$ by
Lemma~\ref{lem:rho-stuff} (which is symmetric wrt.~$\rho_t$ and
$\tilde{\rho}_t$). Then,
\begin{align*}
    1 - \E[\rho_t]
    &\le 1 - \beta_N^t + \sum_{u=1}^{t-1}
    \big(1-\E[\tilde{\rho}_t]\big) \\
    &\le (1 - \beta_N^t)\Big(1 + \sum_{u=1}^{t-1} 2^u \Big)
    \le 2^t (1- \beta_N^t),
\end{align*}
and the same bound applies to $1 - \E[\tilde{\rho}_t]$. The latter
bound follows as $1 - \beta_N^T \le T (1 - \beta_N)$.
\end{proof} 


\section{Unbiased estimator based on a coupled Markov kernel} 

We formalise here the construction of unbiased estimators of Markov
chain equilibrium expectations due to~\cite{glynn-rhee}, and
complement the results in~\cite{jacob-lindsten-schon}.
\begin{definition}[Coupling of probability measures] 
Suppose $\mu$ and $\nu$ are two probability measures on $\Sp$.
The set of couplings $\Gamma(\mu,\nu)$ consists of all probability
measures $\lambda$ on $\Sp\times\Sp$ with marginals
$\lambda(\uarg\times\Sp) = \mu$ and $\lambda(\Sp\times\uarg) = \nu$.
\end{definition} 

\begin{definition}[Coupled Markov kernel] 
Suppose $P$ is a Markov kernel on $\Sp$,  and
$\boldsymbol{P}$ is a Markov kernel on $\Sp\times\Sp$.
If $\boldsymbol{P}(x,\tilde{x}; \uarg) \in \Gamma\big(P(x,\uarg),
P(\tilde{x},\uarg)\big)$ for all $x,\tilde{x}\in\X$,
then, $\boldsymbol{P}$ is a \emph{coupled kernel} corresponding to $P$.
\end{definition} 

\begin{definition}[Coupling time] 
The \emph{coupling time} of the bivariate Markov chain
$(X_n,\tilde{X}_n)_{n\ge 0}$
is the random variable $\tau\defeq \inf \{n\ge 0\given X_k =
  \tilde{X}_{k}\text{ for all }k\ge n\}$.
\end{definition} 

\begin{theorem}
    \label{thm:coupling-unbiased} 
Let $P$ be an ergodic Markov kernel on $\X$ with invariant
distribution $\pi$ (that is, for all $x\in\X$,
$\| P^n(x,\uarg) - \pi \|_\tv\xrightarrow{n\to\infty} 0$), and suppose $\boldsymbol{P}$ is a corresponding
coupled Markov kernel. Let $\nu$ be any probability distribution on
$\X$, and suppose that $\lambda\in \Gamma(\nu P, \nu)$.

Consider a Markov chain $(X_n,\tilde{X}_n)_{n\ge 0}$ with initial
distribution $\lambda$ and transition probability $\boldsymbol{P}$,
and $h\in L^2(\pi)$.
Let the coupling time $\tau$
of $(X_n,\tilde{X}_n)_{n\ge 0}$
be a.s.~finite,
$\sup_{n\ge 0}\E[h^2(X_n)]<\infty$ for all $n\ge 0$ and
\begin{equation}
     \sup_{\{m,L \given L\ge m\}}
    \E[Z_{m,L}^2] < \infty,		\qquad\text{where}\qquad
    Z_{m,L} \defeq
    \sum_{n=m}^L [h(X_n) -
    h(\tilde{X}_{n})]\charfun{n< \tau}.
    \label{eq:cauchy}
\end{equation}
Then for all $b\ge 0$, $\E[Z_b]=\pi(h)$ and $\var(Z_b)<\infty$, where
\[
    Z_b \defeq h(X_b) + \sum_{n=b+1}^\infty [h(X_n) -
    h(\tilde{X}_{n})]\charfun{n< \tau}.
\]
\end{theorem}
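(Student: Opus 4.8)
The plan is to exploit the one-step shift built into the initial coupling $\lambda\in\Gamma(\nu P,\nu)$ together with a telescoping identity. First I would record the marginal laws along the chain: since $\lambda\in\Gamma(\nu P,\nu)$ and $\boldsymbol{P}$ is a coupling of $P$, each coordinate evolves marginally according to $P$, so $X_n\sim\nu P^{n+1}$ and $\tilde{X}_n\sim\nu P^n$ for every $n\ge 0$. In particular $X_n$ and $\tilde{X}_{n+1}$ share the law $\nu P^{n+1}$, which is exactly what produces unbiasedness: the correction term $h(X_n)-h(\tilde{X}_n)$ has expectation $\nu P^{n+1}(h)-\nu P^n(h)$, and these differences telescope.

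Next I would make $Z_b$ well defined. Because $\tau<\infty$ a.s.\ and $X_k=\tilde{X}_k$ for all $k\ge\tau$, every summand with $n\ge\tau$ vanishes and $\charfun{n<\tau}$ merely truncates the series to the finite random range $n\in\{b+1,\ldots,\tau-1\}$; hence the defining series converges a.s.\ and $Z_b=h(X_b)+Z_\infty$, where $Z_\infty\defeq\lim_{L\to\infty}Z_{b+1,L}$ is the a.s.\ limit of the partial sums $Z_{m,L}$. The role of condition~\eqref{eq:cauchy} is to upgrade this a.s.\ convergence to $L^1$ convergence: the uniform bound $\sup_{L\ge b+1}\E[Z_{b+1,L}^2]<\infty$ makes the family $\{Z_{b+1,L}\}_L$ uniformly integrable, so $\E[Z_\infty]=\lim_{L\to\infty}\E[Z_{b+1,L}]$, while Fatou's lemma gives $\E[Z_\infty^2]\le\liminf_L\E[Z_{b+1,L}^2]<\infty$.

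With the interchange justified, I would compute the expectation by telescoping. Since $(h(X_n)-h(\tilde{X}_n))\charfun{n<\tau}=h(X_n)-h(\tilde{X}_n)$ a.s.\ (the two sides agree for $n<\tau$ and both vanish for $n\ge\tau$), and each $h(X_n),h(\tilde{X}_n)\in L^1$ by $\sup_n\E[h^2(X_n)]<\infty$, I would obtain
\[
    \E[h(X_b)+Z_{b+1,L}]
    = \nu P^{b+1}(h)+\sum_{n=b+1}^L\big(\nu P^{n+1}(h)-\nu P^n(h)\big)
    = \nu P^{L+1}(h).
\]
Letting $L\to\infty$ and combining with the $L^1$ convergence above yields $\E[Z_b]=\lim_{L\to\infty}\nu P^{L+1}(h)$. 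Finiteness of the variance is then immediate, as $Z_b=h(X_b)+Z_\infty$ is a sum of two $L^2$ random variables.

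The main obstacle is twofold, and in both parts it amounts to exchanging a limit with an integral. The first is the $L^1$ passage in the second paragraph, which is precisely what \eqref{eq:cauchy} is engineered to supply: without a uniform moment bound the a.s.\ convergence of the unboundedly many correction terms would not transfer to their expectations. The second and more delicate point is the final limit $\nu P^{L+1}(h)\to\pi(h)$, since ergodicity only gives $\nu P^n\to\pi$ in total variation, which by itself controls bounded test functions. For a general $h\in L^2(\pi)$ I would therefore combine total-variation convergence with the uniform integrability of $h$ under $\{\nu P^n\}_n$ afforded by $\sup_n\E[h^2(X_n)]=\sup_n\nu P^{n+1}(h^2)<\infty$, concluding $\nu P^{L+1}(h)\to\pi(h)$ and hence $\E[Z_b]=\pi(h)$.
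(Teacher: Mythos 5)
Your proof is correct and follows essentially the same route as the paper's: the identity $X_n \eqd \tilde{X}_{n+1}$ yielding the telescoping $\E[h(X_b)+Z_{b+1,L}]=\nu P^{L+1}(h)$, the use of \eqref{eq:cauchy} as an $L^2$ bound giving uniform integrability (hence $L^1$ convergence of the partial sums and, via Fatou, square integrability of $Z_b$), and the truncation argument combining total-variation convergence with the uniform second-moment bound to conclude $\nu P^{L+1}(h)\to\pi(h)$. No gaps; your write-up is if anything slightly more explicit than the paper's about why the indicator $\charfun{n<\tau}$ can be dropped in the telescoping step.
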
 
\begin{proof} 
Note that $X_n \overset{d}{=} \tilde{X}_{n+1}$ for all $n\ge 0$,
and so for any $m>b$,
\begin{equation*}
    \E[h(X_m)] = \E\bigg[\overbrace{h(X_b) + \sum_{n=b+1}^m [h(X_n) -
    h(\tilde{X}_{n})]\charfun{n< \tau}}^{\eqdef \zeta_m}\bigg].
\end{equation*}
Fix $b>0$. By~\eqref{eq:cauchy}, $\{ \zeta_m \}$ is a bounded  sequence in $L^2$. The almost sure finiteness of the stopping time ensures $Z_b$ is well defined and $\zeta_m \rightarrow Z_b$ almost surely. Thus $Z_b$ is also square integrable and $\zeta_m \rightarrow Z_b$ in $L^1$.
Since $\nu P^{n}$ converges to $\pi$ in total variation, the assumptions $\sup_n \nu P^{n}(h^2)<\infty$ and $\pi(h^2)<\infty$ imply
$\nu (P^{n} h) \rightarrow \pi(h)$. Finally,
$\E[Z_b]=\pi(h)$ follows from $\E[\zeta_m] = \E[h(X_m)] = \nu (P^{m+1} h) \rightarrow \pi(h)$.
\end{proof} 

\begin{lemma}
    \label{lem:variance-convergence} 
Letting $\bar{h} \defeq h - \pi(h)$, the variance of the estimator satisfies
\[
    |\var(Z_b) - \var_\pi\big(h(X)\big)|
    \le  \|\bar{h}\|_\infty^2
    \| \nu P^{b+1} - \pi \|_\mathrm{tv}
    + 2 \| \bar{h}\|_\infty (\E Z_{b+1,\infty}^2)^{1/2}
    + \E Z_{b+1,\infty}^2.
\]
\end{lemma}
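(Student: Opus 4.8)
The plan is to reduce everything to the decomposition $Z_b = h(X_b) + Z_{b+1,\infty}$ and the unbiasedness $\E[Z_b] = \pi(h)$ already supplied by Theorem~\ref{thm:coupling-unbiased}. Writing $\bar{h} = h - \pi(h)$, this identity gives $Z_b - \pi(h) = \bar{h}(X_b) + Z_{b+1,\infty}$, so that $\var(Z_b) = \E[(Z_b-\pi(h))^2]$ expands into exactly three pieces:
\[
\var(Z_b) = \E[\bar{h}(X_b)^2] + 2\,\E[\bar{h}(X_b)\, Z_{b+1,\infty}] + \E[Z_{b+1,\infty}^2].
\]
The square-integrability hypotheses carried over from Theorem~\ref{thm:coupling-unbiased}, in particular the uniform bound~\eqref{eq:cauchy}, guarantee $\E[Z_{b+1,\infty}^2]<\infty$ (by Fatou applied to $Z_{b+1,L}\to Z_{b+1,\infty}$), so that the expansion and all subsequent manipulations are legitimate. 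I would then compare this against the target, using $\var_\pi\big(h(X)\big) = \pi(\bar{h}^2)$.

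Next I would bound each term separately and combine by the triangle inequality. Since the chain is initialised with $\lambda\in\Gamma(\nu P,\nu)$, the marginal law of $X_b$ is $\nu P^{b+1}$, hence $\E[\bar{h}(X_b)^2] = (\nu P^{b+1})(\bar{h}^2)$, and the leading term obeys
\[
\big|(\nu P^{b+1})(\bar{h}^2) - \pi(\bar{h}^2)\big| \le \|\bar{h}^2\|_\infty\, \|\nu P^{b+1} - \pi\|_\tv = \|\bar{h}\|_\infty^2\, \|\nu P^{b+1} - \pi\|_\tv,
\]
which uses the definition of $\|\cdot\|_\tv$ as a supremum over functions bounded by one. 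For the cross term, Cauchy--Schwarz gives $|\E[\bar{h}(X_b)\, Z_{b+1,\infty}]| \le (\E[\bar{h}(X_b)^2])^{1/2}(\E[Z_{b+1,\infty}^2])^{1/2} \le \|\bar{h}\|_\infty (\E[Z_{b+1,\infty}^2])^{1/2}$, and the final term $\E[Z_{b+1,\infty}^2]$ is retained unchanged. Assembling the three contributions reproduces the stated bound.

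This argument is essentially routine bookkeeping, so I do not expect a genuine obstacle; the only points needing care are (a) correctly reading off the law of $X_b$ as $\nu P^{b+1}$ from the initialisation $\lambda\in\Gamma(\nu P,\nu)$, which is what produces the $\nu P^{b+1}$ appearing in the bound, and (b) confirming finiteness of $\E[Z_{b+1,\infty}^2]$ and the validity of the $L^2$ expansion, both of which follow directly from the hypotheses of Theorem~\ref{thm:coupling-unbiased}.
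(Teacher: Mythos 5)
Your proposal is correct and follows essentially the same route as the paper: expand $\var(Z_b) = \E\big(\bar{h}(X_b) + Z_{b+1,\infty}\big)^2$ using the unbiasedness $\E[Z_b]=\pi(h)$, bound the first term via the total variation distance between the law $\nu P^{b+1}$ of $X_b$ and $\pi$, and control the cross term by $\|\bar h\|_\infty (\E Z_{b+1,\infty}^2)^{1/2}$. The only cosmetic difference is that you apply Cauchy--Schwarz directly to the cross term, whereas the paper first bounds $|\bar h(X_b)|\le\|\bar h\|_\infty$ and then uses $(\E|Z_{b+1,\infty}|)^2 \le \E Z_{b+1,\infty}^2$; these yield the identical bound.
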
 
\begin{proof} 
We may write
\[
    \var(Z_b)
    = \E\big( \bar{h}(X_b) + Z_{b+1,\infty} \big)^2
    = \E \bar{h}^2(X_b) + 2 \E[\bar{h}(X_b)Z_{b+1,\infty}]
    + \E Z_{b+1,\infty}^2.
\]
Let $X \sim \pi$, then
\[
    |\var(Z_b) - \var(h(X))|
    \le | \E \bar{h}^2(X_b) - \E \bar{h}^2(X) | + 2 \|\bar{h}\|_\infty
    \E|Z_{b+1,\infty}| + \E Z_{b+1,\infty}^2.
\]
The first term is upper bounded by $\|\bar{h}\|_\infty^2
\| \nu P^{b+1}  - \pi\|_\mathrm{tv}$, and $(\E|Z_{b+1,\infty}|)^2\le
  \E Z_{b+1,\infty}^2$.
\end{proof} 

Below, we use $\| h \|_{\mathrm{osc}} \defeq
\sup_{x,y\in\X} | h(x) - h(y)|$, and
$\|h\|_\infty = \sup_{x\in\X}
|h(x)| \ge  \| h \|_\mathrm{osc} /2 $.

Under the following assumed distribution on the coupling time, not only is the sequence $Z_{m,L}$ defined in~\eqref{eq:cauchy} uniformly square integrable, the corresponding sequence $\{ \zeta_m \}$  is an $L^2$ Cauchy sequence.
\begin{lemma}
    \label{lem:cauchy-terms} 
Suppose that there exist $C<\infty$ and $\lambda\in[0,1)$ such that
for all $n\in\N$, $\P(\tau>n)\le C \lambda^{n}$, then
$\E[Z_{m,L}^2] \le  2 C\lambda^m(1-\lambda)^{-2} \| h
  \|_{\mathrm{osc}}^2$ for all $L\ge m\ge 1$.
\end{lemma}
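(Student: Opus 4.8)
The plan is to expand the square directly, $\E[Z_{m,L}^2] = \sum_{n=m}^L \sum_{k=m}^L \E[D_n D_k]$, where I abbreviate $D_n \defeq [h(X_n) - h(\tilde{X}_n)]\charfun{n<\tau}$, and then to control each term by the assumed tail of $\tau$. The first ingredient is the deterministic pointwise bound $|h(X_n) - h(\tilde{X}_n)| \le \|h\|_{\mathrm{osc}}$, valid for every realisation, which gives $|D_n| \le \|h\|_{\mathrm{osc}}\charfun{n<\tau}$.

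The key observation is that the two indicators combine multiplicatively as $\charfun{n<\tau}\charfun{k<\tau} = \charfun{(n\vee k)<\tau}$, so that $\E[D_n D_k] \le \|h\|_{\mathrm{osc}}^2\, \P(\tau > n\vee k) \le C\|h\|_{\mathrm{osc}}^2\, \lambda^{n\vee k}$, using the hypothesis $\P(\tau>j)\le C\lambda^j$. It then remains to evaluate the double sum $\sum_{n,k\ge m}\lambda^{n\vee k}$, which I would do by grouping the pairs according to the common value $j=n\vee k$: there are exactly $2(j-m)+1$ ordered pairs $(n,k)$ with $n,k\ge m$ and $\max(n,k)=j$. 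Substituting $i=j-m$ and invoking the standard evaluations $\sum_{i\ge0}\lambda^i=(1-\lambda)^{-1}$ and $\sum_{i\ge0} i\lambda^i=\lambda(1-\lambda)^{-2}$ yields $\sum_{n,k\ge m}\lambda^{n\vee k} = \lambda^m\sum_{i\ge0}(2i+1)\lambda^i = \lambda^m (1+\lambda)(1-\lambda)^{-2}$.

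Finishing is then immediate: extending the finite index ranges to all $n,k\ge m$ only enlarges the estimate, and bounding $1+\lambda\le 2$ gives $\E[Z_{m,L}^2]\le 2C\lambda^m(1-\lambda)^{-2}\|h\|_{\mathrm{osc}}^2$, as stated.

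I do not expect a genuine obstacle here; the only delicate point is nailing the correct constant $2$. A cruder route via the triangle inequality in $L^2$, namely $\|Z_{m,L}\|_2 \le \sum_n \|D_n\|_2 \le \|h\|_{\mathrm{osc}}\sqrt{C}\sum_n \lambda^{n/2}$, would produce a factor $(1-\lambda^{1/2})^{-2}\le 4(1-\lambda)^{-2}$ and hence only the weaker constant $4$. It is precisely the square-expansion together with the $n\vee k$ bookkeeping that sharpens the constant to the claimed $2$, so I would present that argument rather than the triangle-inequality shortcut.
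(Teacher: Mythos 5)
Your proposal is correct and is essentially the paper's own argument: expand the square, combine the indicators via $\charfun{n<\tau}\charfun{\ell<\tau}=\charfun{\tau>n\vee\ell}$, bound each term by $C\|h\|_{\mathrm{osc}}^2\lambda^{n\vee\ell}$, and count the $2i+1$ ordered pairs with maximum $i+m$ to evaluate the double sum as $\lambda^m(1+\lambda)(1-\lambda)^{-2}\le 2\lambda^m(1-\lambda)^{-2}$. The paper performs exactly this computation, including the same final bound $1+\lambda\le 2$ that yields the constant $2$.
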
 
\begin{proof} 
Let $\Delta h_n\defeq h(X_n)-h(\tilde{X}_n)$, then $|\Delta h_n|\le \|
h \|_{\mathrm{osc}}$, and so
\begin{align}
    \E[Z_{m,L}^2] &= \E\bigg[\sum_{n,\ell=m}^L \Delta h_n \Delta h_\ell
    \charfun{\tau >n\vee
           \ell}\bigg]
   \le \| h \|_{\mathrm{osc}}^2 \sum_{n,\ell=m}^L \P( \tau > n\vee
   \ell).
   \label{eq:holderisable}
\end{align}
The latter sum may be upper bounded by
\begin{align*}
C \sum_{n,\ell=m}^L \lambda^{n\vee \ell}
   \le C \sum_{i=0}^\infty (2i+1)\lambda^{i+m}
   = C \lambda^{m} \bigg(2 \frac{\lambda}{(1-\lambda)^2} +
   \frac{1}{(1-\lambda)}\bigg).
\end{align*}
Simple calculation yields the desired bound.
\end{proof} 

\begin{lemma}
    \label{lem:marginal-corollary} 
Suppose $\mathbf{P}$ is a coupled kernel corresponding to a
$\pi$-ergodic Markov kernel $P$. Let $\tau_{x,\tilde{x}}$
stand for the coupling time of the Markov chain
$(X_n,\tilde{X}_n)_{n\ge 0}$ with transition
probability $\mathbf{P}$ and with $(X_0,\tilde{X}_0)\equiv
(x,\tilde{x})$.
Then,
\[
    \| P(x,\uarg) - \pi \|_\mathrm{tv}
    \le 2 \sup_{\tilde{x}\in\X} \P(\tau_{x,\tilde{x}} > n).
\]
\end{lemma}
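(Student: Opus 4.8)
The plan is to establish the bound for the $n$-step kernel $P^n$ (the quantity actually needed when this lemma is combined with Theorem~\ref{thm:ccbpf-coupling-time} in the proof of Theorem~\ref{thm:main-cbpf}) via the classical coupling inequality, run with the reference copy started from stationarity. Concretely, I would build a bivariate chain $(X_m,\tilde{X}_m)_{m\ge 0}$ with transition $\mathbf{P}$ and initial law $\delta_x\otimes\pi$; that is, $X_0=x$ is deterministic while $\tilde{X}_0\sim\pi$ is drawn independently. The whole argument then rests on two facts about this chain: that $\pi$ is left invariant by the $\tilde{X}$-marginal dynamics, and that differing at a fixed time forces the coupling time to exceed that time.

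For the key steps I would proceed as follows. Since $\mathbf{P}$ is a coupled kernel corresponding to $P$, each marginal evolves under $P$, so $X_n\sim P^n(x,\uarg)$, while invariance of $\pi$ gives $\tilde{X}_n\sim\pi$ for every $n$. Applying the coupling inequality at time $n$ yields
\[
    \|P^n(x,\uarg)-\pi\|_\tv \le 2\,\P(X_n\neq\tilde{X}_n),
\]
where the factor $2$ is the only nonstandard point: it appears because the paper normalises $\|\mu-\nu\|_\tv=\sup_{|f|\le1}|\mu(f)-\nu(f)|=2\sup_A|\mu(A)-\nu(A)|$, whereas the usual coupling bound controls the unnormalised quantity $\sup_A|\mu(A)-\nu(A)|$. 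Next, the definition $\tau=\inf\{m\ge0\given X_k=\tilde{X}_k \text{ for all } k\ge m\}$ gives $\{X_n\neq\tilde{X}_n\}\subseteq\{\tau>n\}$, hence $\P(X_n\neq\tilde{X}_n)\le\P(\tau>n)$. Finally, the chain started from $\delta_x\otimes\pi$ is the $\pi$-mixture of the chains started from the deterministic pairs $(x,\tilde{x})$, so
\[
    \P(\tau>n)=\int\P(\tau_{x,\tilde{x}}>n)\,\pi(\ud\tilde{x})\le\sup_{\tilde{x}\in\X}\P(\tau_{x,\tilde{x}}>n),
\]
and chaining the three displays gives the claim.

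I do not expect a substantial obstacle here; the remaining points are purely bookkeeping. The first is recognising that the spurious-looking factor $2$ is an artifact of the chosen total-variation normalisation, not of the coupling construction. The second is justifying that one may initialise the reference copy at $\pi$ and then pass from this random start to the supremum over deterministic $\tilde{x}$, which follows at once from the mixture representation together with measurability of $\tilde{x}\mapsto\P(\tau_{x,\tilde{x}}>n)$. It is worth noting that the inequality uses only invariance of $\pi$ (via the $\tilde{X}$-marginal), so it holds for every $n$; the convergence part of the ergodicity hypothesis is not needed for the bound itself and serves only to make $\pi$ the relevant limit. This is exactly the form required to combine with the tail estimate $\P(\tau\ge n)\le\alpha^T\beta^{-n}$ of Theorem~\ref{thm:ccbpf-coupling-time}, since $\P(\tau>n)=\P(\tau\ge n+1)$.
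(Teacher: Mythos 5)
Your proposal is correct and follows essentially the same route as the paper's proof: start the reference copy at $\pi$, apply the standard coupling inequality (with the factor $2$ coming from the paper's normalisation $\|\mu-\nu\|_\tv = \sup_{|f|\le 1}|\mu(f)-\nu(f)|$), and then decompose $\P(\tau_x > n)$ as the $\pi$-mixture $\int \P(\tau_{x,\tilde{x}} > n)\,\pi(\ud\tilde{x}) \le \sup_{\tilde{x}}\P(\tau_{x,\tilde{x}} > n)$. Your additional observations — that the left-hand side should be read as the $n$-step kernel $P^n(x,\uarg)$, and that only invariance of $\pi$ (not full ergodicity) is used for the bound — are accurate refinements of what the paper states tersely.
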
 
\begin{proof} 
Let $\tau_x$ stand for the coupling time of
$(X_n,\tilde{X}_n)_{n\ge 0}$ with $X_0\equiv x$ and $\tilde{X}_0\sim
\pi$. By the standard coupling inequality,
$\| P(x,\uarg) - \pi\|_\mathrm{tv} \le 2 \P(\tau_x > n)$, and
\[
    \P(\tau_x > n)
    = \int \P(\tau_x > n\mid \tilde{X}_0=\tilde{x}) \pi(\ud \tilde{x})
    = \int \P(\tau_{x,\tilde{x}} > n) \pi(\ud \tilde{x}).\qedhere
\]
\end{proof} 

\begin{lemma}
\label{lem:var_bound} 
Let $\tau_{x,\tilde{x}}$ be as in Lemma~\ref{lem:marginal-corollary},
and assume that there exist $C\in[1,\infty)$ and $\lambda\in(0,1)$ such that
$\sup_{\tilde{x},x\in\X} \P(\tau_{x,\tilde{x}} > n) \le C \lambda^n$.
Then,
\[
    |\var(Z_b) - \var_\pi(h(X)) |
    \le
     \frac{16C}{(1-\lambda)^2}
    \lambda^{(b+1)/2}
    \| \bar{h} \|_\infty^2.
\]
\end{lemma}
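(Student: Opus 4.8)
The plan is to assemble the bound directly from Lemmas~\ref{lem:variance-convergence}, \ref{lem:cauchy-terms} and~\ref{lem:marginal-corollary}, feeding the geometric coupling-tail hypothesis into each. The starting point is Lemma~\ref{lem:variance-convergence}, which decomposes the quantity into three pieces,
\[
    |\var(Z_b) - \var_\pi(h(X))|
    \le \|\bar{h}\|_\infty^2 \|\nu P^{b+1} - \pi\|_\tv
    + 2\|\bar{h}\|_\infty \big(\E Z_{b+1,\infty}^2\big)^{1/2}
    + \E Z_{b+1,\infty}^2,
\]
so it suffices to control the total-variation factor in the first term and the second moment $\E Z_{b+1,\infty}^2$ governing the other two.

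First I would record that the coupling time $\tau$ of the chain started from $\lambda$ inherits the uniform tail: conditioning on the initial pair, $\P(\tau>n)=\int \P(\tau_{x,\tilde{x}}>n)\,\lambda(\ud x,\ud\tilde{x})\le C\lambda^n$. For the total-variation term, writing $\nu P^{b+1}-\pi=\int\big(P^{b+1}(x,\uarg)-\pi\big)\nu(\ud x)$ and using convexity of $\|\uarg\|_\tv$ together with Lemma~\ref{lem:marginal-corollary} (with $n=b+1$) gives $\|\nu P^{b+1}-\pi\|_\tv\le \int \|P^{b+1}(x,\uarg)-\pi\|_\tv\,\nu(\ud x)\le 2\sup_{x,\tilde{x}}\P(\tau_{x,\tilde{x}}>b+1)\le 2C\lambda^{b+1}$.

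For the second moment, I would apply Lemma~\ref{lem:cauchy-terms} with $m=b+1$, which bounds $\E Z_{b+1,L}^2$ uniformly in $L$; since $\tau<\infty$ a.s.\ the sequence $Z_{b+1,L}$ is eventually equal to $Z_{b+1,\infty}$, so Fatou's lemma yields $\E Z_{b+1,\infty}^2\le 2C\lambda^{b+1}(1-\lambda)^{-2}\|h\|_{\mathrm{osc}}^2$. As the additive constant in $h$ cancels in each increment $h(X_n)-h(\tilde{X}_n)$, the oscillation seminorm is unchanged, and the elementary inequality $\|h\|_{\mathrm{osc}}\le 2\|\bar h\|_\infty$ turns this into $\E Z_{b+1,\infty}^2\le 8C\lambda^{b+1}(1-\lambda)^{-2}\|\bar h\|_\infty^2$.

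Substituting these two estimates, the three contributions decay as $\lambda^{b+1}$, $\lambda^{(b+1)/2}$ and $\lambda^{b+1}$ respectively, so the cross term dominates the rate. The remaining step is pure bookkeeping: using $C\ge 1$, $\lambda<1$ (hence $\lambda^{b+1}\le\lambda^{(b+1)/2}$ and $\sqrt C\le C$) and $(1-\lambda)^{-1}\le(1-\lambda)^{-2}$, each term is dominated by a multiple of $C(1-\lambda)^{-2}\lambda^{(b+1)/2}\|\bar h\|_\infty^2$, with multiples $2$, $4\sqrt 2$ and $8$ that sum to $10+4\sqrt 2<16$. I do not expect a genuine obstacle here, since all the analytic content resides in the earlier lemmas; the only points demanding care are the index shift to $b+1$, the passage $L\to\infty$ via Fatou, and arranging the constants so they fold cleanly into the factor $16$.
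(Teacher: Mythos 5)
Your proof is correct and follows essentially the same route as the paper's: the paper likewise combines Lemma~\ref{lem:variance-convergence}, Lemma~\ref{lem:marginal-corollary} (applied to the first term via convexity of total variation) and Lemma~\ref{lem:cauchy-terms} with $m=b+1$, then concludes via $\|h\|_{\mathrm{osc}} \le 2\|\bar h\|_\infty$ and the same constant bookkeeping yielding $2+4\sqrt{2}+8 < 16$. You merely make explicit some steps the paper leaves implicit (the averaging over $\nu$, the Fatou passage $L\to\infty$, and the transfer of the tail bound to the initial distribution $\lambda$), which is fine.
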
 
\begin{proof} 
Using Lemma~\ref{lem:variance-convergence} together with
Lemma~\ref{lem:marginal-corollary} and Lemma~\ref{lem:cauchy-terms}
yields
\begin{align*}
    |\var(Z_b) - \var_\pi(h(X)) |
    &\le
    2 C {\| \bar{h}\|}^2_\infty \lambda^{b+1} + 2 \|\bar{h}\|_\infty
    \bigg(\frac{2 C\lambda^{b+1}}{(1-\lambda)^2}
    \| h \|_{\mathrm{osc}}^2\bigg)^{\frac{1}{2}} \\
    &\phantom{\le}
    + \frac{2 C\lambda^{b+1}}{(1-\lambda)^2}
    \| h \|_{\mathrm{osc}}^2.
\end{align*}
The claim follows easily, because
$\| h \|_\mathrm{osc} \le
2 \| \bar{h}\|_\infty$.
\end{proof} 


\section{Supplementary simulation results}
\label{sec:supp_sims} 

Figure~\ref{fig:mean-coupling-times-ind} corresponds to
Figure~\ref{fig:mean-coupling-times} but where common random numbers are not
used in Line~\ref{line:uncoupled-mutation} of Algorithm~\ref{alg:ccxpf}.
We observe that AT and AS show much worse performance, whereas BS is hardly
affected.

\begin{figure}
    \centering
    \begin{subfigure}[b]{0.49\linewidth}
    \centering
    \includegraphics[width=\linewidth]{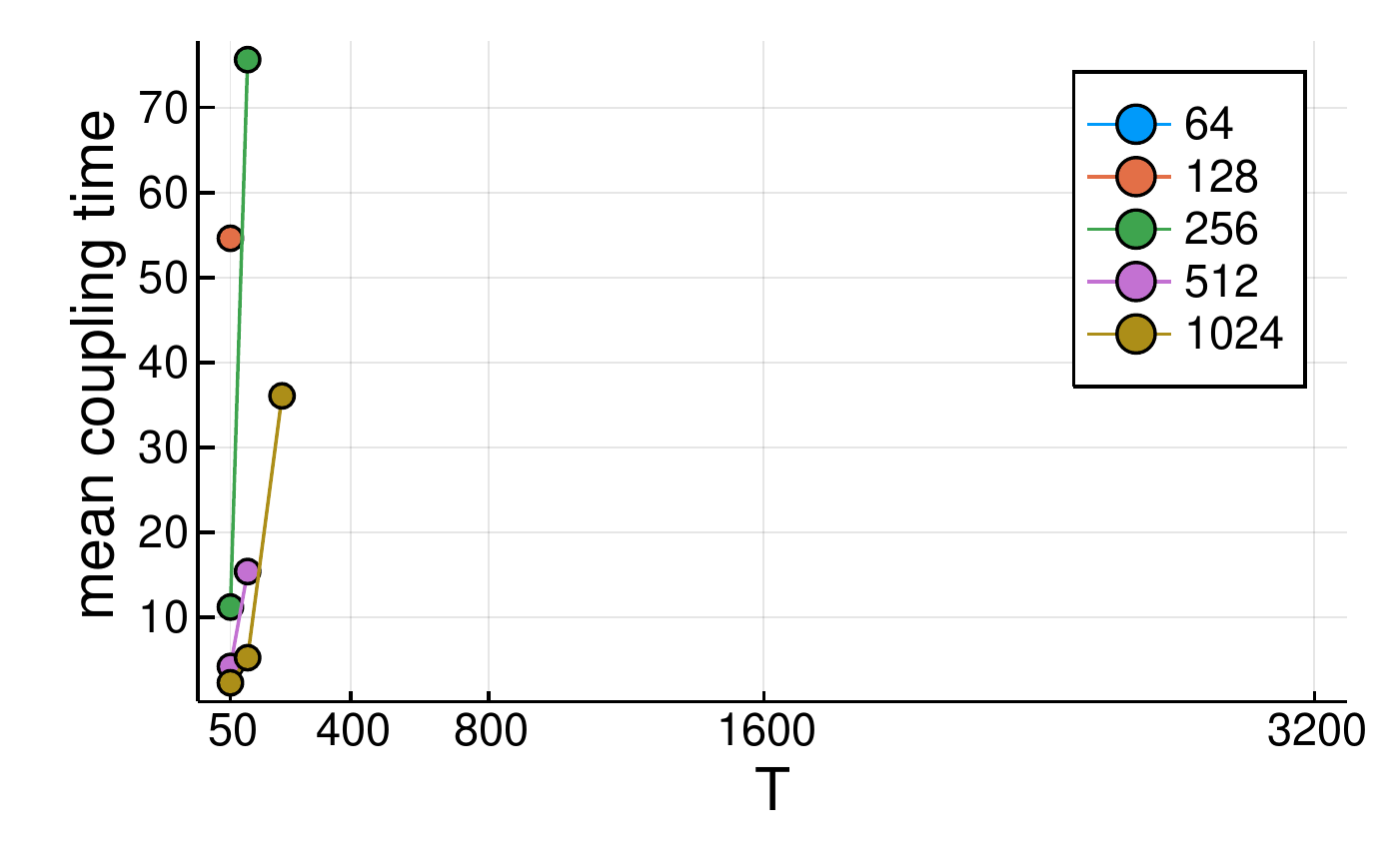}
    \caption{Linear Gaussian: AT}
    \end{subfigure}
    \begin{subfigure}[b]{0.49\linewidth}
    \centering
    \includegraphics[width=\linewidth]{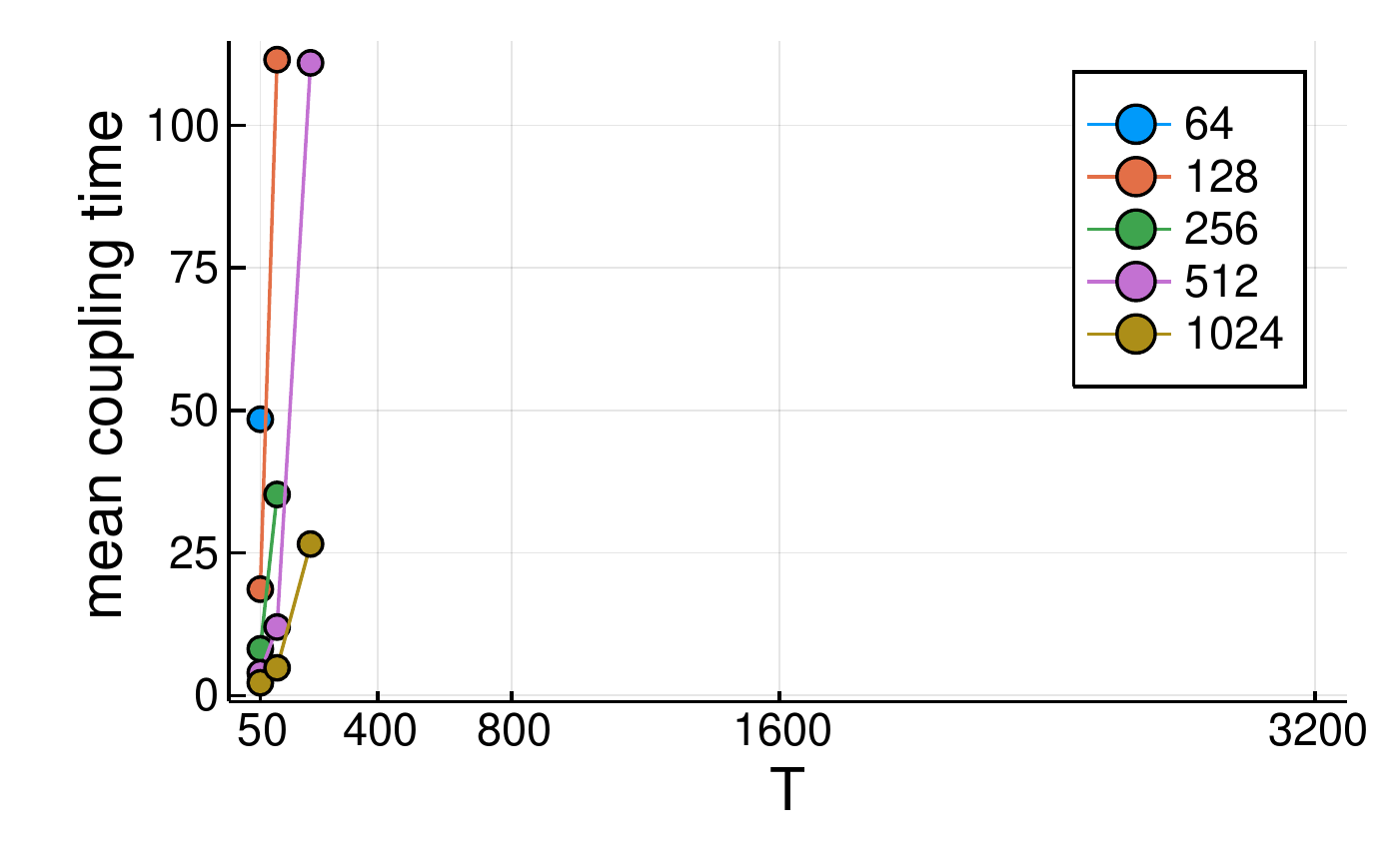}
    \caption{Linear Gaussian: AS}
    \end{subfigure}
    \begin{subfigure}[b]{0.49\linewidth}
    \centering
    \includegraphics[width=\linewidth]{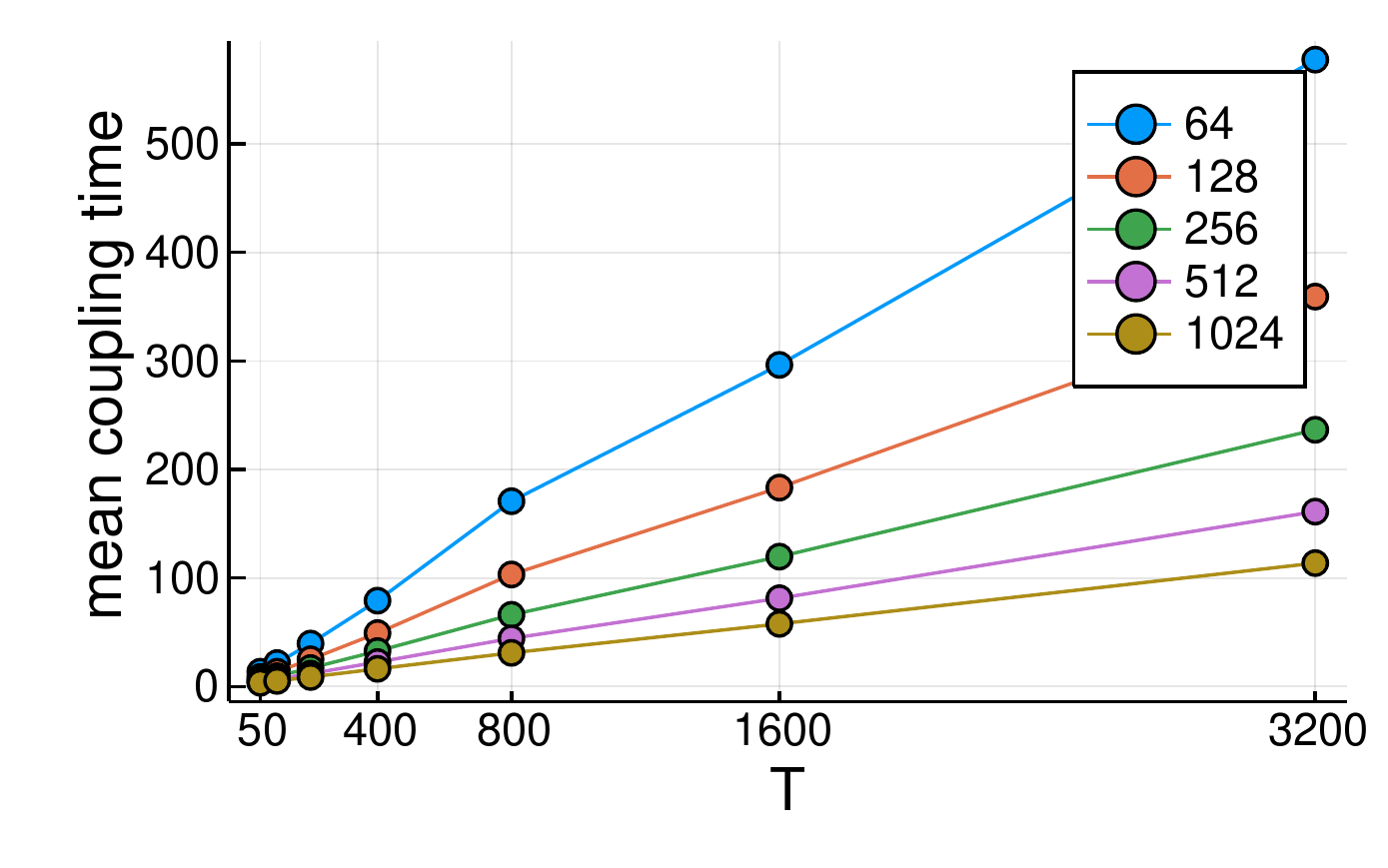}
    \caption{Linear Gaussian: BS}
    \end{subfigure}
    \begin{subfigure}[b]{0.49\linewidth}
    \centering
    \includegraphics[width=\linewidth]{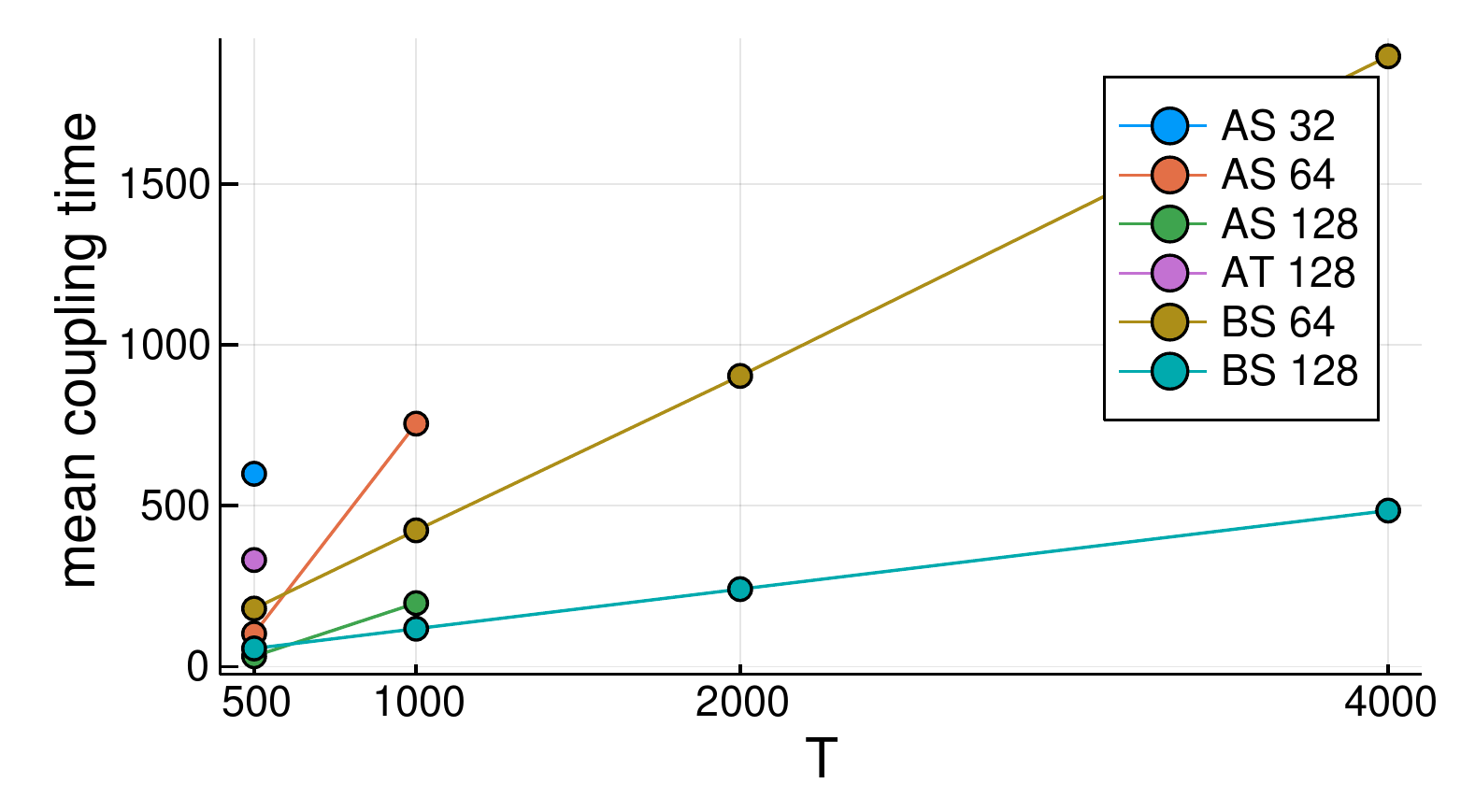}
    \caption{Simple homogeneous model}
    \end{subfigure}
    \caption{Mean coupling times associated with ancestor tracing (AT), ancestor
      sampling (AS) and backward sampling (BS). For (d), the lines are coloured
      according to the type of algorithm and the number of particles $N$.}
    \label{fig:mean-coupling-times-ind}
\end{figure}


\end{document}